\newcommand{\er}{\exists\mathbb{R}}
\newcommand{\np}{\mathsf{NP}}
\newcommand{\pspace}{\mathsf{PSPACE}}
\newcommand{\R}{\mathbb{R}}
\newcommand{\ER}{\exists\R}
\newcommand{\II}{\mathcal I}
\newcommand{\etr}{{\sc{Etr}}}
\newcommand{\etrinv}{{\sc{Etr-Inv}}}
\newcommand{\rangeetrinv}{{\sc{Range-Etr-Inv}}}
\newcommand{\mccf}{{\sc{Minimum Convex Cover}}}
\newcommand{\mrccf}{{\sc{Minimum Restricted Convex Cover}}}
\newcommand{\mcc}{{\sc{MCC}}}
\newcommand{\mrcc}{{\sc{MRCC}}}
\newcommand{\rpcf}{{\sc{Rectilinear Picture Compression}}}
\newcommand{\mtcf}{{\sc{Minimum Triangle Cover}}}
\newcommand{\agp}{{\sc{Art Gallery}}}
\newcommand{\poly}{\mathcal P}
\newcommand{\pocket}{P}
\newcommand{\piece}{Q}
\newcommand{\cover}{\mathcal{\piece}}
\newcommand{\proj}[1]{\widehat{#1}}
\newcommand{\tilt}{\eta}
\newcommand{\triangleC}{\Delta}
\newcommand{\barr}{\Pi}
\newcommand{\pp}[2]{\mathbf{p}(#1,#2)} 
\newcommand{\ppCover}[3]{\mathbf{p}_{#3}(#1,#2)}
\newcommand{\val}[2]{\mathbf{v}(#1,#2)} 
\newcommand{\valCover}[3]{\mathbf{v}_{#3}(#1,#2)}
\newcommand{\xx}[2]{\mathbf{x}(#1,#2)} 
\newcommand{\Phiineq}{\Phi_{\text{ineq}}}
\newcommand{\Xineq}{X_{\text{ineq}}}
\newcommand{\eps}{\varepsilon}
\newcommand{\mydef}{:=}
\newtheorem{theorem}{Theorem}
\newtheorem{lemma}[theorem]{Lemma}
\newtheorem{observation}[theorem]{Observation}
\newtheorem{definition}[theorem]{Definition}
\title{Covering Polygons is Even Harder}
\author{%
	Mikkel Abrahamsen%
	\footnote{%
		Basic Algorithms Research Copenhagen (BARC), University of Copenhagen.
		BARC is supported by the VILLUM Foundation grant 16582.%
	}%
}
\date{June 4, 2021}
\begin{document}
\maketitle

\begin{abstract}
In the \mccf\ (\mcc) problem, we are given a simple polygon $\poly$ and an integer $k$, and the question is if there exist $k$ convex polygons whose union is $\poly$.
It is known that \mcc\ is $\np$-hard [Culberson \& Reckhow: Covering polygons is hard, FOCS 1988/Journal of Algorithms 1994] and in $\ER$ [O'Rourke: The complexity of computing minimum convex covers for polygons, Allerton 1982].
We prove that \mcc\ is $\ER$-hard, and the problem is thus $\ER$-complete.
In other words, the problem is equivalent to deciding whether a system of polynomial equations and inequalities with integer coefficients has a real solution.

If a cover for our constructed polygon exists, then so does a cover consisting entirely of triangles.
As a byproduct, we therefore also establish that it is $\ER$-complete to decide whether $k$ triangles cover a given polygon.

The issue that it was not known if finding a minimum cover is in $\np$ has repeatedly been raised in the literature, and it was mentioned as a ``long-standing open question'' already in 2001 [Eidenbenz \& Widmayer: An approximation algorithm for minimum convex cover with logarithmic performance guarantee, ESA 2001/SIAM Journal on Computing 2003].
We prove that assuming the widespread belief that $\np\neq\ER$, the problem is not in $\np$.

An implication of the result is that many natural approaches to finding small covers are bound to give suboptimal solutions in some cases, since irrational coordinates of arbitrarily high algebraic degree can be needed for the corners of the pieces in an optimal solution.
\end{abstract}

\epigraph{\textit{
There is no point in studying a mathematical model unless it helps to solve certain practical problems.}}{--- Theodosios Pavlidis~\cite{PAVLIDIS19725}}

\epigraph{\textit{It is certainly counterproductive to dismiss complicated theoretical methods solely on the grounds of impracticality.}}{--- Bernard Chazelle~\cite{chazelle1985approximation}}

\thispagestyle{empty}

\newpage

\pagenumbering{arabic} 

\section{Introduction}

Polygons are among the geometric structures that are most frequently used to model physical objects, as they are suitable for representing a wide variety of shapes and figures in computer graphics and vision, pattern recognition, robotics, computer-aided design and manufacturing, and other computational fields.
Polygons may have very complicated shapes that make it difficult to find algorithms to process them directly.
A natural first step in designing algorithms is to decompose the given polygon $\poly$ into more basic \emph{pieces} of a restricted type that permits very efficient processing.
Here, the union of the pieces must be exactly the given polygon $\poly$.
When such a decomposition has been obtained, the partial solutions to the individual pieces can be combined to obtain a solution for the complete polygon $\poly$.
By ``more basic pieces,'' we mean pieces that belong to a more restricted class of polygons.
Ideally, we find a decomposition of the polygon $\poly$ into the minimum number of pieces.

Many different decomposition problems arise this way, depending on restrictions on the polygon $\poly$, the type of basic pieces, and whether the decomposition is a cover or a partition.
In covering problems, we just require the union of the pieces to equal $\poly$, whereas in partition problems, we have the further requirement that the pieces be interior-disjoint.
Another important distinction is whether or not \emph{Steiner points} are allowed.
A Steiner point is a corner of a piece in a decomposition which is not also a corner of $\poly$.
Furthermore, it often makes a big difference whether or not $\poly$ is allowed to have holes.

Because of the many variants of decomposition problems, their applicability within many practical domains, 
and the very appealing fundamental nature and the creativity and technical skills required to solve them, numerous papers have been written about these problems.
The vast literature is documented in several highly-cited books and survey papers that give an overview of the state-of-the-art at the time of publication~\cite{chazelle1985approximation,keil1985minimum,shermer1992recent,chazelle1994decomposition,keil1999polygon,o1987art,o2004polygons}. 

One of the first decomposition problems to be studied was that of covering a polygon with convex polygons.
Pavlidis studied this problem from a practical angle in relation to shape analysis and pattern recognition in a series of papers and a book, the first paper from 1968~\cite{PAVLIDIS1968165,PAVLIDIS1972421,PAVLIDIS19725,feng1975decomposition,pavlidis1977structural,PAVLIDIS1978243}; see Figure~\ref{fig:pavlidis} (left).
The related $\np$-complete problem of covering an orthogonal polygon with a given number of rectangles was mentioned by Garey and Johnson's famous book~\cite[p.~232]{garey1979computers}, who denoted the problem \rpcf, since a collection of pixels can be compactly represented as their minimum rectangle cover.
For more recent practical work involving versions of convex covering problems, see the surveys~\cite{rodrigues2018part,suk2021decomposition} and the paper~\cite{asafi2013weak}.

\begin{figure}
\centering
\includegraphics[page=11]{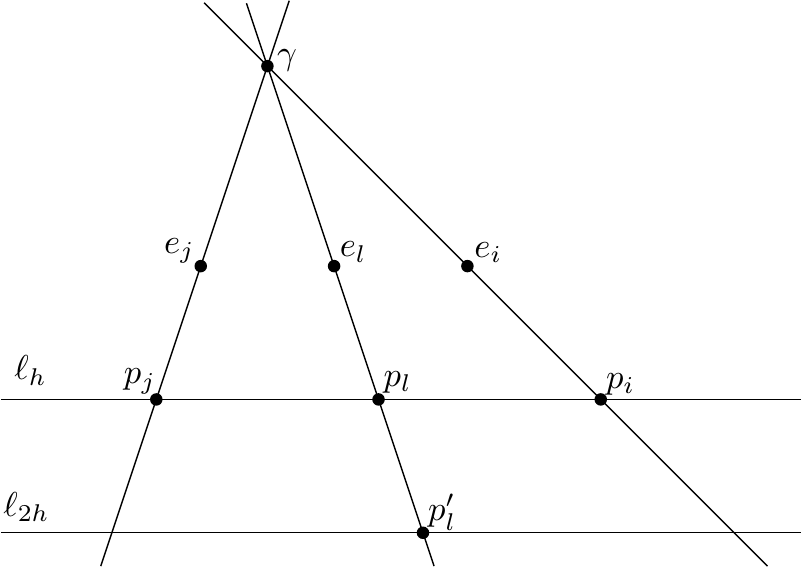}
\caption{Left: Two polygons (an E and a Chinese character for `pig') with convex covers drawn on top, from the paper~\cite{PAVLIDIS1978243}.
Right: A polygon where the minimum cover has a piece not bounded by edge extensions and needing Steiner points, from the paper~\cite{o1982complexity}.
}
\label{fig:pavlidis}
\end{figure}

O'Rourke and Supowit~\cite{o1983some} proved that for polygons with holes, it is $\np$-hard to find minimum covers using convex, star-shaped, and spiral polygons as pieces.
This was established with and without Steiner points allowed.
When Steiner points are not allowed, the covering problems are clearly in $\np$ and are thus $\np$-complete, whereas the ones with Steiner points allowed are not immediately seen to be decidable at all.
In the paper~\cite{o1982complexity}, O'Rourke proved the convex covering problem with Steiner points to be decidable by showing that any given instance can be expressed as a Tarski formula.
In modern terms, we can say that he proved $\ER$-membership.
He gave the example shown in Figure~\ref{fig:pavlidis} (right) of a polygon $\poly$ where Steiner points are needed and edge extensions of $\poly$ are not sufficient to form the pieces of a minimum cover.
The figure is now the logo of \href{https://www.computational-geometry.org/}{The Society of Computational Geometry} and often also of \href{https://cse.buffalo.edu/socg21/}{Symposium on Computational Geometry}.

Since the $\np$-hardness reductions in~\cite{o1983some} relied on polygons with holes, it was still not known if the covering problems for polygons without holes could be solved efficiently.
Chazelle and Dobkin~\cite{chazelle1985optimal} showed already in 1979 that a polygon without holes can be \emph{partitioned} into a minimum number of convex pieces in polynomial time; a problem that had also been believed to be $\np$-hard, so this might have given hope for the covering problems as well.
However, it was soon proved that some covering problems are $\np$-hard even for polygons without holes.
The first one was apparently the problem of covering a polygon with a minimum number of star-shaped polygons, which is usually known as the \agp\ problem.
The first proof is often attributed to Alok Aggarwal's PhD thesis~\cite{aggarwal1984art} (see for instance~\cite{culberson1994covering}), which is unfortunately practically unavailable.
The first published proof appears to be in a paper by Lee and Lin~\cite{DBLP:journals/tit/LeeL86}.
Then followed the proof by Culberson and Reckhow~\cite{culberson1994covering} that it is likewise $\np$-hard to cover polygons without holes with a minimum number of convex pieces.
The authors added the comment: \emph{``We are unable to show that general convex covering is in $\np$''}.
The issue that the problem is not known to be in $\np$ has been raised in many other papers and books~\cite{o1982complexity,JOHNSON1982182,eidenbenz2003approximation,shermer1992recent,christ2011beyond,o1987art,keil1999polygon,keil1985minimum}, and was mentioned as a ``long-standing open question'' already in 2001~\cite{eidenbenz2003approximation}.
Christ~\cite{christ2011beyond} proved that deciding if a polygon can be covered by a minimum number of \emph{triangles} is also $\np$-hard.

In this paper, we prove that it is $\ER$-complete to decide if a polygon can be covered by a given number of convex pieces.
The problem is thus not in $\np$, assuming the widespread belief that $\np\neq\ER$.
Our reduction uses techniques that were developed for proving that \agp\ and some versions of geometric packing are $\ER$-complete~\cite{abrahamsen2018art,abrahamsen2020framework}.
As a biproduct, we show that it is even $\ER$-complete to find a minimum triangle cover.
The hardness holds also when the corners of $\poly$ are in general position, i.e., no three are collinear.

\subsection{Existential theory of the reals}

In order to define the complexity class $\ER$, we first define the problem \etr\ in the style of Garey and Johnson~\cite{garey1979computers}.

\noindent
\textbf{Instance}:
A well-formed formula $\Phi(x_1,x_2,\ldots,x_n)$ using symbols from the set
\[\left\{x_1,x_2,\ldots, x_n, \land,\lor,\lnot, 0 ,1 ,+ ,- ,\cdot,\allowbreak\ (\ ,\ )\ ,=,<,\leq\right\}.\]

\noindent
\textbf{Question}:
Is the expression
\[\exists x_1, x_2, \ldots, x_n\in\mathbb R \colon \Phi(x_1,x_2,\ldots,x_n)\]
true? \\

The complexity class $\ER$ consists of all problems that are many-one reducible to \etr\ in polynomial time, and a problem is $\ER$-hard if there is a reduction in the other direction.
It is currently known that \[ \np \subseteq \ER \subseteq \pspace.\]
It is not hard see that the problem \etr\ is $\np$-hard, yielding the first inclusion.
The containment $\ER \subseteq \pspace$ is highly non-trivial, and it was first established by Canny~\cite{canny1988algebraic}.

As examples of $\ER$-complete problems, we mention problems related to realization of order-types~\cite{richter1995realization, mnev1988universality, shor1991stretchability},
graph drawing~\cite{bienstock1991some, AreasKleist, AnnaPreparation},
recognition of geometric graphs~\cite{cardinal2017intersection, cardinal2017recognition,kang2011sphere,mcdiarmid2013integer},
straightening of curves~\cite{erickson2019optimal}, guarding polygons~\cite{abrahamsen2018art},  Nash-equilibria~\cite{berthelsen2019computational,garg2015etr}, linkages~\cite{abel, schaefer2013realizability, Schaefer-ETR}, matrix-decompositions~\cite{NestedPolytopesER,shitov2016universality,Shitov16a}, polytope theory~\cite{richter1995realization}, and geometric packing~\cite{abrahamsen2020framework}.
See also the surveys~\cite{CardinalSurvey, matousek2014intersection, Schaefer2010}.

\subsection{Results}

Before stating our result, let us define the covering problem in more detail.
We define a \emph{polygon} $\poly$ to be a compact region in the plane such that the boundary $\partial\poly$ is a closed, simple curve consisting of finitely many line segments.
We now define the problem \mccf\ (\mcc\ for brevity) as follows.

\noindent
\textbf{Instance}:
A polygon $\poly$ represented as an array of the coordinates of the corners in cyclic order, and a positive integer $k$.
The corners have rational coordinates.

\noindent
\textbf{Question}:
Do there exist $k$ convex polygons $\piece_1,\ldots,\piece_k$ such that $\bigcup_{i=1}^k \piece_i=\poly$? \\

We get the problem \mtcf\ by requiring that each piece $\piece_i$ is a triangle.
We can now state the main result of the paper.

\begin{restatable}{theorem}{mainThm}\label{thm:mainThm}
\mccf\ and \mtcf\ are $\ER$-complete.
\end{restatable}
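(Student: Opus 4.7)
Containment of \mcc\ in $\ER$ is essentially O'Rourke's theorem~\cite{o1982complexity}: existentially quantify over the $O(nk)$ coordinates of $k$ candidate pieces (each with at most $n$ corners), and express by polynomial inequalities that each piece is convex, that each piece lies inside $\poly$, and that the pieces jointly cover $\poly$. The coverage condition can be made $\exists$-expressible by additionally guessing one witness point per face of the arrangement of the pieces' boundaries and requiring each witness to lie either in some piece or outside $\poly$. For \mtcf, simply restrict the number of corners per piece to three; the formula is otherwise unchanged.

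\textbf{Hardness.} Using the machinery developed for \agp~\cite{abrahamsen2018art} and for geometric packing~\cite{abrahamsen2020framework}, I would reduce from a suitable variant \etrinv\ of \etr, whose instances are conjunctions of constraints of the forms $x+y=z$ and $x\cdot y=1$ with each variable confined to a short interval around $1$. Given such a system $\Phi$ with $n$ variables and $m$ constraints, the reduction produces a polygon $\poly$ together with an integer $k$ so that $\poly$ admits a cover by $k$ convex pieces iff $\Phi$ is satisfiable.

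\textbf{The main obstacle: gadgets.} The bulk of the work is the design of three families of gadgets embedded inside $\poly$: (i) a \emph{variable gadget} whose cheapest local cover uses a fixed number of pieces and contains one ``critical'' piece whose position encodes a real parameter $x$; (ii) a \emph{wire gadget} that transports this parameter to a distant location, forcing two critical pieces in different gadgets to encode the same number; and (iii) \emph{arithmetic gadgets} enforcing $x+y=z$ and $x\cdot y=1$ at junctions of three wires. The principal challenge is tight local control of piece counts: each gadget must admit \emph{exactly} its intended number of pieces at the optimum, no matter how clever the adversarial cover is, and this rigidity must survive all possible placements of Steiner points, non-edge-extension cuts, and pieces with irrational coordinates (recall the example in Figure~\ref{fig:pavlidis}, right). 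Inspired by the \agp\ construction, I expect each variable gadget to take the form of a long narrow pocket whose only efficient convex cover uses a piece whose two extreme sides are tangent to two prescribed apices; the slope of the connecting tangent line encodes $x$, and collinearity/tangency conditions at junctions translate into the desired arithmetic relations.

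\textbf{Triangle cover as a byproduct.} To deduce $\ER$-hardness of \mtcf\ from the same construction, I would design each gadget so that whenever $\poly$ has an optimal convex cover of size $k$, it also has a triangle cover of size $k$. One route is to ensure that every piece in any optimal convex cover is already forced to be supported by at most three sides, so that the piece can be taken to be a triangle outright; another is to show that any optimal convex piece can be shrunk to an inscribed triangle without uncovering anything, since the only points it must actually cover lie in three small regions. Combined with membership of \mtcf\ in $\ER$, this single construction yields $\ER$-completeness of both \mcc\ and \mtcf.
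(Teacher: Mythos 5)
Your membership argument and your overall plan (reduce from an \etrinv-style problem via variable, wire, and arithmetic gadgets) match the paper's strategy in outline, but the proposal stops exactly at the point the paper identifies as the real difficulty, and it is missing the specific ideas that make the reduction go through. The central gap is piece-count control. You correctly flag that ``each gadget must admit exactly its intended number of pieces at the optimum, no matter how clever the adversarial cover is,'' but you offer no mechanism for this, and arguing it gadget-by-gadget directly in \mcc\ is precisely what the paper avoids as unmanageable. The paper's solution is an intermediate problem, \mrcc, in which only a set of \emph{marked corners} $\mathcal C$ and \emph{marked rectangles} must be covered, the budget is exactly $k=|\mathcal C|$, and the construction satisfies an \emph{invisibility property}: no two marked corners see each other. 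This immediately forces a bijection between pieces and marked corners, and each critical segment is then covered by exactly the two (or three) pieces owned by its responsible corners. The rest of the polygon is covered generically by adding spikes (one extra piece per spike) in a separate reduction from \mrcc\ to \mcc. Without something playing this role, your rigidity claim is unsupported.

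Two further points would break your construction as sketched. First, your ``wire gadget'' is supposed to force two critical pieces to encode the \emph{same} number, but the paper's lever/corridor mechanism only yields a one-directional inequality ($\val{s}{c}\leq\val{r}{c'}$ or the reverse, via Lemma~\ref{lemma:copyLemma}); equality is recovered only globally, by arranging each variable in three base pockets joined in a cycle of $\geq$-inequalities, which also fixes the index ordering needed by the addition gadgets. Second, your single ``critical piece whose position encodes $x$'' cannot serve a variable occurring in many constraints: unlike an art-gallery guard, a convex piece cannot reach into several corridors, since points in different corridors do not see each other. The paper handles this by replicating the variable on many parallel variable segments in one base pocket, all partially covered by one base piece, each with its own propagation corner and a long thin propagation piece reaching into one corridor; the value is read off as the extreme (left- or right-most) covered point of a horizontal segment rather than as the slope of a tangent line. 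Finally, for \mtcf\ your idea (shrink optimal pieces to triangles) is not what is needed or what the paper does: one only needs that \emph{some} optimal cover consists of triangles, which the paper gets for free because the synthesized cover from a solution of $\Phi$ uses only maximal triangles at the marked corners plus one triangle per spike.
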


Recall that O'Rourke~\cite{o1982complexity} proved $\ER$-membership already in 1982.
Alternatively, $\ER$-membership can be easily proven using the recent framework by Erickson, van der Hoog, and Miltzow~\cite{erickson2020smoothing}.
This paper is therefore only concerned with proving $\ER$-hardness.

An implication of the result is that many natural approaches to finding small covers are bound to give suboptimal solutions in some cases.
One might attempt to make covers consisting of pieces with corners chosen from some discrete set of points inside the given polygon.
For instance, Eidenbenz and Widmayer~\cite{eidenbenz2003approximation} gave a $O(\log n)$-factor approximation algorithm for finding a minimum convex cover for a polygon $\poly$ by choosing pieces with corners from a set of $O(n^{16})$ points.
These points are obtained by first making the arrangement of lines through all pairs of corners of $\poly$.
We then construct all lines through pairs of intersection points in the first arrangement.
The intersection points of this final arrangement, of which there are $O(n^{16})$, are the candidate set of corners of the pieces from which a cover is constructed.

Eidenbenz and Widmayer showed that for this particular set, there exists a cover consisting of at most $3$ times more pieces than in the unrestricted optimum.
Our result shows that the optimal solution cannot always be found by choosing the corners from such a set of points, even if the process of making new lines and intersection points was repeated any finite number of times.
It follows from our construction that there are polygons where the corners of some pieces in any optimal cover have irrational coordinates with arbitrarily high algebraic degree.
As an example, we can consider an equation such as $x^5-4x+2=0$~\cite{bastida_lyndon_1984}, which has one real solution and that solution cannot be expressed by radicals (that is, the solution cannot be written using integers and basic arithmetic operations including powers and roots).
We can then transform the equation to an instance of \mcc, so that in the optimal cover, some pieces have coordinates that cannot be expressible by radicals.
We state this as a corollary to our construction, to be proven in Section~\ref{sec:putting}.

\begin{restatable}{corollary}{corr}\label{corr:corr}
There exists a polygon $\poly$ such that in any minimum convex cover for $\poly$, a piece has a corner with a coordinate that is not expressible by radicals.
\end{restatable}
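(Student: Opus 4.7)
The plan is to combine the $\ER$-hardness reduction constructed in the main body of the paper with a classical algebraic fact. First, fix a polynomial $p(x) \in \Z[x]$ that is irreducible over $\Q$, has at least one real root, and whose Galois group over $\Q$ is not solvable; the excerpt already suggests $p(x) = x^5 - 4x + 2$, which is irreducible by Eisenstein at $2$ and whose Galois group is $S_5$ by standard arguments (transitive subgroup of $S_5$ containing a transposition from complex conjugation). By the Abel--Ruffini/Galois criterion, no root of $p$ is expressible by radicals.

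Second, encode the existence of a real root of $p$ as an \etr{} instance $\Phi(x) \equiv (p(x) = 0)$, and feed $\Phi$ into the $\ER$-hardness reduction that yields Theorem~\ref{thm:mainThm}. This produces a polygon $\poly$ and a positive integer $k$ such that $\poly$ admits a cover by $k$ convex pieces iff $\Phi$ is satisfiable. Since $p$ has a real root, $\poly$ admits such a cover, and any minimum cover has exactly $k$ pieces.

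Third, and this is the crux, appeal to the \emph{faithfulness} property of the reduction: in every optimal $k$-cover of $\poly$, some coordinate of some corner of some piece equals an affine image $ax^\ast + b$, with $a,b \in \Q$ and $a \neq 0$, of a real solution $x^\ast$ of $\Phi$. This is the standard shape of $\ER$-hardness reductions built from variable and constraint gadgets: the value of the encoded variable is realized as a continuous geometric degree of freedom of a variable gadget and is read off a specific Steiner point in any optimal cover. Granting this property (to be verified in the details of the construction in Section~\ref{sec:putting}), and noting that $x^\ast$ is not expressible by radicals while $a,b \in \Q$, we conclude that $ax^\ast + b$ is not expressible by radicals either, proving the corollary.

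The main obstacle is the third step: one must inspect the variable gadget produced by the reduction and verify that the value of the encoded variable is indeed recoverable as an affine function of a concrete coordinate of any optimal cover, rather than merely being ``implicitly represented.'' This is not an additional technical burden so much as a careful reading of the reduction: the soundness direction of $\ER$-hardness already extracts a solution from a cover, and with a bit of bookkeeping one can pinpoint where the variable's value is literally written in the coordinates, after which the radical-inexpressibility conclusion is immediate.
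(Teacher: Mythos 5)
Your overall route is the same as the paper's: take $x^5-4x+2=0$ (one real root, not expressible by radicals), push it through the \rangeetrinv{} reduction of~\cite{abrahamsen2020framework} and then through the paper's reduction to \mcc, and use the fact that any cover of the resulting polygon specifies a solution of the formula, with the encoded variable appearing as a rational affine image $q_1x+q_2$ of the original. Up to that point you match the paper.

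The gap is in your third step, and it is not merely a verification burden: the ``faithfulness'' property as you state it --- that some \emph{corner} of some piece has a coordinate equal to $ax^\ast+b$ --- is false for this construction and cannot be verified. What the reduction pins down is the point $\pp{s}{c}$, i.e.\ the intersection of an edge $e_1e_2$ of the lever piece with the variable segment $s$; that intersection point has an $x$-coordinate that is an affine image of the solution, but it lies in the interior of the edge, and the corners $e_1,e_2$ themselves are free to slide along the supporting line (and that line is only constrained to pass through the pivot and the intersection point). So no individual corner coordinate need equal $ax^\ast+b$. The paper closes this gap with a closure argument rather than a faithfulness argument: the endpoints of $s$ are rational, so if both $e_1$ and $e_2$ had all coordinates expressible by radicals, then the intersection point of $e_1e_2$ with $s$ --- being obtained from these data by field operations --- would also be expressible by radicals, and hence so would the represented value $y'$, a contradiction. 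Therefore at least one of $e_1,e_2$ has a non-radical coordinate. Replacing your step three with this contrapositive/closure argument makes the proof go through; as written, the property you propose to verify does not hold.
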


\subsection{Structure of the paper and basic techniques}
We prove that \etr\ reduces to \mcc\ using two intermediate problems: \rangeetrinv\ and \mrcc, which we define in Section~\ref{sec:aux}.
Essentially, an instance of \rangeetrinv\ is a conjunction of addition constrains of the form $x+y=z$ and inversion constraints of the form $x\cdot y=1$.
Each variable is restricted to a tiny subinterval of $[\frac 12,2]$, and the goal is to decide if there exist values of the variables that satisfy all the addition and inversion constraints.
This problem was developed recently in order to prove $\ER$-hardness of geometric packing problems~\cite{abrahamsen2020framework}.

The problem \mrcc\ is a technical version of \mcc\ where only some crucial parts of the polygon have to be covered, namely a specific set of \emph{marked corners} and a set of \emph{marked rectangles}.
The introduction of the problem \mrcc\ is crucial in order to keep our reductions manageable.
In Section~\ref{sec:aux}, we define the problem and give a reduction from \mrcc\ to \mcc.
In the proof, we show that everything else than the marked corners and rectangles can be covered in a generic way by adding some spikes to the boundary of the polygon.
We therefore get an instance of \mcc\ which is equivalent to the instance of \mrcc\ that we are reducing from.

In Section~\ref{sec:etrinv-to-mrcc}, we give the reduction from \rangeetrinv\ to \mrcc, which is the major part of the work.
We do reductions in reversed order so that the reader can get a better understanding of the problem \mrcc, and the rationale for introducing it, before seeing the rather long reduction from \rangeetrinv\ to \mrcc.
In Section~\ref{sec:concluding}, we conclude the paper by outlining how triples of collinear corners can be avoided in our construction and mentioning an open problem.

\begin{figure}
\centering
\includegraphics[page=29]{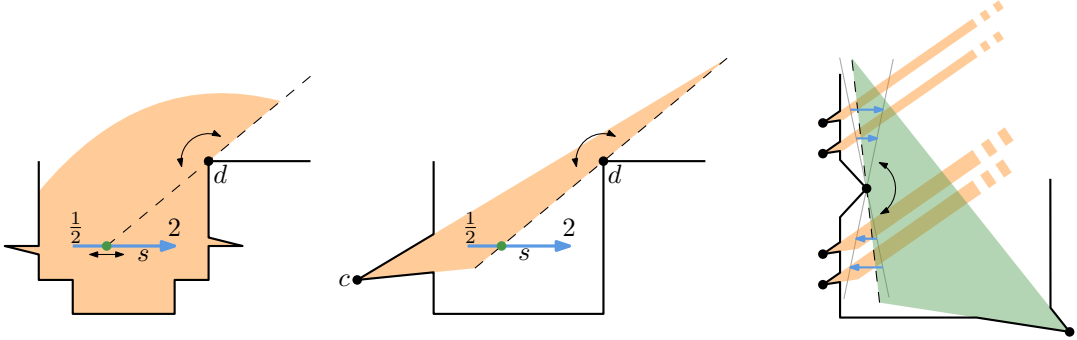}
\caption{Representing variables in the paper~\cite{abrahamsen2018art} and this paper.
Left: A guard segment $s$ with a guard representing the value $1$.
When the guard moves to the right, more visibility is blocked due to the corner $d$.
Middle: A piece covering the marked corner $c$ represents the value $1$ on the variable segment $s$.
By rotating the dashed edge around the corner $d$, more or less of $s$ will be covered.
Right: Several variable segments can be partially covered by a single piece and then ``copied'' into different gadgets by the orange propagation pieces.
}
\label{fig:artGalleryComp}
\end{figure}

%
One of our basic tools in the reduction to \mrcc\ is to represent variables using specific horizontal \emph{variable segments} contained in the constructed polygon; see Figure~\ref{fig:artGalleryComp}.
A segment $s$ representing a variable $x$ corresponds to the interval $[1/2,2]$ of values that $x$ can attain, with the endpoints representing $x=1/2$ and $x=2$.
The values of points in between the endpoints are defined by linear interpolation.
A part of a segment $s$ must be covered by a piece $\piece$ that also covers a particular marked corner of the polygon.
The right-most (or left-most) point in the intersection $s\cap \piece$ defines the value that $\piece$ represents on $s$.
The rest of $s$ must then be covered by another piece (which also covers another marked corner).

In the bottom of the polygon, we have some \emph{base pockets} containing variable segments representing all the variables.
We describe gadgets for addition inequalities, $x+y\geq z$ and $x+y\leq z$, and inversion inequalities, $x\cdot y\geq 1$ and $x\cdot y\leq 1$, and these gadgets are placed far to the right and above the base pockets.
The gadgets also contain variable segments.
We create some corridors that connect the gadgets and the base pockets, and they ensure that if a cover exists, then appropriate inequalities hold between the values represented in the base pockets and those in the gadgets.
We are therefore able to conclude that if a cover exists, then so does a solution to the instance of \rangeetrinv\ that we are reducing from.

Our reduction shares some resemblance with that from~\cite{abrahamsen2018art}, and many of the geometric tools underlying our construction are similar to the ones used in that paper.
In~\cite{abrahamsen2018art}, variables are also represented by segments in the polygon, but the actual value is defined by the position of a guard standing on the segment (as shown in Figure~\ref{fig:artGalleryComp}), whereas in our case, it is the left- or right-most point covered by a convex piece.
The geometric principles underlying the addition and inversion gadgets are the same in the two papers, but again the actual realizations of the gadgets are very different.

A key insight is that one convex piece in a base pocket can partially cover any number of variable segments, and thus represent numerous copies of the same variable, as sketched in Figure~\ref{fig:artGalleryComp} (right).
Each of these segments will have its own \emph{propagation corner}, which is a marked corner.
The propagation corner can then be covered by a \emph{propagation piece}, which must cover the remaining part of the segment.
Each propagation piece is very long and thin and sticks into a corridor far away.
This is needed so that a variable that appears in many addition and inversion constraints can be copied into all the corresponding gadgets.
In contrast to this, the reduction to \agp\ in~\cite{abrahamsen2018art} relied on the fact that a single guard can \emph{look} into several different corridors.
That technique is not possible to realize with convex pieces, because points in different corridors cannot see each other and hence no convex piece covers points in more than one corridor.


\section{Auxilliary problems: \rangeetrinv\ and \mrcc}\label{sec:aux}

In this section we introduce the two problems \rangeetrinv\ and \mrcc\ that will be used as intermediate problems in our reduction from \etr\ to \mcc.

\subsection{\rangeetrinv}

\begin{definition}
An \emph{\etrinv\ formula}~$\Phi=\Phi(x_1,\ldots,x_n)$ is a conjunction
\[
\left(\bigwedge_{i=1}^n 1/2\leq x_i\leq 2\right) \land \left(\bigwedge_{i=1}^m C_i\right),
\]
where $m\geq 0$ and each $C_i$ is of one of the forms
\begin{align*}
x+y=z,\quad x\cdot y=1 
\end{align*}
for $x,y,z \in \{x_1, \ldots, x_n\}$. 
\end{definition}

We can now define the problem \rangeetrinv\ as follows.

\noindent
\textbf{Instance:}
A tuple $\II=\left[\Phi,\delta,(I(x_1),\ldots,I(x_n))\right]$, where $\Phi$ is an \etrinv\ formula, $\delta>0$ is a (small) number, and, for each variable $x\in\{x_1,\ldots,x_n\}$, $I(x)$ is an interval $I(x)\subseteq [1/2,2]$ such that $|I(x)|\leq \delta$.
Define $V(\Phi) \mydef \{\mathbf x\in\R^n : \Phi(\mathbf x)\}$.
We are promised that $V(\Phi)\subset I(x_1)\times\cdots\times I(x_n)$.

\noindent
\textbf{Question:}
Is $V(\Phi)\neq\emptyset$? \\

The following theorem establishes that it suffices to make a reduction from \rangeetrinv\ in order to prove that \mcc\ is $\ER$-hard.

\begin{theorem}[\cite{abrahamsen2020framework}]
\rangeetrinv\ is $\ER$-complete, even when $\delta=O(n^{-c})$ for any constant $c>0$.
\end{theorem}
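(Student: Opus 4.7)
The plan is to establish $\ER$-hardness of \rangeetrinv\ in three stages that gradually strengthen a reduction from an arbitrary \etr\ instance, finishing with the narrow interval promise imposed on top of the range restriction $[1/2,2]$.

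First, I would show that the pure equational system ETR-INV, with constraints $x+y=z$ and $x\cdot y=1$ and no range or interval conditions, is already $\ER$-hard. Starting from an \etr\ instance $\exists\mathbf x\in\R^N:\Psi(\mathbf x)$, every atomic polynomial relation can be expressed as a chain of addition and multiplication constraints on auxiliary variables; multiplication in turn reduces to addition and inversion after introducing a variable $e$ with $e\cdot e=1$ to encode the constant $1$, applying the squaring identity $1/x-1/(x+1)=1/(x(x+1))$, and then recovering the product from the polarization $2xy=(x+y)^2-x^2-y^2$. Inequalities are encoded by slack variables that are forced to be positive once the range restriction is imposed.

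Next, I would enforce the range $x_i\in[1/2,2]$. By Canny's bound, any satisfiable \etr\ instance admits a solution whose coordinates are bounded in absolute value by $2^{2^{O(N)}}$, so I can represent each original variable $x$ by a pair $(x^+,x^-)\in[1/2,2]^2$ via an affine rule $x=M(x^+-x^-)$ for a fixed constant $M$, and replay the gadgets of the previous step on the rescaled variables, picking additional scaling factors so that all intermediate quantities also stay in $[1/2,2]$. The obligatory $1/2\le x_i\le 2$ clauses in the \etrinv\ definition then take care of the bounds automatically. Finally, I would impose the promise $V(\Phi)\subset\prod I(x_i)$ with $|I(x_i)|\le\delta=O(n^{-c})$ by iterative zooming: given intervals $[a,b]\subset[1/2,2]$ in which a variable $x$ is already known to lie, introduce a fresh variable $y\in[1/2,2]$ tied to $x$ by an affine constraint $y=\alpha(x-a)+1/2$ built from the addition and inversion primitives, chosen so that $y$ sweeps $[1/2,2]$ precisely as $x$ sweeps $[a,b]$. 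After $O(\log n)$ rounds of such zooming, every promise interval shrinks to width at most $n^{-c}$, while the instance size grows only polylogarithmically.

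The main obstacle lies in this last step: the interval promise has to hold for \emph{every} satisfying assignment, not merely for a single designated one, so I would have to maintain the invariant that at every round of zooming the current intervals already contain the entire projected solution set of the evolving formula. Ensuring that the $[1/2,2]$ constraint on each freshly introduced zoom variable does not accidentally eliminate a pre-existing solution is the delicate heart of the reduction; once that invariant is secured, the width bound $\delta=O(n^{-c})$ follows by choosing the number of zooming rounds appropriately, completing the many-one reduction from \etr\ to \rangeetrinv.
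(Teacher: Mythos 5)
You should first note that the paper does not prove this theorem at all: it is imported verbatim from~\cite{abrahamsen2020framework}, so the comparison here is with the proof given there. Your stages 1 and 2 follow the standard route (as in~\cite{abrahamsen2018art,abrahamsen2020framework}): squaring via $1/x-1/(x+1)=1/(x^2+x)$, multiplication via polarization, and a rescaling into $[\frac12,2]$ whose doubly-exponential scale factor (from the ball theorem) is built by repeated squaring. Minor caveats there: the inversion gadgets are only well defined once the range restriction is in force (you cannot run them ``with no range conditions'' and impose ranges afterwards without altering the solution set near poles), and $M$ is not a fixed constant but an instance-dependent quantity that must itself be constructed.

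The genuine gap is stage 3, which is the entire content of the ``Range'' part of the theorem. Your zooming map $y=\alpha(x-a)+1/2$ sends $[a,b]$ onto all of $[\frac12,2]$, so the fresh variable $y$ has promise interval of length $3/2$, not $\delta$ --- the construction manufactures exactly the kind of variable the promise forbids. Read the other way (as adding the constraint $x\in[a,b]$), it requires you to already know a short interval containing the projection of $V(\Phi)$ onto $x$; but that projection can be a long subinterval of $[\frac12,2]$ (e.g.\ a nearly unconstrained variable), in which case no amount of zooming yields intervals of length $\delta$ without changing satisfiability, and determining where to zoom is itself an $\ER$-hard decision that a polynomial-time reduction cannot make. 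This is not a fillable detail but a wrong mechanism. The actual proof in~\cite{abrahamsen2020framework} avoids it by designing the instance so that \emph{every} variable is, by construction, of the form (explicit rational center) $+$ (quantity provably of magnitude at most $\delta$): one first bounds the whole variety using the ball theorem, then divides through by an explicitly represented huge constant so each original variable lands within $\delta$ of a known point, and then verifies, gadget by gadget, that every auxiliary variable introduced by the addition/inversion encoding also has an explicitly computable interval of length $O(\delta)$. The intervals $I(x_i)$ are thus read off from the construction, not discovered by search.
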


\subsection{\mrccf{} (\mrcc{})}\label{sec:mrcc}
We now turn our attention to the problem \mrccf{} (\mrcc{}).
As we will see later in this section, \mrcc\ can be reduced to \mcc.
Therefore, $\er$-hardness for \mrcc\ implies the same hardness for \mcc.
We define \mrcc\ as follows.

\noindent
\textbf{Instance:}
A tuple $\langle \poly, \mathcal{C}, \mathcal{R} \rangle$ consisting of the following parts:
\begin{itemize}
\item
A simple polygon $\poly$.

\item
A subset $\mathcal C = \{c_1, c_2, \ldots, c_{k}\}$ of the corners of $\poly$ called \emph{marked corners}.

\item
A set of pairwise disjoint axis-parallel rectangles $\mathcal R = \{ R_1, \ldots, R_m \}$ contained in $\poly$, called \emph{marked rectangles}.
\end{itemize}

\noindent
\textbf{Question:}
Do there exist $k=|\mathcal C|$ convex polygons $Q_1,\ldots,Q_{k}$ contained in $\poly$ such that $\mathcal C\subset \bigcup Q_i$ and $R\subset\bigcup Q_i$ for all marked rectangles $R\in\mathcal R$? \\

An instance of \mrcc{} comes with a promise which we will explain below.
In order to state the promise, we need some notation.
For a corner $c$ of a polygon $\poly$, define $\triangleC(c)$ to be the triangle with corners at $c$ and the two neighbouring corners of $\poly$.
Note that $\triangleC(c)$ may not be contained in $\poly$, but that will always be the case when we use the notation.

For a marked rectangle $R\in\mathcal R$, define the \emph{vertical bar} $\barr_v(R)$ to be the region of points that are vertically visible from $R$, i.e., the points $p\in\poly$ such that there is a point $q\in R$ where $pq$ is a vertical line segment contained in $\poly$.
In a similar way, we define the \emph{horizontal bar} $\barr_h(R)$ as the points that are horizontally visible from $R$; see Figure~\ref{fig:bars}.

In order to define the \emph{vertical trapezoidation} of a polygon $\poly$, we consider for each corner $c$ of $\poly$ the vertical segment that (i) contains $c$, (ii) is contained in $\poly$, and (iii) is maximal with respect to inclusion.
These segments partition $\poly$ into trapezoids, some of which are only triangles.
The \emph{horizontal trapezoidation} is defined similarly, using horizontal segments.

\begin{figure}
\centering
\includegraphics[page=12]{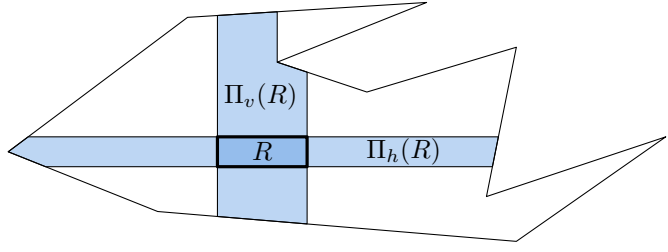}
\caption{A polygon with a marked rectangle $R$ the vertical and horizontal bars shown.}
\label{fig:bars}
\end{figure}

\paragraph{Promise.}
For the instances $\mathcal I= \langle \poly, \mathcal{C}, \mathcal{R} \rangle$ of \mrcc\ that we will consider, we promise that the following properties hold, which we call the \emph{\mrcc\ promise}.
For convenience, we have named each individual point of the promise.
\begin{enumerate}
\item
\emph{Skew triangle promise:}
For each marked corner $c$, the triangle $\triangleC(c)$ intersects the vertical and the horizontal line containing $c$ only at the point $c$.\label{promise:1}



\item
\emph{Trapezoid generality promise:}
In any trapezoid $T$ in the vertical trapezoidation of $\poly$, at least one right corner of $T$ is not a corner of $\poly$, unless the right corners of $T$ are coincident (so that $T$ is a triangle).
Similarly for the left corners.
Likewise, in any trapezoid in the horizontal trapezoidation, at least one top corner is not a corner of $\poly$ unless they are coincident, and similarly for the bottom corners. \label{promise:3}

\item
\emph{Bar intersection promise:}
For any two marked rectangles $R_i$ and $R_j$, the vertical bar of $R_i$ and
the horizontal bar of $R_j$ are either disjoint or their intersection is contained in a marked rectangle.\label{promise:2}

\item
\emph{Broad cover promise:}
If $\II$ has a cover, then there also exists a cover $\cover$ for $\II$ such that $\triangleC(c)\subset\bigcup_{\piece\in\cover} \piece$ for all $c\in\mathcal C$. \label{promise:4}
\end{enumerate}



\begin{lemma}\label{thm:mrcc-to-mcc}
Suppose that \mrccf\ is $\ER$-hard, even when restricted to instances satisfying the \mrcc\ promise.
Then \mccf\ is also $\ER$-hard.
\end{lemma}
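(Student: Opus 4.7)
The plan is to reduce an \mrcc{} instance $\mathcal I=\langle\poly,\mathcal C,\mathcal R\rangle$ (satisfying the \mrcc\ promise) to an \mcc{} instance $\langle\poly',k'\rangle$ by attaching many tiny inward spikes along the boundary of $\poly$. Each spike creates a reflex corner that must be covered by its own dedicated convex piece, and the spikes will be placed so that the ``non-marked'' region --- everything outside $\bigcup_{c\in\mathcal C}\triangleC(c)$ and outside the marked rectangles --- is broken into $\ell$ convex cells, each forced to be covered by one such dedicated piece. I then set $k'\mydef|\mathcal C|+\ell$.

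Concretely, I would first refine the vertical trapezoidation of $\poly$ so that every triangle $\triangleC(c)$ and every marked rectangle is a union of cells; the trapezoid generality promise and the bar intersection promise make this refinement well-behaved. For each internal edge of the refined decomposition whose two adjacent cells both lie in the non-marked region, I would introduce a small inward notch at the endpoint of that edge on $\partial\poly$. The notches will be chosen short enough that they do not encroach on any $\triangleC(c)$ or any marked rectangle, and sharp enough that their reflex tips locally confine any covering convex piece to a single cell of the decomposition.

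For the forward direction, the broad cover promise yields a cover $\cover$ of $\mathcal I$ with $|\mathcal C|$ pieces whose union contains every $\triangleC(c)$ and every marked rectangle. This cover remains valid inside $\poly'$, and adjoining $\ell$ trivial pieces (one per non-marked cell) yields a convex cover of $\poly'$ of size $k'$. For the converse, any cover of $\poly'$ by $k'$ convex pieces must spend at least $\ell$ of them on the spike notches, since the reflex tips of distinct spikes lie in distinct cells and no convex piece can straddle a reflex corner; the at most $|\mathcal C|$ remaining pieces therefore cover every marked corner and every marked rectangle, giving (after padding with dummy pieces if needed) an \mrcc{} cover of $\mathcal I$.

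The hardest part I foresee is the geometric design of the spikes. One must verify that a single convex piece cannot simultaneously cover a spike reflex tip and reach a marked corner or a marked rectangle, and that notches placed near the same boundary point do not merge into a single reflex feature that one piece can swallow. The skew triangle promise is what lets the marked-corner pieces spread into $\triangleC(c)$ without colliding with any nearby spike, and the trapezoid generality promise ensures that each trapezoidation cell has a short side on which a notch can be safely anchored. Ensuring that $\ell$ depends only on the combinatorial structure of $\mathcal I$ --- not on any particular cover --- is where most of the bookkeeping will lie.
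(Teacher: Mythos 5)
There is a genuine gap, and it lies exactly where your converse-direction counting argument needs to do its work. You add \emph{inward} notches, so the tip of each notch is a \emph{reflex} corner of $\poly'$, and you claim each such tip forces a dedicated piece because ``no convex piece can straddle a reflex corner.'' But a reflex corner never forces a dedicated piece: a convex piece may have the reflex tip on its boundary (e.g.\ as one of its own corners) while extending arbitrarily far into the polygon in the unblocked directions, covering several other reflex tips, marked rectangles, and even a marked corner along the way. A notch at a boundary point only blocks a wedge of visibility directions; it does not confine a piece containing its tip to one cell of your decomposition. Consequently the lower bound ``at least $\ell$ pieces are spent on the notches, and they are useless for $\mathcal C$ and $\mathcal R$'' is unjustified, and the deduction that the remaining $|\mathcal C|$ pieces form an \mrcc{} cover collapses.

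The paper's construction inverts your picture precisely to get this forcing: it attaches \emph{outward} spikes (triangles exterior to $\poly$, meeting it in an edge), so each spike tip is a \emph{convex} corner of $\poly'$ with a narrow visibility region equal to the spike plus (part of) one trapezoid of a vertical or horizontal trapezoidation of $\poly$ minus the bars of the marked rectangles. Any piece covering such a tip is trapped in that region, which by construction contains no marked corner and is interior-disjoint from every marked rectangle; two tips that see each other can only be jointly covered by a degenerate segment, which is handled separately. That is what pins down $s$ ``wasted'' pieces and leaves $k=|\mathcal C|$ pieces that must cover $\mathcal C$ and $\mathcal R$. Note also that the paper needs \emph{both} the vertical and the horizontal trapezoidation (with the bar intersection promise gluing them together via the inclusion $\poly\setminus(\bigcup R_j\cup\bigcup\triangleC(c_i))\subset\poly_v\cup\poly_h$) to cover the unmarked region in the forward direction; your single refined vertical trapezoidation would additionally have to contend with cells that cut through marked rectangles' horizontal bars. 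To repair your proof you would have to replace the reflex-tip mechanism with something that genuinely confines a piece, which is essentially the outward-spike idea.
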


\begin{proof}
Let $\mathcal I\mydef \langle\, \poly, \mathcal C, \mathcal R \,\rangle$
be an instance of \mrcc\ that satisfies the promise.
We first describe the high-level idea behind our reduction to \mcc.
We construct a larger polygon $\poly'\supset \poly$ by adding some number $s$
of spikes to $\poly$. 
A \emph{spike} is a triangle $S$ which is interior-disjoint from $\poly$ and such that the intersection $S\cap\poly$ is an edge of $S$, which we call the \emph{base} of the spike.
The corner of $S$ opposite of the base is called the \emph{tip} of the spike $S$.
The polygon \(\poly'\) will be defined as $\poly'\mydef\poly\cup\bigcup_{S\in\mathcal S} S$, where $\mathcal S$ is a set of $s$ spikes.
The spikes in $\mathcal S$ may overlap, but we choose them such that $\poly'$ has no holes.

We then get the instance $\mathcal I'\mydef \langle\, \poly', k'\,\rangle$ of \mcc, where the parameter \(k'\)  is defined as $k'\mydef k+s$ with $k\mydef|\mathcal C|$.
As we will see, our reduction implies that a cover for $\mathcal I'$ using $k'=k+s$ convex polygons contains $s$ polygons covering the added spikes.
The spikes have been chosen so that they cannot be covered while also covering $\mathcal R$ nor $\mathcal C$, so the remaining $k$ polygons must cover $\mathcal R$ and $\mathcal C$ and thus constitute a cover for $\mathcal I$.
On the other hand, given a cover $\cover$ for $\mathcal I$, one can add $s$ convex polygons contained in \(\poly'\) that cover the spikes and everything else of $\poly'$ not covered by $\cover$, and thus obtain a cover for $\mathcal I'$.
Therefore, the instances are equivalent.


To carry out this idea, we consider a vertical and a horizontal trapezoidation of (parts of) $\poly$ and in each trapezoid $T$, we add two or three spikes such that these spikes and $T$ can be covered with two or three triangles.
Let us explain the horizontal trapezoidation first.
Define $\mathcal T'_h$ as the horizontal trapezoidation of the polygons $\poly'_h\mydef \poly\setminus\bigcup \barr_h(R_j)$.
Consider a marked corner $c$.
Because of point~\ref{promise:1} of the promise of $\mathcal I$, $c$ is contained in a unique trapezoid of $\mathcal T'_h$, which is in fact a triangle $\triangleC'(c)\subseteq\triangleC(c)$.
We then define $\mathcal T_h\mydef \mathcal T'_h\setminus\{\triangleC'(c)\mid c\in\mathcal C\}$ as all the other trapezoids.
Likewise, we define $\poly_h\mydef\bigcup_{T\in\mathcal T_h} T$ as the region covered by the other trapezoids.

We now add \emph{horizontal} spikes to each trapezoid $T$ of $\mathcal T_h$ as shown in Figure~\ref{fig:spikes2}.
We must ensure that each added spike is interior-disjoint from $\poly$ to avoid ``collisions''.
This limits how long the spikes can be.
We say that a corner of $T$ is \emph{free} if it is not also a corner of $\poly$.
Let us first consider a non-degenerate trapezoid $T$, i.e., $T$ is not a triangle.
If a corner $c$ is not free, we should avoid to add a spike there, as the edge of $\poly$ incident at $c$ outside $T$ may cause a collision.
Because of point~\ref{promise:3} of the promise of $\mathcal I$, we know that a top corner and a bottom corner of $T$ are free.
If one is on the left side and the other on the right side of $T$ we can add two spikes, so that two triangles can cover the spikes and all of $T$, but nothing else of $\poly$.
If the top and bottom free corners are on the same side, we add two spikes on that side on one in the middle of the opposite side so that the three triangles can cover the three spikes and all of $T$, but nothing else.

If $T$ is a triangle, we add two spikes as shown, so that two triangles can cover the spikes and $T$, but nothing else.

\begin{figure}
\centering
\includegraphics[page=14]{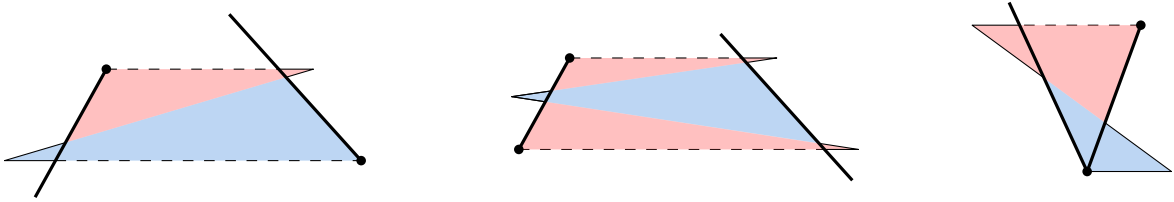}
\caption{%
The figure shows how we add spikes to trapezoids in the horizontal trapezoidation $\mathcal T_h$.
The fat edges are edges of $\poly$ and the dots show the corners of $\poly$, i.e., the corners that are not free.
Left: When there is a free left and right corner, we add two spikes such that $T$ and the two spikes can be covered by two triangles.
Middle: When there is no free left corner (as shown here) or no free right corner, we add three spikes.
Right: When the trapezoid is a triangle, we add two spikes as shown.
}
\label{fig:spikes2}
\end{figure}

A problem can occur when we want to make two horizontal spikes that share a horizontal side due to two neighbouring trapezoids $T_1$ and $T_2$ where $T_1$ is above $T_2$, as shown in Figure~\ref{fig:spikesMove}.
This results in a degenerate polygon with a corner with an interior angle of $2\pi$, or a non-functional merged spike, depending on the viewpoint.
We may assume without loss of generality that the bottom edge of $T_1$ is at least as long as the top edge of $T_2$.
We then move the spike of $T_2$ up and place it along the side of $T_1$.
The trapezoid $T_2$ and its spikes can now be covered with triangles that also cover a part of $T_1$, but nothing else of $\poly$.

\begin{figure}
\centering
\includegraphics[page=16]{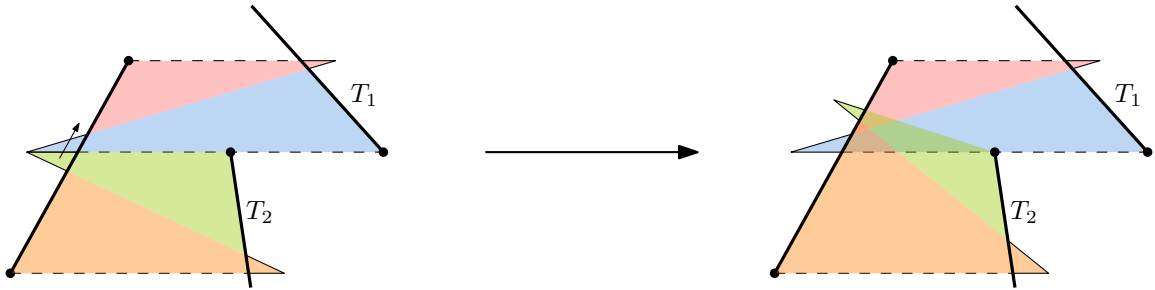}
\caption{%
Left: Two spikes share a horizontal edge and make a degenerate polygon.
Right: We move the spike of $T_2$ up and get two well-defined spikes.
}
\label{fig:spikesMove}
\end{figure}

We then consider the vertical trapezoidation $\mathcal T'_v$ of the polygons $\poly'_v\mydef \poly\setminus\bigcup \barr_v(R_j)$ and again define $\mathcal T_v$ by removing the triangles containing the marked corners from $\mathcal T_v$ and $\poly_v$ as the union of the trapezoids in $\mathcal T_v$.
We then add \emph{vertical} spikes to the trapezoids $\mathcal T_v$ in an analogous way.
Here, the horizontal and the vertical spikes may overlap, but we need to be careful that we do not introduce holes when we define $\poly'$ as the union of $\poly$ and all the spikes.
A hole is introduced if and only if a horizontal and a vertical spike overlap and the spikes have disjoint bases.
Such an overlap can be avoided by making the vertical spike sufficiently short so that it does not reach the horizontal spike; see Figure~\ref{fig:spikesNocross}.

\begin{figure}
\centering
\includegraphics[page=15]{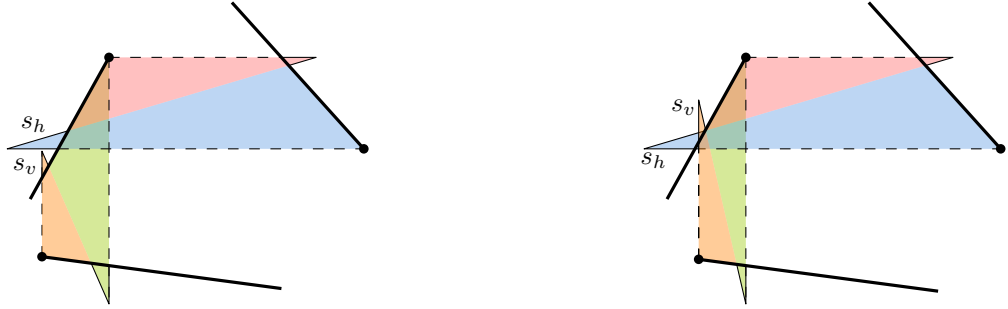}
\caption{%
Left: In order to avoid introducing a hole when we add spikes, we make the vertical spike $s_v$ so short that it does not intersect the horizontal spike $s_h$.
Right: Two spikes $s_v$ and $s_h$ overlap in an acceptable way without creating a hole.
}
\label{fig:spikesNocross}
\end{figure}

We now claim that
\begin{align}
\poly \setminus\left(\bigcup R_j\cup \bigcup \triangleC(c_i)\right)\subset \poly_v\cup \poly_h. \label{polyUnion}
\end{align}
To see this, we consider a point $p\in \poly \setminus(\bigcup R_j\cup \bigcup \triangleC(c_i))$.
If $p$ is not in the vertical or the horizontal bar of any marked rectangle, then $p$ is in both $\poly_v$ and $\poly_h$.
Otherwise, suppose that $p\in \barr_v(R)$ for some $R\in\mathcal R$.
Since $p$ is not in any marked rectangle, we get from point~\ref{promise:2} in the promise of $\mathcal I$ that $p$ is not in any horizontal bar of a marked rectangle.
Hence, $p\in \poly_h$.
Likewise, if $p\in\barr_h(R)$ for $R\in\mathcal R$, we get that $p\in\poly_v$.

It remains to verify that the instances $\mathcal I$ and $\mathcal I'\mydef \langle\poly',k'\rangle$ are equivalent.
Suppose first that $\mathcal I$ has a cover $\cover$ of size $k\mydef |\mathcal C|$.
By point~\ref{promise:4} of the promise of $\mathcal I$, we can assume that for each $c\in\mathcal C$, the whole triangle $\triangleC(c)$ is covered by $\cover$.
We define $\cover'$ to be the cover we get by adding to $\cover$ for each spike the maximal triangle in $\poly'$ that contains the spike.
Then $\cover'$ consists of $k'\mydef k+s$ polygons, where $s$ is the number of spikes, and we need to verify that $\bigcup_{\piece\in\cover'} Q=\poly'$.
Clearly, for every $\piece\in\cover'$, we have $\piece\subset \poly'$, so we just need to check that every point $p$ in $\poly'$ is in some polygon $\piece\in\cover'$.
If $p\in\poly'\setminus\poly$, then $p$ is in a spike and is thus covered.
If $p\in \Delta(c)$ for $c\in\mathcal C$ or $p\in R$ for $R\in\mathcal R$, then it follows from the properties of the cover $\cover$ of $\mathcal I$ that $p$ is covered by $\cover$ and thus by $\cover'$.
In the remaining case, we have $p\in \poly \setminus(\bigcup R_j\cup \bigcup \triangleC(c_i))$, so we get from~\eqref{polyUnion} that $p\in \poly_v$ or $p\in \poly_h$.
Then $p$ is in a trapezoid $T$ of $\mathcal T_h$ or $\mathcal T_v$, so $p$ is in a triangle of $\cover'$ that also covers one of the spikes of $T$.

Suppose then that $\poly'$ has a cover $\cover'$ of size $k'$.
We claim that $s$ polygons in $\cover'$ are used to cover the spikes.
Although the tips of some spikes see each other, we can assume that no polygon in $\cover'$ covers two spike corners, for the following reason:
Two tips that see each other must be tips of spikes added to the same trapezoid in $\mathcal T_v$ or $\mathcal T_h$.
The only convex set that covers two such tips is a line segment.
But that means that if $\piece\in\cover'$ covers two tips, then $\piece$ is a line segment, and then $\cover'\setminus\{\piece\}$ must cover all of $\poly'$.
Hence, we can assume that each tip is covered by a unique polygon in $\cover'$.
We conclude that $s$ polygons are used to cover the spikes.

By construction, each tip sees nothing outside $\poly_v\cup\poly_h$.
Therefore, the remaining $k$ polygons in $\cover'$ must cover $\poly\setminus (\poly_v\cup \poly_h)$.
Since no marked corner is contained in $\poly_v\cup \poly_h$ and $\poly_v\cup \poly_h$ is interior-disjoint from each marked rectangle, we have that $k$ polygons cover $\mathcal C$ and $\mathcal R$.
Hence, $\mathcal I$ has a cover.
\end{proof}

\section{Reduction from {\rangeetrinv}\ to \mrcc}\label{sec:etrinv-to-mrcc}

Let $\II_1$ be an instance of \rangeetrinv\ with an \etrinv\ formula $\Phi$ of $n$ variables $X\mydef\{x_1,\ldots,x_{n}\}$.
We show that there exists an instance $\II_2\mydef\langle\, \poly, \mathcal C, \mathcal R \,\rangle$ of \mrcc\, which can be computed in polynomial time such that $\Phi$ has a solution if and only if $\II_2$ has a cover.
Recall that a cover $\cover$ of $\II_2$ is a set of $|\mathcal C|$ convex polygons contained in $\poly$ that together cover the marked corners $\mathcal C$ and the marked rectangles in $\mathcal R$.
A high-level sketch of the polygon $\poly$ is shown in Figure~\ref{fig:sketch-all}.

\begin{figure}
\centering
\includegraphics[page=18]{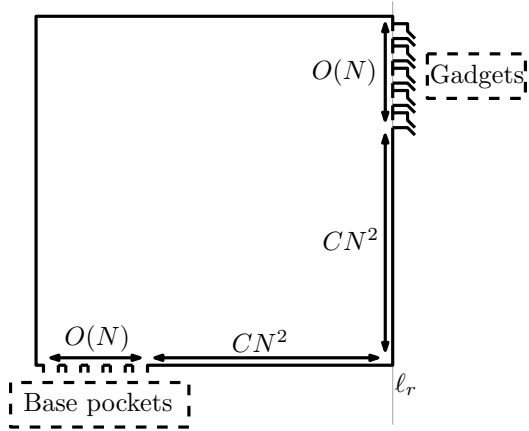}
\caption{%
	A high-level sketch of our construction. The polygon \(\poly\) will have
	dimensions quadratic in \(N\), which is proportional to the size of the \etrinv{} instance \(\Phi\).
}
\label{fig:sketch-all}
\end{figure}

\subsection{Basic tools}\label{sec:basic}

Before describing the actual geometry of the instance $\II_2$, we explain the basic tools that underlie our construction.

\subsubsection{Invisibility property}
We make sure that our constructed instance $\II_2$ satisfies the following \emph{invisibility property}:
No two marked corners can see each other.

For a marked corner $c\in\mathcal C$, let $e_1$ and $e_2$ be the two edges of $\poly$ incident at $c$.
Except for one marked corner in each $\leq$-inversion gadget, it holds that $c$ is the lower endpoint of both $e_1$ and $e_2$.
The invisibility thus follows for all other corners.
Additionally, it holds without exception that $c$ is either the left endpoint of both $e_1$ and $e_2$ or the right endpoint of both.

Note that since a cover $\cover$ for $\II_2$ consists of $|\mathcal C|$ pieces and no piece can cover more than one marked corner by the invisibility property, it follows that each piece $\piece\in\cover$ contains exactly one marked corner $c\in\mathcal C$.

\subsubsection{Critical segments and the bi-cover property}
In the instance $\II_2$ that we construct, a marked rectangle $R\in\mathcal R$ often contains a special line segment $s$ which we call a \emph{critical segment}; see Figure~\ref{fig:criticalSeg}.
The following property turns out to be crucial, and we call it the \emph{bi-cover property}:
For each critical segment $s$, there are at least two and at most three marked corners that can see $s$.
There are three if and only if $s$ is not horizontal, and in that case, one of these three corners $h$ is a \emph{helper corner} (to be described later), and $h$ is incident to an edge contained in the extension of $s$.
In particular, $s$ is on the boundary of the visibility polygon of $h$.
Otherwise, if $s$ is horizontal, then $s$ is a variable segment, to be described in more detail below.

\begin{figure}
\centering
\includegraphics[page=13]{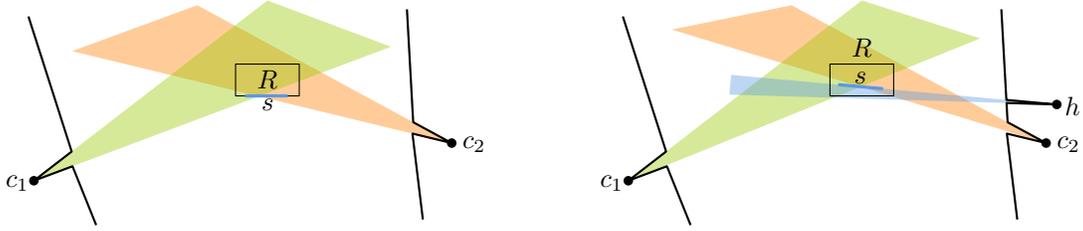}
\caption{%
Left: A horizontal critical segment $s$ contained in the bottom edge of its marked rectangle $R$.
Two pieces cover $R$ and the responsible lever corners $c_1$ and $c_2$.
Right: A critical segment that is not horizontal.
We use a helper corner $h$ with an incident edge contained in the extension of $s$.
The three pieces cover $R$ and the responsible lever corners $c_1,c_2$ and the helper corner $h$.  
}
\label{fig:criticalSeg}
\end{figure}

A critical segment will always be contained in a marked rectangle, but we will also make some marked rectangles that do not contain critical segments.
Often these are introduced to satisfy the bar intersection promise of the \mrcc\ instance, and they will be defined as the intersection of a vertical and a horizontal bar of two other marked rectangles.
Another example is the marked rectangle $R_{\Gamma'}$ in each addition gadget, to be explained in Section~\ref{sec:GEadditionGadget}.

Consider a critical segment $s$ and the marked rectangle $R$ containing $s$.
Let $c_1,c_2\in\mathcal C$ be the marked corners that see $s$ and are not helper corners.
These are called \emph{lever corners}, and we say that $c_1$ and $c_2$ are \emph{responsible} for $s$ and for the marked rectangle $R$.
Since a cover for $\II_2$ must cover all of the rectangle $R$, it follows from the invisibility property and the bi-cover property that $c_1$ and $c_2$ must be covered by two pieces $\piece_1$ and $\piece_2$ that also together cover $s$.
All the marked corners that we make will be either helper corners or lever corners.
A piece that covers a lever corner is called a \emph{lever piece} and otherwise it is a \emph{helper piece}.

\subsubsection{Helper corners}
As mentioned above, we will make two types of marked corners, namely lever corners and helper corners.
Furthermore, we will make two kinds of helper corners.
In general, we make a helper corner when a marked rectangle or a part of a marked rectangle cannot be covered by lever pieces.
One situation is when we introduce a marked rectangle with the sole purpose of satisfying the bar intersection promise described in Section~\ref{sec:mrcc}; see Figure~\ref{fig:helpers} for an example.

The other situation is when we have a marked rectangle $R$ containing a critical segment $s$ which is not horizontal; see Figure~\ref{fig:criticalSeg} (right).
In this case, the two pieces $\piece_1$ and $\piece_2$ covering the lever corners $c_1$ and $c_2$ responsible for $s$ can cover the part of $R$ above or below $s$, but they cannot cover everything on the other side.
We then make a helper corner $h$ so that one of the edges incident at $h$ is contained in the extension of $s$.
Thus, $s$ is on the boundary of the visibility polygon of $h$, and a piece covering $h$ can cover the part of $R$ on the other side of $s$.

If a critical segment $s$ is horizontal, we make the marked rectangle $R$ so that $s$ is contained in the bottom or top edge of $R$, as seen in Figure~\ref{fig:criticalSeg} (left).
Then the pieces covering the responsible corners $c_1$ and $c_2$ will be able to cover all of $R$ without the use of a helper corner.

\begin{figure}
\centering
\includegraphics[page=25]{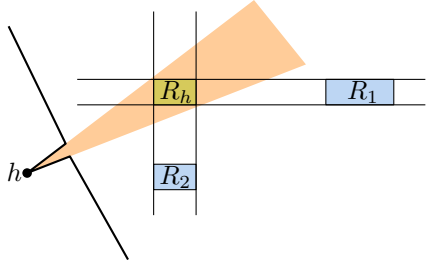}
\caption{%
In order to satisfy the bar intersection promise, we make a marked rectangle $R_h$ which is the intersection of the horizontal bar of $R_1$ and the vertical bar of $R_2$.
We then make a helper corner $h$ so that a piece covering $h$ can also cover $R_h$.
}
\label{fig:helpers}
\end{figure}

\subsubsection{Representing variables}
Each variable $x\in X$ is represented by a collection of \emph{variable segments}, which are horizontal critical segments.
Consider one variable segment $s\mydef ab$ where $a$ is to the left of $b$, and assume that $s$ represents $x$.
The segment $s$ can be oriented either to the left or to the right.
Each point $p$ on $s$ represents a value of $x$ in the interval $[\frac 12,2]$, which we denote as $\xx sp$.
If $s$ is right-oriented, then we define $\xx sp\mydef 1/2 +\frac{3\|ap\|}{2\|ab\|}$, and otherwise we define $\xx sp\mydef 1/2+\frac{3\|bp\|}{2\|ab\|}$.
In particular, $\xx s\cdot$ is a linear map from $s$ to $[\frac 12,2]$.

Consider a variable segment $s$ and a lever corner $c$ responsible for $s$.
In a cover $\cover$ for $\II_2$, the piece $\piece$ covering $c$ specifies a value at $s$, defined as follows.
Each lever corner comes with a \emph{pivot} which is also a corner of $\poly$.
Let $d$ be the pivot of $c$.
It is then always the case that $d$ has another $y$-coordinate than $s$.
For a point $q\in s$, consider the line $\overleftrightarrow{qd}$ through $q$ and the pivot $d$, which is never horizontal by the previous remark, so the line has a well-defined left and right side.
Furthermore, it will be the case that whether $c$ is to the right or left of $\overleftrightarrow{qd}$ does not depend on the particular point $q$.
We say that $c$ is the \emph{right-responsible} or \emph{left-responsible} for $s$, depending on this side.
For instance, in Figure~\ref{fig:ineqChain}, $c_c$ is right-responsible and $c_b$ is left-responsible for the segment $s_3$.

If $c$ is a right-responsible, then we define $\ppCover{s}{c}{\cover}$ as the leftmost point in $\piece\cap s$ (or the right endpoint of $s$ if $\piece\cap s=\emptyset$).
Otherwise, we define $\ppCover{s}{c}{\cover}$ as the rightmost point in $\piece\cap s$ (or the left endpoint of $s$ if $\piece\cap s=\emptyset$).
We now define the value represented by $\cover$ at $s$ with respect to $c$ as $\valCover sc{\cover}\mydef \xx{s}{\ppCover{s}{c}{\cover}}$.
We will usually use the simplified notation $\pp sc$ and $\val sc$ if $\cover$ is clear from the context.

Recall that for each critical segment $s$, there are two responsible lever corners $c_1,c_2$, so that in a cover for $\II_2$, the pieces covering $c_1$ and $c_2$ must together cover $s$.
It will always be the case in our construction that one is left-responsible and the other is right-responsible.
We then have the following observation.

\begin{observation}\label{obs:ineq1}
Consider a variable segment $s\mydef ab$, where $a$ is the left endpoint, and let $c_1$ be the left-responsible for $s$ and $c_2$ the right-responsible.
Consider any cover $\cover$.
Since the pieces covering $c_1$ and $c_2$ also cover $s$, we have $\pp s{c_2}\in a\pp s{c_1}$.
It follows that if $s$ is right-oriented, then $\val s{c_1}\geq \val s{c_2}$.
Otherwise, $\val s{c_1}\leq \val s{c_2}$.
\end{observation}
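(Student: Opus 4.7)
The plan is to unfold the definitions and argue directly from the convexity of the covering pieces. Fix the cover $\cover$ and let $\piece_1,\piece_2\in\cover$ be the (unique, by the invisibility property) pieces covering $c_1$ and $c_2$, respectively. Since $\piece_1$ and $\piece_2$ are convex and $s$ is a line segment, each of the intersections $\piece_1\cap s$ and $\piece_2\cap s$ is either empty or a closed sub-interval of $s$. Moreover, since $c_1$ and $c_2$ are the two lever corners responsible for $s$ and $s$ is contained in the marked rectangle containing $s$, we have $\piece_1\cup\piece_2\supseteq s$.

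The central step is to prove the containment $\pp s{c_2}\in a\pp s{c_1}$, that is, $\pp s{c_2}$ lies weakly to the left of $\pp s{c_1}$ along $s$. By the definition of $\pp{\cdot}{\cdot}$, the point $\pp s{c_1}$ is the rightmost point of $\piece_1\cap s$ (with the convention that it equals $a$ if the intersection is empty), while $\pp s{c_2}$ is the leftmost point of $\piece_2\cap s$ (equal to $b$ if empty). I would argue by contradiction: if $\pp s{c_2}$ lay strictly to the right of $\pp s{c_1}$, then any point in the open sub-segment between them would be too far right to belong to $\piece_1\cap s$ and too far left to belong to $\piece_2\cap s$, contradicting $\piece_1\cup\piece_2\supseteq s$. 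The degenerate cases where one of the intersections is empty are handled directly by the conventions in the definition of $\pp{s}{c}$: if $\piece_1\cap s=\emptyset$ then $\piece_2\cap s=s$, so $\pp s{c_2}=a=\pp s{c_1}$, and symmetrically in the other case; so the containment holds in all cases.

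The inequality between $\val s{c_1}$ and $\val s{c_2}$ then follows immediately by applying $\xx s\cdot$. Recall that $\xx s\cdot\colon s\to[1/2,2]$ is an affine bijection that is strictly increasing in the distance from $a$ if $s$ is right-oriented and strictly decreasing in the distance from $a$ if $s$ is left-oriented. Since $\pp s{c_2}$ lies weakly to the left of $\pp s{c_1}$, the right-oriented case gives $\val s{c_1}=\xx s{\pp s{c_1}}\geq \xx s{\pp s{c_2}}=\val s{c_2}$, and the left-oriented case flips the inequality.

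I do not foresee any genuine obstacle. The content of the observation is essentially a careful formalization of the picture that two convex pieces covering a segment must overlap in their reach along that segment. The only point requiring a bit of attention is tracking the conventions set up for the boundary cases where $\piece_i\cap s=\emptyset$, which are chosen precisely so that the ``gap'' argument yields the correct inequality without a separate case analysis in the downstream uses of the observation.
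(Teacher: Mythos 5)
Your proposal is correct and follows essentially the same route as the paper, which states the observation with its justification inline: the two responsible pieces must jointly cover $s$ by the bi-cover property, so no gap can exist between the rightmost point of $\piece_1\cap s$ and the leftmost point of $\piece_2\cap s$, and the inequalities follow from the (anti)monotonicity of $\xx s\cdot$. Your explicit treatment of the empty-intersection conventions is a harmless elaboration of what the paper leaves implicit.
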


We are now ready to state the theorem that expresses that our reduction works as desired.

\begin{theorem}\label{thm:mainthm}
Suppose that there is a solution to $\Phi$.
Then our constructed instance $\II_2\mydef\langle\, \poly, \mathcal C, \mathcal R \,\rangle$ has a cover, and any cover $\cover$ of $\II_2$ \emph{specifies} a solution to $\Phi$, as described by the following two properties.
\begin{itemize}
\item Each variable $x\in X$ is specified \emph{consistently} by $\cover$: For each variable segment $s$ representing $x$, consider the corners $c_1,c_2\in\mathcal C$ that are responsible for $s$.
Then $\xx s{c_1}=\xx s{c_2}$, and this value is the same for all segments $s$ representing $x$.

\item The cover $\cover$ is \emph{feasible}, i.e., the values of $X$ thus specified is a solution to~$\Phi$.
\end{itemize}
On the other hand, if there is no solution to $\Phi$, then there is no cover for $\II_2$.
The instance $\II_2$ can be computed in time polynomial in the input size $|\II_1|$.
It thus follows that \mrcc\ is $\ER$-complete.
\end{theorem}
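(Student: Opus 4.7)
My plan is to construct the polygon $\poly$ as sketched in Figure~\ref{fig:sketch-all}: a row of \emph{base pockets} along the bottom, one per variable $x\in X$, each containing several variable segments representing $x$; a collection of \emph{addition} and \emph{inversion gadgets} placed above and to the right; and long, thin \emph{propagation corridors} connecting the base pockets to the gadgets. Inside the base pocket for a variable $x$, a single lever piece covering a shared marked corner partially covers every variable segment of $x$ simultaneously; the other responsible corner of each such segment is a dedicated \emph{propagation corner}, and the long thin piece that covers it must reach up through a corridor into exactly one gadget. This is the mechanism that will force all copies of $x$ to agree in value. For an equality $x+y=z$ I would build two gadgets, one enforcing $x+y\le z$ and one enforcing $x+y\ge z$; similarly for $x\cdot y=1$. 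Each gadget would be a chain of critical segments whose responsible lever pieces pivot about carefully chosen corners of $\poly$, so that the linear (for addition) or projective (for inversion) transformation between consecutive segment values encodes exactly one of the two inequalities, following the principles of~\cite{abrahamsen2018art}.

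With this construction in hand, the forward direction is largely explicit: starting from a solution $(v_1,\ldots,v_n)$ of $\Phi$, I place each lever piece as the unique convex polygon spanned by its pivot and the point on its variable segment representing the specified value, together with appropriate helper pieces for the marked rectangles not containing a critical segment. Gadget by gadget, the geometry is designed so that the relevant inequality is tight exactly at the solution values, and all marked rectangles are then fully covered. For the reverse direction I would use the invisibility property to conclude that in any cover $\cover$ each of the $|\mathcal{C}|$ pieces contains a unique marked corner; the bi-cover property then forces, for every critical segment $s$, the two responsible lever pieces to jointly cover $s$. Applying Observation~\ref{obs:ineq1} segment by segment along each gadget yields the intended inequality on the represented values, the propagation corridors force consistency between copies of the same variable, and conjoining the $\le$ and $\ge$ gadgets for each constraint turns inequality into equality, so the values read off from $\cover$ form a solution of $\Phi$.

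The main obstacle will be making every gadget simultaneously respect the four parts of the \mrcc{} promise together with the invisibility and bi-cover properties while still enforcing its constraint. In particular, the bar intersection promise forces me to insert auxiliary marked rectangles (with accompanying helper corners) at every intersection of a vertical and a horizontal bar of existing marked rectangles, without accidentally creating extra sightlines between marked corners or new critical segments with the wrong visibility. The inversion gadget is an additional delicate point: its projective transformation must robustly distinguish the two inequalities even when the variables are confined to a promised interval $I(x)$ of length at most $\delta$, which requires placing its pivots carefully and exploiting the freedom to take $\delta=O(n^{-c})$ while keeping all coordinates of bit-length polynomial in $|\II_1|$. Once the gadgets and these invariants have been verified, the polynomial-time bound on $\II_2$ is routine, and $\ER$-completeness of \mrcc{} follows by combining the reduction with Lemma~\ref{thm:mrcc-to-mcc} and the $\ER$-membership established in~\cite{o1982complexity,erickson2020smoothing}.
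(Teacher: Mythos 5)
Your overall architecture matches the paper's: base pockets of variable segments sharing a base corner, dedicated propagation corners whose long thin pieces run through corridors into paired $\le$/$\ge$ gadgets, the forward direction by explicit placement of lever and helper pieces, and the reverse direction driven by the invisibility and bi-cover properties plus chains of inequalities. There is, however, a genuine gap in how you establish the first bullet of the theorem (consistency). You assert that ``the propagation corridors force consistency between copies of the same variable,'' but the corridors are intrinsically one-directional: Lemma~\ref{lem:corridor_works} yields only $\val{s_\sigma}{c_\sigma^s}\leq\val{r_\sigma}{c_\sigma^r}$ (or the reverse, depending on orientation), and likewise the shared base piece gives, via Observation~\ref{obs:ineq2}, only a chain of inequalities between the oppositely oriented segments above and below its pivot. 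Nothing in your plan closes these chains into equalities, so a cover could represent genuinely different values on different segments of the same variable. The paper's device (Sections~\ref{sec:bottomWall} and~\ref{sec:consistency}) is to represent each variable $x_i$ by \emph{three} pockets $\pocket_i,\pocket_{n+i},\pocket_{2n+i}$ and to connect them by dedicated $\ge$- and $\le$-inequality gadgets into a cycle $x_i\ge x_{n+i}\ge x_{2n+i}\ge x_i$, which forces every intermediate inequality in every chain to be tight.

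The same tripling also repairs a second defect of your ``one pocket per variable'' layout: the corridor construction and the addition gadget require the participating base pockets to appear in a prescribed left-to-right order ($i<j<l$), which cannot be arranged simultaneously for all constraints when each variable has a single pocket (consider $x_3+x_1=x_2$). Rewriting $\Phi$ as $\Phiineq$ over $3n$ variables lets one always pick representatives with sorted indices. Apart from this missing mechanism, your plan follows the paper's route, and the obstacles you flag (the bar intersection promise, rationality of the inversion pivots, the $\delta$-restricted ranges) are exactly those the paper's technical sections resolve.
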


\subsubsection{Lever mechanism}
The basic mechanism of our construction is similar to that of a lever (as known from mechanics) and works as follows; see Figure~\ref{fig:lever}.
As mentioned earlier, a marked corner will either be a helper corner or a \emph{lever corner}, and each lever corner has a \emph{pivot} $d$, which is also a corner of $\poly$ but not a marked corner.
The two corners $c$ and $d$ can always see each other.
There is also a set of marked rectangles $\mathcal S\mydef\{R_1,\ldots,R_l\}$, $l\geq 2$, for which $c$ is one of the responsible corners.
In particular, the rectangles $\mathcal S$ can be seen from $c$.
One or more of the rectangles $\mathcal S_t\mydef\{s_1,\ldots,s_m\}$, $m<l$, are above $d$ and one or more $\mathcal S_b\mydef\{s_{m+1},\ldots,s_l\}$ are below $d$.
Recall that some marked rectangles contain variable segments.
The variable segments contained in the rectangles $\mathcal S$ all represent the same variable $x\in X$, and the variable segments in $\mathcal S_t$ have the same orientation, which is the opposite of that of the variable segments in $\mathcal S_l$.

\begin{figure}
\centering
\includegraphics[page=3]{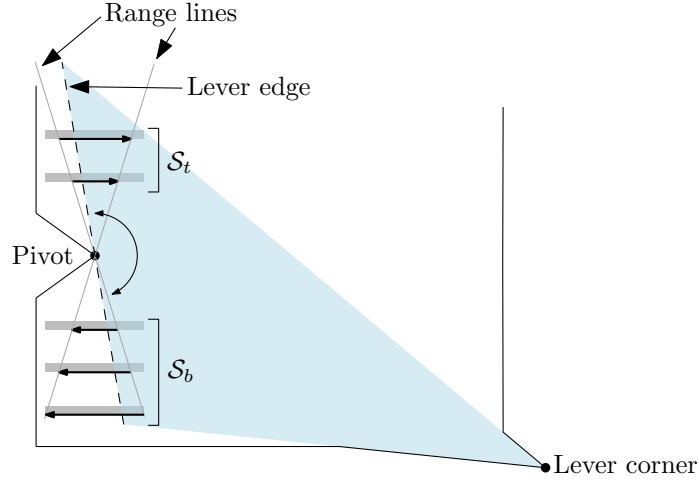}
\caption{A lever piece partially covering the associated marked rectangles, each of which contains a variable segment.
Two overlapping and maximal lever pieces are shown, and the lever edge of one of them is shown as a dashed segment.}
\label{fig:lever}
\end{figure}

A piece of a cover for $\II_2$ that covers a lever corner $c$ is called a \emph{lever piece}.
A lever piece $\piece$ can cover a part of the rectangles $\mathcal S$, but the pivot $d$ is preventing $\piece$ from covering all the rectangles entirely.
Suppose that we want the piece $\piece$ to have non-empty intersection with all rectangles in $\mathcal S$ and cover as much as possible of each one.
Such a piece must have an edge $e$ that contains $d$, as we could otherwise cover more of the top rectangles $\mathcal S_t$ or the bottom rectangles $\mathcal S_b$.
We say that $e$ is a \emph{lever edge}, and $e$ has the property that it intersects all the rectangles $\mathcal S$.
Furthermore, the lever piece $\piece$ contains the part of each rectangle $R$ on the same side of $e$, either to the left (if $c$ is the left-responsible for the rectangles $\mathcal S$) or to the right (if $c$ is the right-responsible).
We can now imagine that the edge $e$ rotates around the pivot $d$.
Doing so, the piece $\piece$ will cover more of the top rectangles $\mathcal S_t$ and less of the bottom rectangles $\mathcal S_b$, or vice versa.

Consider the subset $\mathcal S'\subseteq \mathcal S$ that contain critical segments (the reader can conveniently imagine the case $\mathcal S'=\mathcal S$; the only exception will be in the addition gadgets).
Then there are two \emph{range lines}, both of which pass through $d$, and each critical segment in $\mathcal S'$ has an endpoint on each range line.
When the lever edge of $\piece$ coincides with a range line, then the critical segments in one of the sets $\mathcal S_b$ or $\mathcal S_t$ are completely covered, while the segments in the other set are not covered at all.

\begin{figure}
\centering
\includegraphics[page=17]{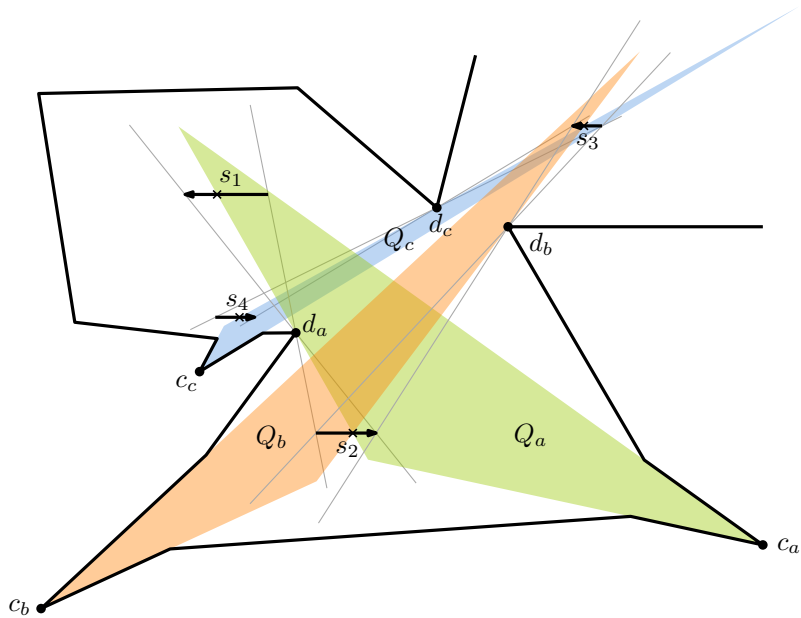}
\caption{Three lever corners and lever pieces that together cover the variable segments $s_2$ and $s_3$.}
\label{fig:ineqChain}
\end{figure}

\subsubsection{Chains of inequalities}

The following observation follows immediately from the description above and is one of our main tools for copying values from one variable segment to another.

\begin{observation}\label{obs:ineq2}
Let $R_t\in\mathcal S_t$ and $R_b\in\mathcal S_b$ be two marked rectangles and assume that they contain variable segments $s_t$ and $s_b$, respectively. 
Suppose that the lever corner $c$ is the right-responsible for both $s_t$ and $s_b$.
If $s_t$ is right-oriented and $s_b$ is left-oriented, then $\val {s_t}c \geq \val {s_b}c$.
If $s_t$ is left-oriented and $s_b$ is right-oriented, then $\val {s_t}c \leq \val {s_b}c$.
If on the other hand $c$ is the left-responsible, then the inequalities are swapped.
Equalities hold if and only if $\piece$ has a lever edge, and then $\piece$ represents the same value on all variable segments in $\mathcal S$.
\end{observation}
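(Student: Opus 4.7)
The plan is to analyse the single convex lever piece $\piece\in\cover$ covering the lever corner $c$, and to use two basic features of the construction: (a)~the reflex corner $d$ obstructs any convex subset of $\poly$ from extending to the ``wrong'' side, so the chord $p_tp_b\subseteq\piece$ (where $p_t\mydef\ppCover{s_t}{c}{\cover}$ and $p_b\mydef\ppCover{s_b}{c}{\cover}$) must keep $d$ on the same side as $c$ or contain $d$; and (b)~the variable segments in $\mathcal S$ are calibrated to $d$, in the sense that any line through $d$ meets each of them at a point representing a common value of the variable $x$. Both are design features of the lever gadget, to be spelled out later in the construction.

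Given (a) and (b), the inequality is immediate. Suppose first that $c$ is right-responsible, $s_t$ is right-oriented, and $s_b$ is left-oriented. Let $\ell\mydef\overleftrightarrow{p_td}$ and set $q_b\mydef\ell\cap s_b$. By~(b), $\xx{s_t}{p_t}=\xx{s_b}{q_b}$. By~(a), $p_b$ lies on the $c$-side of $\ell$ (or on $\ell$), which, because $s_b$ is horizontal and $d$ lies between $s_t$ and $s_b$, is to the right of $q_b$ on $s_b$. Since $s_b$ is left-oriented, moving rightward decreases $\xx{s_b}{\cdot}$, so
\[
\val{s_b}{c}=\xx{s_b}{p_b}\leq\xx{s_b}{q_b}=\xx{s_t}{p_t}=\val{s_t}{c}.
\]
The case where $s_t$ is left-oriented and $s_b$ is right-oriented reverses the orientations and hence the inequality, and the case where $c$ is left-responsible follows by reflecting the whole picture through a vertical line through $c$, which swaps ``leftmost'' with ``rightmost'' and flips every inequality.

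For the equivalence with the existence of a lever edge, equality holds iff $p_b=q_b$, iff the line through $p_t$ and $p_b$ passes through $d$. If so, $d$ lies on the chord $p_tp_b\subseteq\piece$, so $d\in\piece$; since $d\in\partial\poly$ and $\piece\subset\poly$, the point $d$ must actually lie on $\partial\piece$, and therefore on some edge of $\piece$, which is a lever edge by definition. Conversely, if $\piece$ has a lever edge $e$ through $d$, then $e$ forms the left boundary of $\piece$ between $s_t$ and $s_b$ (any further leftward extent would violate~(a)), so $p_t=e\cap s_t$ and $p_b=e\cap s_b$ are collinear with $d\in e$, and by~(b) they---together with the intersections of $e$ with all other variable segments in $\mathcal S$---represent a single common value of~$x$.

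I expect the main obstacle to be turning the two design facts (a) and (b) into rigorous statements about the construction. In particular, (b) requires that all variable segments in $\mathcal S$ be placed, oriented, and scaled so that the central projection from $d$ identifies their linear maps $\xx{\cdot}{\cdot}$ consistently with their orientations; once that calibration is fixed and the reflex geometry at $d$ is spelled out in detail, the observation reduces to the short convexity and projective argument above.
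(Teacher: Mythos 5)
Your overall strategy is exactly the one the paper intends: the paper gives no explicit proof of this observation (it is declared to ``follow immediately'' from the preceding description of the lever mechanism), and your two ingredients are the right ones --- (b) is the calibration coming from the fact that the variable segments are horizontal with endpoints on the two range lines through $d$ and opposite orientations above/below $d$, so that central projection from $d$ is an affine, orientation-reversing, value-preserving map between them; and the convexity-plus-pivot constraint is what turns the projection into an inequality. Your handling of the equality case and of the left-responsible case by reflection is also fine.

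However, your statement of fact~(a) has the sides reversed, and it contradicts the way you then use it. Place $d$ at the origin with $s_t$ above and $s_b$ below, and let $c$ lie to the right (it is right-responsible). The exterior wedge of $\poly$ at the pivot opens \emph{away} from $c$, so for $\conv\{c,p_t,p_b\}\subset\poly$ the chord $p_tp_b$ must cross the horizontal through $d$ weakly to the right of $d$; that is, $d$ lies weakly on the \emph{opposite} side of $\overleftrightarrow{p_tp_b}$ from $c$ (or on it), not the same side as you state. Writing $p_t=(x_t,1)$, $p_b=(x_b,-1)$, $d=(0,0)$, the correct constraint is $x_t+x_b\geq 0$, which is precisely ``$p_b$ lies weakly on the $c$-side of $\ell=\overleftrightarrow{p_td}$'' --- the fact you actually invoke in the next line, and which yields the correct inequalities. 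Your (a) as literally stated ($x_t+x_b\leq 0$) would give $p_b$ on the non-$c$-side of $\ell$ and hence the reversed conclusion. So the deduction is sound but the premise it is attributed to must be corrected to: the chord $p_tp_b$ either contains $d$ or leaves $d$ on the side opposite to $c$. A minor further point: you should note (as the paper's definition of $\ppCover{s}{c}{\cover}$ does) that when $\piece\cap s_b=\emptyset$ the convention assigns $p_b$ the right endpoint of $s_b$, for which the claimed inequalities hold trivially.
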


Using chains of lever mechanisms, we can now make chains of inequalities of the values that the lever pieces represent of a variable $x\in X$, using Observations~\ref{obs:ineq1} and~\ref{obs:ineq2} alternatingly.
See for instance Figure~\ref{fig:ineqChain}.
The observations give $\val {s_1}{c_a} \leq \val {s_2}{c_a}\leq \val {s_2}{c_b} \leq \val {s_3}{c_b}\leq \val{s_3}{c_c} \leq \val{s_4}{c_c}$.
The figure shows the case that $Q_a,Q_b,Q_c$ all have lever edges (containing the pivots $d_a,d_b,d_c$, respectively) and cover $s_2$ and $s_3$ with no overlap, and then we have equalities $\val {s_1}{c_a} = \val {s_2}{c_a}= \val {s_2}{c_b} = \val {s_3}{c_b} = \val{s_3}{c_c} = \val{s_4}{c_c}$.
Otherwise, one or more of the inequalities will be strict.

\begin{figure}
\centering
\includegraphics[page=30]{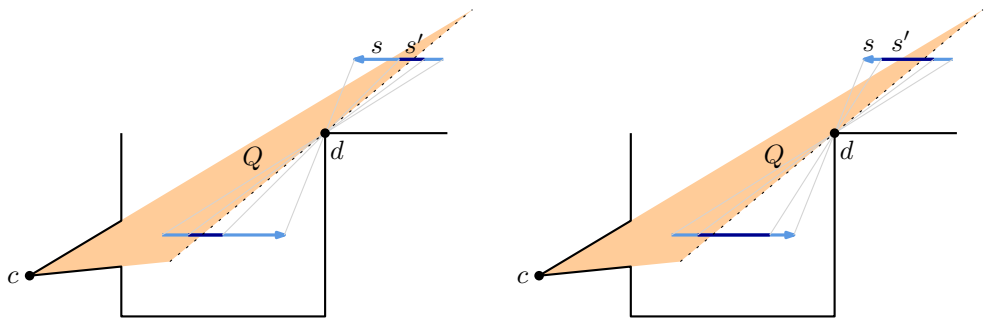}
\caption{The (light and dark) blue arrows show the full variable segments representing a variable $x$ and the dark blue parts are the parts corresponding to the restricted range $I(x)$.
To the left, the dotted lever edge has enough freedom to rotate over the entire restricted range, since the corner $c$ sees all points on the restricted ranges of the variable segments.
To the right, $c$ does not see all of the upper restricted range $s'$, so not all values can be represented.
}
\label{fig:flexibility}
\end{figure}

\subsubsection{Infinitesimal range}
Recall that each variable $x\in X$ comes with an interval $I(x)\subset [\frac 12,2]$ of size $|I(x)|\leq \delta$, for some small $\delta$, such that all solutions to $\Phi$ are guaranteed to be contained in these intervals.
In this reduction, we use $\delta\mydef n^{-7}$.
Consider a variable segment $s$ representing $x$.
Let $s'$ be the part of $s$ such that the points of $s'$ correspond to the range $I(x)$.
We then call $s'$ the \emph{restricted range} of $s$, and since $\delta\mydef n^{-7}$, we have that $\frac{\|s'\|}{\|s\|}=O(n^{-7})$.
Since we can assume $n$ to be arbitrarily large, we can usually think of $s'$ as being just a single point on $s$.
In our construction, we will always show the full segment $s$ in the figures, and we will mark the position of the restricted range $s'$ as a single point on $s$.
We just need to ensure that the construction is generic in the sense that it works for every possible placement of the restricted range $s'$ on the full range $s$.
In particular, when we construct a lever corner $c\in\mathcal C$, it may be needed to move $c$ and the edges of $\poly$ incident at $c$ depending on where the restricted range $s'$ is located at $s$.

We need the lever corner $c$ to be placed so that it can see all points on the restricted range $s'$ of $s$.
When that is the case for all segments $s$ that $c$ is responsible for, then a lever edge of a piece covering $c$ will have the freedom to rotate over all of $s'$, thus representing all values in the required range $I(x)$.
See Figure~\ref{fig:flexibility} for examples with enough freedom and too little freedom.

In many parts of our construction, it is obvious that $c$ can see all of the restricted ranges of the variable segments.
The more delicate case happens for the propagation corners, to be described in more detail in the next section.
These are responsible for variable segments in the base pockets, but also for critical segments in the corridors.
Here, it is not clear that the propagation corners can see enough of the critical segments in the corridors, but as we will see in Section~\ref{sec:propSpikes}, it works for our choice of $\delta\mydef n^{-7}$.

We choose the marked rectangles to be similarly tiny, as they only need to contain the restricted ranges of the critical segments.
Hence, the vertical and horizontal bars of the rectangles can be thought of as vertical or horizontal line segments contained in $\poly$.

\subsection{High-level description of $\poly$}

\begin{table}[p]
\centering
\begin{tabular}{|l|l|}
\hline
Name & Description/value   \\
\hline
$\II_1\mydef \langle\, \Phi,\delta,(I(x_1),\ldots,I(x_n))\, \rangle$ & instance of \rangeetrinv\ that we reduce from \\
$\II_2\mydef\langle\, \poly, \mathcal C, \mathcal R \,\rangle$ & instance of \mrcc\ that we reduce to \\
$X\mydef \{x_1,\ldots,x_{n}\}$ & set of variables of $\Phi$ \\
$n$ & number of variables in $\Phi$ \\
$\Phiineq$ & formula equivalent to $\Phi$ using only inequalities \\
$\Xineq\mydef \{x_1,\ldots,x_{3n}\}$ & variables of $\Phiineq$ \\ 
$\delta$ & $n^{-7}$, maximum range of each variable $x_i$ \\
$\poly \mydef \poly(\Phi)$ & final polygon to be constructed from $\Phi$ \\
$C$ & a sufficiently large constant \\
$N$ & maximum index of a variable segment $s_i$ in the base pockets, $N=O(n^3)$ \\
$s_i \mydef a_ib_i$ & variable segment in a base pocket, $i\in\{1,\ldots,N\}$ \\ 
$c_i^s$ & propagation corner of the segment $s_i\mydef a_ib_i$ \\
$o$ & $c_1^s$, relative origin of the base pockets \\
$\tilt$ & $1/32$, every variable segment $s_i$ has a length in $[\frac 32(1-\tilt),\frac 32]$ \\
corridor  & connection between the main part of $\poly$ and a gadget \\
$d_0 d^0, d_1 d^1$ & corridor doors, $d^0 \mydef d_0 + (0,\frac{4.5}{CN^2})$ and $d^1 \mydef d_1 + (0,\frac{2.25}{CN^2})$ \\
$\ell_{\proj o}$ & vertical line through the point $\frac{d_0+d_1}{2}$ \\
$\proj o$ & intersection of the ray $\overrightarrow{od_0}$ and $\ell_{\proj o}$ \\
$\bar o$ & point on $\overrightarrow{\proj o d_1}$ with $x(\bar o) =x(d_1)+1$, relative origin of a particular gadget \\
$r_i,r_j,r_l$ & variable segments within gadgets \\ 
$\bar a_\sigma,\bar b_\sigma$, $\sigma\in\{i,j,l\}$ & left and right endpoint of $r_\sigma$, $\sigma\in\{i,j,l\}$ \\
$c_\sigma^r$ & right lever corner responsible for $r_\sigma\mydef \bar a_\sigma\bar b_\sigma$ \\
$\nu$, $\rho$ , $\eps$ & $\nu \mydef \frac{13.5}{CN^2}$, 
$\rho \mydef\frac{\nu}{9} = \frac{1.5}{CN^2}$, 
$\eps \mydef \frac{\rho}{48} = \frac{1}{32CN^2}$     \\
$\proj V$ & $\proj{o}+[-N\nu,N\nu]\times [-N\nu,N\nu]$, square in a corridor \\
slab $S(q,v,r)$ & region of all points with distance at most $r$ to the line through $q$\\
&with direction $v$ \\
center of slab& line in the middle of a slab\\
$L$-slabs, $R$-slabs & uncertainty regions for visibility rays, see Section~\ref{sec:describingVisibilities} \\
\hline
\end{tabular}
\caption{Parameters, variables, and certain distances that are frequently used are summarized in this table for easy access. Some descriptions are much simplified.}
\label{tab:Glossary}
\end{table}

We are now ready to describe the actual design of the polygon $\poly$ in more detail; see Figure~\ref{fig:sketch-all} for a sketch of the entire polygon and Table~\ref{tab:Glossary} for some of the terms and symbols that are often used.
In the bottom of $\poly$, there are pockets with lever corners representing all the variables; see Figure~\ref{fig:basepocket}.
Each pocket is a polygon in $\poly$ bounded by a chain of $\partial\poly$ and from above by one horizontal chord of $\poly$.
A pocket can contain a large number of variable segments $\mathcal S$, all representing the same variable $x\in X$.
In the lower right corner of the pocket, there is a lever corner, which we call a \emph{base corner}, and which is the right-responsible lever corner for all the segments $\mathcal S$.
A piece covering a base corner is called a \emph{base piece}.
For each variable segment, the left-responsible lever corner is called a \emph{propagation corner}.
The pivot of a propagation corner is very far away approximately in direction $(1,1)$, namely at the door of a corridor that leads into a gadget.
A piece covering a propagation corner is called a \emph{propagation piece}, and in any cover, the propagation pieces will be very skinny and long, sticking out of the base pockets and into the corridors, where each of them is responsible for covering another critical segment.

\begin{figure}
\centering
\includegraphics[page=5]{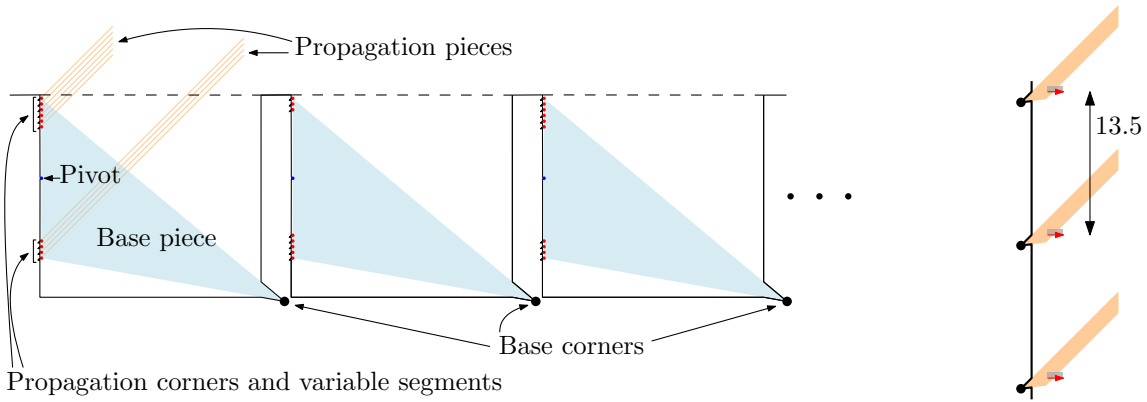}
\caption{Left: A sketch of three consecutive base pockets with various numbers of variable segments.
The variable segments are shown as red dots.
Right: A closeup of three consecutive variable segments and the marked rectangles containing them.
}
\label{fig:basepocket}
\end{figure}

The bottom part of $\poly$ consists of a collection of $3n$ base pockets.
In order from left to right, we denote the pockets as $\pocket_1,\ldots,\pocket_{3n}$.
The pockets $\pocket_1,\ldots,\pocket_{n}$ represent the variables $x_1,\ldots,x_{n}$, respectively, as do the pockets $\pocket_{n+1},\ldots,\pocket_{2n}$, and $\pocket_{2n+1},\ldots,\pocket_{3n}$.
At the right side of $\poly$, there are some \emph{corridors} attached, each of which leads into a \emph{gadget}.
The \emph{doors} to the corridors are line segments contained in a vertical line $\ell_r$ (in fact, we will move the doors slightly away from $\ell_r$ in order to satisfy the trapezoid generality promise, but that does not make any conceptual difference).
The gadgets also contain marked rectangles and corners, and they are used to impose dependencies between the pieces covering variable segments in the base pockets.
The corridors are used to make dependencies between the base pockets and the gadgets.
Each gadget corresponds to a constraint of one of the types $x+y\geq z$, $x+y\leq z$, $x\cdot y\geq 1$, $x\cdot y\leq 1$, and $x\leq y$.
The first four types of constraints are used to encode the dependencies between the variables in $X$ as specified by $\Phi$, whereas the latter constraint is used to make sure that the three base pockets representing a variable $x\in X$ specify the value of $x$ consistently.

The reason that we need three pockets $\pocket_i,\pocket_{i+n},\pocket_{i+2n}$ representing each variable $x_i$ is that in the addition gadgets, such as the one encoding the restriction $x_i+x_j\geq x_l$, we need the base pockets representing $x_l,x_j,x_i$, that we connect to the gadget, to appear in the specific order $\pocket_{i'},\pocket_{j'},\pocket_{l'}$ from left to right, respectively.
This is obtained by choosing $i'\mydef i$, $j'\mydef j+n$, and $l'\mydef l+2n$.

\subsection{Detailed construction of the bottom wall}\label{sec:bottomWall}
In this section we present the construction of the bottom wall of $\poly$.
Recall that the bottom wall forms $3n$ base pockets $\pocket_1,\ldots,\pocket_{3n}$, such that a variable $x_i$ is represented by the pockets $\pocket_i,\pocket_{n+i},\pocket_{2n+i}$.
Each pocket may contain multiple variable segments.

We define a formula $\Phiineq$ equivalent to the \etrinv-formula $\Phi$ in the following way.
We formally introduce $2n$ new variables $x_{n+1},\ldots,x_{3n}$, such that each pocket $\pocket_i$ represents just one variable $x_i$, for $i\in\{1,\ldots,3n\}$.
We define $\Xineq\mydef\{x_1,\ldots,x_{3n}\}$ as the total set of variables.
For each $i\in\{1,\ldots,n\}$, we introduce the three constraints $x_i\geq x_{n+i}$, $x_{n+i}\geq x_{2n+i}$, and $x_{2n+i}\geq x_i$.
These ensure that $x_i=x_{n+i}=x_{2n+i}$, which corresponds to the pockets $\pocket_i,\pocket_{n+i},\pocket_{2n+i}$ specifying the same value of a variable.
We rewrite every occurrence of $x_i+x_j=x_l$ as two inequalities:
$$x_{i}+x_{n+j}\geq x_{2n+l}\text{\quad and\quad } x_{i}+x_{n+j}\leq x_{2n+l}.$$
This is to ensure that for all addition inequalities, the indices of the appearing variables have a specific order.
Likewise, we rewrite every occurrence of $x_ix_j=1$ as two inequalities:
$$x_ix_{n+j}\geq 1\text{\quad and\quad }x_ix_{n+j}\leq 1.$$

To sum up, we now have a formula $\Phiineq$ equivalent to $\Phi$, and $\Phiineq$ is a conjunction of inequalities of the forms
$$
x_i+x_j\geq x_l,\quad x_i+x_j\leq x_l,\quad x_ix_j\geq 1,\quad x_ix_j\leq 1, \quad x_i\geq x_j,
$$
where $i,j,l\in\{1,\ldots,3n\}$.
The last inequality is only used to ensure $x_i=x_{n+i}=x_{2n+i}$ for each $i\in\{1,\ldots,n\}$, as described before.
For each addition inequality, we have $i<j<l$, and for each multiplication inequality we have $i<j$.
We have two different types of inequalities of the form $x_i\geq x_j$, namely those where $i<j$ and those where $j<i$.
We can assume that there are at most $O(n^3)$ inequalities in $\Phiineq$, as otherwise some inequalities would be identical.

Consider a variable $x_i$, $i\in\{1,\ldots,3n\}$, and the appearances of the variable in the inequalities of $\Phiineq$, rewritten as described above.
We associate each appearance of $x_i$ to a distinct variable segment in the pocket $\pocket_i$.
For each inequality $\xi$, there will be a gadget $G_\xi$ that enforces the base pockets to obey the inequality $\xi$.
The variable segments associated to the appearances of the variables in $\xi$ will be connected to $G_\xi$ via a corridor.
Whether the segments associated to $\xi$ should be right- or left-oriented will be clear when the individual gadgets are described in detail.

Recall that a variable segment has endpoints on the two range lines, so the length of the segment depends on its $y$-coordinate.
We want all variable segments in the base pockets to have a length in the range $[\frac 32(1-\tilt),\frac 32]$, for the small constant $\tilt\mydef \frac 1{32}$.
A pocket has to be dimensioned correctly in order to accommodate enough segments of this length.
To keep the construction simpler, we make all the pockets equally large, and this size is chosen so that all pockets can accommodate their segments.
The construction of the bottom wall will therefore depend on the pocket that receives the largest number $k$ of variable segments with the same orientation.


We place the right-oriented segments above the pivot and the left-oriented below the pivot.
In each pocket, we place the variable segments with the same orientation equidistantly with vertical distance $13.5$, with the right-oriented segments in the top and the left-oriented in the bottom; see Figure~\ref{fig:basepocket}.
We therefore choose the height of the pockets to be $13.5k/\tilt$.
This ensures that even in the pocket with most segments of the same orientation, we can place the right-oriented segments within distance $13.5k$ from the top and the left-oriented segments within distance $13.5k$ from the bottom, and then the length of the shortest will be at least $1-\tilt$ times the longest.
We choose the width of the pockets slightly larger than the height, so that rays with direction $(1,1)$ from the bottom-most variable segment in one pocket are not obstructed by the pocket.
This is important since we are going to place the gadgets so that the propagation pieces will follow approximately the direction $(1,1)$.
Furthermore, the rays from the bottom-most segment in pocket $P_i$ will then be above the rays from the top-most segment in the pocket $P_{i+1}$ to the right.
Note that since $\tilt$ is small, the range lines (i.e., the two lines through the pivot that together contain all endpoints of the segments in one base pocket) will be close to vertical and the variable segments will appear as a vertical stack of equally long segments.

We place the base pockets so that all pivots are on a horizontal line and the distance between two consecutive pivots is a multiple of $13.5$, and we place the variable segments so that the topmost segments in all pockets are on a horizontal line, as are the bottom-most.
We want to enumerate the variable segments according to the slope of the direction to any particular corridor.
We place the segments such that for each segment $s$, the midpoint is vertically above or below the pivot of the base pocket.
It then follows that these midpoints also have $x$-coordinates that are multiples of $13.5$.
For a particular variable segments with midpoint $p=(x(p),y(p))$, we then give a \emph{score} defined as $x(p)-y(p)$.
This ensures that every variable segments gets a distinct score which is a multiple of $13.5$.
Let $w_{\min}$ and $w_{\max}$ be the minimum and maximum scores, respectively, among all segments in all pockets.
We then denote a variable segment with score $w$ as $s_i$ where $i\mydef (w-w_{\min})/13.5+1$.
Let $N\mydef (w_{\max}-w_{\min})/13.5+1$, so that each segment $s_i$ has an index $i\in \{1,\ldots,N\}$.
Many of these indices are unused, because we only placed segments close to the top and bottom of each pocket and because the pockets are wider than they are high.
We have that $N=O(n^3)$, since each variable can participate in at most $O(n^2)$ different gadgets, so there are at most $O(n^2)$ segments in each base pocket, and there are $O(n)$ base pockets.

We define the left and right endpoints of the segment to be $a_i$ and $b_i$, where $a_i$ is to the left.
Consider a variable segment $s_i$ and let the midpoint be $p$.
We then define the associated propagation corner of $s_i$ as $c_i^s\mydef p+(-13/4,-1)$.
We choose the neighbouring corners to be $c_i^s+(1,\frac 18)$ and approximately $c_i^s+(1,1)$ (the precise way to choose this second corner will be described in Section~\ref{sec:propSpikes}).

It will be convenient to define the relative origin of the base pockets to be the first propagation corner $o\mydef c_1^s$.
We then have the following observation by the above description:

\begin{observation}\label{obs:basePocketLines}
For each $i\in\{1,\ldots,N\}$, if the segment $s_i$ exists, there exists $\mu\in [0,\tilt/2]$ such that the following holds:
\begin{itemize}
\item
$c_i^s$ is on the line through $o+(13.5(i-1),0)$ with direction $(1,1)$,

\item
$a_i$ is on the line through $o+(13.5(i-1)+1.5+\mu,0)$ with direction $(1,1)$,

\item
$b_i$ is on the line through $o+(13.5(i-1)+3-\mu,0)$ with direction $(1,1)$.
\end{itemize}
\end{observation}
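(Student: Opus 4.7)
The plan is to use the fact that two points lie on a common line of direction $(1,1)$ if and only if they agree on the \emph{score} $w(q) \mydef x(q) - y(q)$, which is precisely the quantity used in Section~\ref{sec:bottomWall} to index the variable segments. Under this reformulation the observation reduces to three score equalities, each of which is a direct computation from the explicit construction.

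First I would unfold the definitions. By the indexing, $w(p_i) = w_{\min} + 13.5(i-1)$, where $p_i$ is the midpoint of $s_i$. Since translating a point by $(a,b)$ shifts its score by $a-b$, the definition $c_i^s = p_i + (-13/4,-1)$ gives $w(c_i^s) = w(p_i) - 9/4$, and in particular $w(o) = w(c_1^s) = w_{\min} - 9/4$. Consequently $w(c_i^s) - w(o) = 13.5(i-1) = w(o + (13.5(i-1),0)) - w(o)$, which confirms the first bullet.

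For the endpoints, recall that variable segments are horizontal and that each midpoint $p_i$ lies vertically above or below the pivot, so $a_i = p_i + (-\ell_i/2,0)$ and $b_i = p_i + (\ell_i/2,0)$, where $\ell_i \in [\tfrac{3}{2}(1-\tilt), \tfrac{3}{2}]$ is the length of $s_i$. Setting $\mu \mydef \tfrac{3}{4} - \ell_i/2$ and plugging in the scores above yields $w(a_i) = w(o) + 13.5(i-1) + \tfrac{3}{2} + \mu$ and $w(b_i) = w(o) + 13.5(i-1) + 3 - \mu$, which confirms the remaining two bullets with the \emph{same} $\mu$. The length range translates directly to $\mu \in [0, \tfrac{3}{4}\tilt]$, which matches the stated interval $[0, \tilt/2]$ up to a constant; since this bound is used elsewhere only via smallness, either tightening the length range slightly or relaxing the interval in the statement works without any downstream change.

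The argument is essentially bookkeeping, so I do not foresee any genuine obstacle; the only care needed is to keep signs straight and to not confuse $p_i$, $c_i^s$, and $o$. The sole conceptual step is the reformulation ``$(1,1)$-line $\Longleftrightarrow$ level set of the score'', which collapses the whole observation to three one-line computations.
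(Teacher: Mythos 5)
Your proof is correct and is essentially the verification the paper intends: the observation is stated ``by the above description'' without an explicit proof, and your reduction to equalities of the score $x(q)-y(q)$ (the paper's own indexing quantity) is exactly the bookkeeping that justifies it, including the identification $\mu=\tfrac34-\ell_i/2$ forced simultaneously by both endpoint bullets. The mismatch you flag is real --- the stated length range $[\tfrac32(1-\tilt),\tfrac32]$ yields $\mu\in[0,\tfrac34\tilt]$ rather than $[0,\tilt/2]$ --- but as you note it is a harmless constant-factor slip, since $\mu$ is only ever used through its smallness.
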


\subsection{Consistent variable representation in base pockets}\label{sec:consistency}

Before we explain how to make corridors and gadgets, let us assume that we already have them at hand and show how we can finish the construction using them.
Recall that we have rewritten the original \rangeetrinv\ formula $\Phi$ with variables $X\mydef\{x_1,\ldots,x_n\}$ as a formula $\Phiineq$ with variables $\Xineq\mydef\{x_1,\ldots,x_{3n}\}$, so that in all solutions and for each $x_i\in X$, we will have $x_i=x_{i+n}=x_{i+2n}$.
Each variable $x_i\in\Xineq$ is represented by a base pocket $P_i$, and the pockets appear in the order $P_1,\ldots,P_{3n}$ from left to right in the bottom of $\poly$.
In order to ensure that the base triangles in the pockets $P_i,P_{i+n},P_{i+2n}$ represent the same value of $x_i,x_{i+n},x_{i+2n}$, we make use of $\geq$- and $\leq$-gadgets as described in the following and shown schematically in Figure~\ref{fig:basepocketconnection}.
(The gadgets will be described in Sections~\ref{sec:Geinequality} and \ref{sec:Leinequality}.)

In the base pockets, the variable segments above the pivots are right-oriented and the ones below the pivots are left-oriented.
We connect the bottom-most segment of $P_i$ and the topmost in $P_{i+n}$ with a $\geq$-inequality gadget, and connect $P_{i+n}$ and $P_{i+2n}$ in a similar way.
We connect the topmost segment in $P_i$ and the bottommost in $P_{i+2n}$ with a $\leq$-inequality gadget.
We then have a cycle of inequalities that ensure that the base triangles in the three pockets $P_i,P_{i+n},P_{i+2n}$ represent the value of $x_i$ consistently on all variable segments in the three pockets.

When we add the addition and inversion gadgets, the values of the variables $X$ represented by the base triangles must satisfy all the constraints in $\Phi$.
It follows that if there is a cover for $\II_2$, then there exists a solution to $\Phi$.
On the other hand, if there is a solution to $\Phi$, we get a corresponding cover where the pieces represent the given solution.
Thus, Theorem~\ref{thm:mainthm} follows.

\begin{figure}
\centering
\includegraphics[page=24]{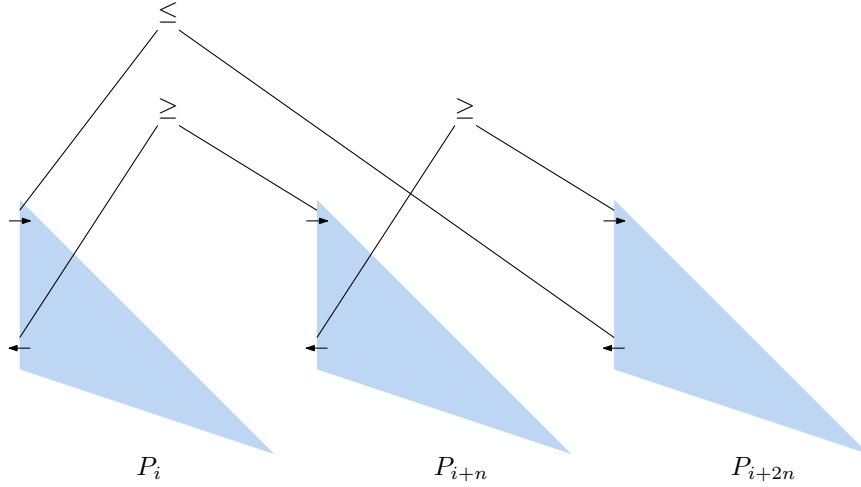}
\caption{Schematic view of how we connect the base pockets representing the same variable $x_i$ using inequality gadgets.
This creates a cycle of inequalities so that the blue base triangles must represent the value of $x_i$ consistently on all variable segments in the pockets.
}
\label{fig:basepocketconnection}
\end{figure}

\subsection{Technique for connecting base pockets and gadgets}
We already saw in Section~\ref{sec:basic} how we can make chains of inequalities using variable segments that are covered pairwise by lever pieces.
When connecting the base pockets and the gadgets, it is advantageous to introduce some critical segments that are not horizontal and hence not variable pieces according to our definition.
In this section, we explain the technique behind using these critical segments.
Such a segment $s$ will, however, represent a variable $x\in X$, but the map from $s$ to $[\frac 12,2]$ will not be linear but some other rational function.
In Section~\ref{sec:copy} we show how to use the technique to make a corridor that can connect multiple variable segments in the base pockets to segments in a gadget.

To capture the dependencies that a critical segment can make between the two lever pieces covering it, we introduce the notion of projections.
Consider a lever corner $c$ and two critical segments $s_0$ and $s_1$ such that $c$ is responsible for both.
Let $d$ be the pivot of $c$.
For $p\in s_0$, we define $\pi(p)\mydef \overleftrightarrow{pd}\cap s_1$, and we say that the function $\pi: s_0\longrightarrow s_1$ thus defined is the \emph{projection} from $s_0$ to $s_1$.
Using generalizations of Observations~\ref{obs:ineq1} and~\ref{obs:ineq2}, we get the following, illustrated in Figure~\ref{fig:nook_and_umbra}.

\begin{figure}
\centering
\includegraphics[page=2]{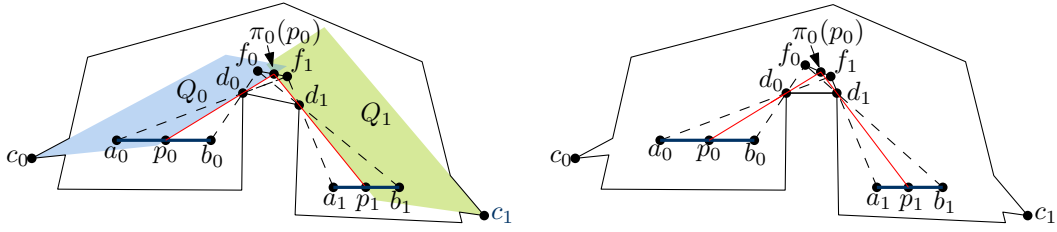}
\caption{%
Left:
If the pieces $\piece_0$ and $\piece_1$ cover all of $f_0f_1$, then the leftmost point of $s_1$ covered by $\piece_1$ is on the segment $p_1 b_1$.
Right:
Lemma~\ref{lemma:copyLemma} says that when $y(d_0)=y(d_1)$, then $\xx {s_0}{p_0} = \xx {s_1}{p_1}$.
}
\label{fig:nook_and_umbra}
\end{figure}

\begin{observation}\label{obs:ineq3}
Consider a critical segment $f\mydef f_0f_1$ and let $c_0$ be the left-responsible for $f$ and $c_1$ the right-responsible.
Let $d_0,d_1$ be the pivots of $c_0,c_1$, respectively, and suppose that $f$ is above both $d_0$ and $d_1$.
Let $s_0\mydef a_0b_0$ and $s_1\mydef a_1b_1$ be variable segments below $d_0$ and $d_1$ for which $c_0$ and $c_1$ are responsible, such that $a_0,a_1$ are the left endpoints and $f_0= \overrightarrow{b_0d_0}\cap \overrightarrow{b_1d_1}$ and $f_1=\overrightarrow{a_0d_0}\cap\overrightarrow{a_1d_1}$.
Let $\pi_0$ be the projection from $s_0$ to $f$ and $\pi_1$ be that from $f$ to $s_1$.
Consider any cover $\cover$, 
and let $p_0\mydef \pp{s_0}{c_0}$ and $p_1\mydef \pi_1(\pi_0(p_0))$.
Then $f\cap \piece_0\subset f_0\pi(p_0)$, and therefore $\pi(p_0)f_1\subset \piece_1$.
We then have $s_1\cap \piece_1\subset p_1b_1$.
In particular, if $s_1$ is right-oriented, then $\val{s_1}{c_1}\geq \xx{s_1}{p_1}$, and otherwise, $\val{s_1}{c_1}\leq \xx{s_1}{p_1}$.
\end{observation}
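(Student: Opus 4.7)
The plan is to mirror the structure of Observations~\ref{obs:ineq1} and~\ref{obs:ineq2}, but with the critical segment $f$ playing the role of a variable segment in an intermediate ``bouncing'' step: first argue about how $\piece_0$ extends from $s_0$ onto $f$ through the pivot $d_0$, and then symmetrically argue how $\piece_1$ extends from $f$ onto $s_1$ through the pivot $d_1$.

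The main step is to prove $f\cap \piece_0\subset f_0\pi_0(p_0)$. Since $c_0$ is left-responsible for $s_0$, the point $p_0$ is the rightmost point of $\piece_0\cap s_0$. I would argue by contradiction: suppose $\piece_0$ contains a point $q\in f$ strictly to the right of $\pi_0(p_0)$ (i.e., strictly closer to $f_1$). By convexity, the segment $p_0q$ lies in $\piece_0\subset \poly$. By definition of $\pi_0$, the three points $d_0,p_0,\pi_0(p_0)$ are collinear; since $q$ is strictly past $\pi_0(p_0)$, the line $\overleftrightarrow{p_0q}$ passes strictly on the far side of $d_0$ from $c_0$. The lever construction places the exterior wedge of $\poly$ at $d_0$ precisely on that side, so the segment $p_0q$ crosses out of $\poly$ in a neighborhood of $d_0$, a contradiction. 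The inclusion $\pi_0(p_0)f_1\subset \piece_1$ then follows because the responsibility of $c_0$ and $c_1$ for $f$ forces $\piece_0\cup\piece_1\supseteq f$.

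Next I would apply the same argument symmetrically to $\piece_1$, now with $c_1$ right-responsible, pivot $d_1$, the critical segment $f$ in the role previously played by $s_0$, and $s_1$ in the role previously played by $f$. Since $\piece_1$ already contains $\pi_0(p_0)$ on $f$, the mirrored argument forces the leftmost point of $\piece_1\cap s_1$ to lie at or to the right of $\pi_1(\pi_0(p_0))=p_1$, i.e., $s_1\cap \piece_1\subset p_1b_1$. The value inequality is then immediate from the definition $\val{s_1}{c_1}=\xx{s_1}{\pp{s_1}{c_1}}$ together with the linearity and orientation of $\xx{s_1}{\cdot}$, exactly as in Observation~\ref{obs:ineq1}: moving the leftmost covered point to the right of $p_1$ corresponds to a larger $\xx{}{\cdot}$ value if $s_1$ is right-oriented and a smaller one if $s_1$ is left-oriented.

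The delicate point, and the main obstacle, is the local geometric argument at the pivot $d_0$: one has to be sure that the exterior wedge of $\poly$ at $d_0$ lies on the opposite side of $d_0$ from $c_0$, so that a segment from a point on $s_0$ (below $d_0$) to a point on $f$ (above $d_0$) which drifts to the wrong side of $d_0$ really does exit $\poly$. This is an invariant built into the lever mechanism and is already implicit in Observations~\ref{obs:ineq1} and~\ref{obs:ineq2}, but for the general, non-horizontal critical segment $f$ considered here it has to be invoked explicitly in both applications (at $d_0$ and at $d_1$).
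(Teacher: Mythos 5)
Your proposal is correct and follows exactly the route the paper intends: the paper gives no explicit proof of this observation, merely asserting that it follows from ``generalizations of Observations~\ref{obs:ineq1} and~\ref{obs:ineq2}'' together with Figure~\ref{fig:nook_and_umbra}, and your two-stage argument (convexity plus the reflex pivot $d_0$ bounding $\piece_0\cap f$ by $f_0\pi_0(p_0)$, the bi-cover property forcing $\piece_1\supset\pi_0(p_0)f_1$, then the mirrored argument at $d_1$) is precisely that generalization. The ``delicate point'' you flag about the exterior wedge at the pivot lying on the side away from $c_0$ is indeed the invariant the lever construction guarantees, so no gap remains.
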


The principle behind the following special case was also used in~\cite{abrahamsen2018art}; see Figure~\ref{fig:nook_and_umbra} (right) for an illustration, and Figure~\ref{fig:copy} for an example of how we make a helper spike in order to cover the bounding rectangle $R$ of the critical segment $f$.
Although the context in which we state the lemma is different, the proof is analogous to one given in~\cite{abrahamsen2018art}, so we give the lemma without proof.

\begin{lemma}[\cite{abrahamsen2018art}]\label{lemma:copyLemma}
In the setup of Observation~\ref{obs:ineq3}, if $y(d_0)=y(d_1)$, then $\xx{s_0}{p_0}=\xx{s_1}{p_1}$.
In particular, if $s_0$ and $s_1$ both right-oriented, then $\val{s_0}{c_0}\leq\val{s_1}{c_1}$, and if $s_0$ and $s_1$ both left-oriented, then $\val{s_0}{c_0}\geq\val{s_1}{c_1}$.
\end{lemma}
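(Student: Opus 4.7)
The plan is to show that the composite map $\pi := \pi_1 \circ \pi_0 : s_0 \to s_1$ is an \emph{affine} bijection that sends $a_0 \mapsto a_1$ and $b_0 \mapsto b_1$. Granted this, $\pi$ preserves the fractional position $\lambda = \|a_i p\|/\|a_i b_i\|$ along the two segments, and since $\xx{s}{p}$ is an affine function of $\lambda$ whose sign depends only on the orientation of $s$, we get $\xx{s_0}{p_0} = \xx{s_1}{p_1}$ whenever $s_0$ and $s_1$ share an orientation. The lemma's inequality conclusions are then immediate: combining this equality with the estimate $s_1 \cap \piece_1 \subset p_1 b_1$ from Observation~\ref{obs:ineq3} gives $\val{s_1}{c_1} \geq \xx{s_1}{p_1} = \val{s_0}{c_0}$ for right-oriented $s_1$, and the reversed inequality for left-oriented $s_1$.

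The endpoint conditions $\pi(a_0) = a_1$ and $\pi(b_0) = b_1$ are immediate from the definitions: $f_1 = \overrightarrow{a_0 d_0} \cap \overrightarrow{a_1 d_1}$ lies on the line through $a_0$ and $d_0$, so $\pi_0(a_0) = f_1$, and it lies on the line through $a_1$ and $d_1$, so $\pi_1(f_1) = a_1$; the $b$ endpoints are analogous. For the core claim that $\pi$ is affine, I would pass to the projective completions of the two horizontal lines carrying $s_0$ and $s_1$. Each perspectivity $\pi_i$ extends to a projective bijection, so $\pi$ is a projectivity between two horizontal projective lines, and such a projectivity is affine precisely when it fixes the common horizontal point at infinity $P_\infty$. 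Chasing $P_\infty$ through the two perspectivities, $\pi_0(P_\infty)$ is the intersection $q$ of the horizontal line through $d_0$ with the line spanned by $f$; then $\pi_1(q)$ equals $P_\infty$ iff the line through $q$ and $d_1$ is horizontal, iff $y(q) = y(d_1)$. Since $y(q) = y(d_0)$ by construction, this fixed-point condition is exactly the hypothesis $y(d_0) = y(d_1)$.

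The main thing to be careful with is the non-degeneracy needed for each perspectivity to be a well-defined projective bijection on the relevant intervals: the setup guarantees that $f$ lies strictly above both pivots, so neither pivot lies on the line spanned by $f$, and both perspectivities are genuinely invertible. If one prefers to avoid projective language, the same conclusion is obtainable by direct coordinate computation: placing $d_0 = (0,0)$ and $d_1 = (D,0)$, with $s_i$ at height $y_i$ and the line of $f$ written as $ax+by=c$, one finds $\pi_0(x_0, y_0) = \bigl(cx_0/(ax_0+by_0),\, cy_0/(ax_0+by_0)\bigr)$, and composing with $\pi_1$ yields $x\bigl(\pi(p_0)\bigr) = D(1 - y_1 b/c) + (y_1/y_0)(1 - Da/c)\,x_0$, which is manifestly affine in $x_0$. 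This is the hands-on translation of the same projective fact and closes the proof.
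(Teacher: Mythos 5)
Your proof is correct. A point of context: the paper deliberately gives no proof of this lemma --- it is stated as imported from~\cite{abrahamsen2018art} with the remark that the argument there is analogous --- and the cited proof is essentially the direct coordinate computation you relegate to your last paragraph. Your main route is therefore a different (and arguably cleaner) self-contained derivation: you observe that $\pi_1\circ\pi_0$ is a projectivity between the two horizontal lines carrying $s_0$ and $s_1$, that it sends $a_0\mapsto a_1$ and $b_0\mapsto b_1$ by the very definition of $f_0$ and $f_1$, and that it is affine precisely when it fixes the horizontal point at infinity; chasing that point through the two perspectivities isolates $y(d_0)=y(d_1)$ as exactly the needed hypothesis, which explains \emph{why} the condition appears rather than merely verifying that it works. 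The explicit formula $x(\pi(p_0))=D(1-y_1b/c)+(y_1/y_0)(1-Da/c)\,x_0$ checks out (the denominator $ax_0+by_0$ cancels only because $d_1$ has the same height as $d_0$, which is where the hypothesis enters the computation), so the two halves of your argument corroborate each other.

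Two small points you handle correctly but should keep explicit in a write-up: the equality $\xx{s_0}{p_0}=\xx{s_1}{p_1}$ requires $s_0$ and $s_1$ to share an orientation (the affine map preserves the fractional position $\|a p\|/\|ab\|$, but $\xx{s}{\cdot}$ reverses under a change of orientation), which is consistent with the ``in particular'' clause of the lemma; and the passage from the equality to the inequalities uses $\val{s_0}{c_0}=\xx{s_0}{p_0}$, which holds by the definition $p_0=\pp{s_0}{c_0}$, together with the containment $s_1\cap\piece_1\subset p_1b_1$ from Observation~\ref{obs:ineq3}.
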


\begin{figure}
\centering
\includegraphics[page=4]{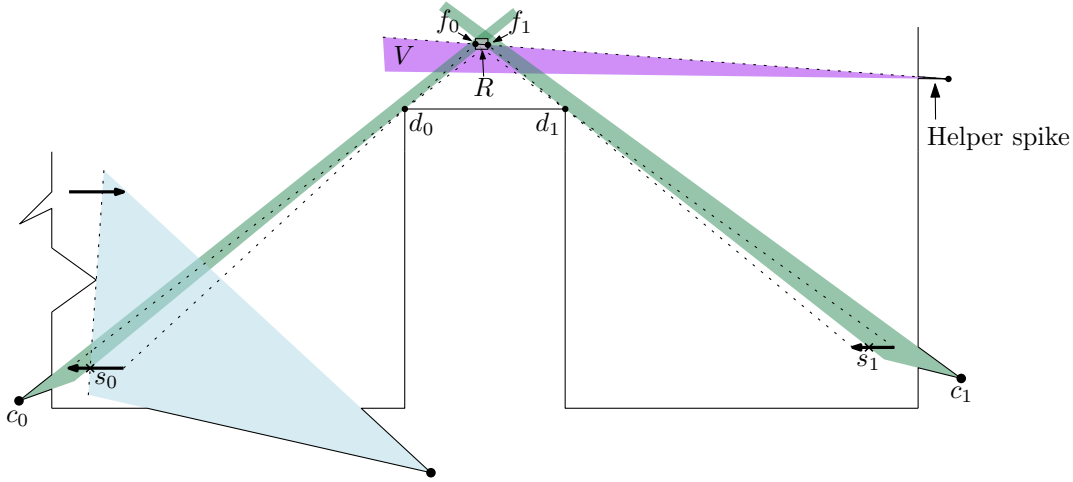}
\caption{%
    The setting of Lemma~\ref{lemma:copyLemma}.
    Let $x_0,x_1$ be the values specified by $U_0$ and $U_1$ on $r_0$ and $r_1$, respectively.
    Then $x_0 \geq  x_1$.
}
\label{fig:copy}
\end{figure}

\subsection{Corridor for copying variable segments into a gadget}\label{sec:copy}

In this section, we describe the construction of a corridor, the purpose of
which is to make inequalities between variable segments from the base pockets and a gadget.
Inside each gadget there are three (or two) variable segments $r_i, r_j , r_l$ (or $r_i, r_j$) corresponding to three (or two) variable segments from the base pockets $s_i,s_j,s_l$ (or $s_i,s_j$).
For each $\sigma\in\{i,j,l\}$, we say that $s_\sigma$ and $r_\sigma$ are \emph{partners}.
The pairs of partners is a matching between the variable segments in the base pockets and those in the gadgets.
%
We require that for each $\sigma \in \{i,j,l\}$ the partners $s_\sigma,
r_\sigma$ have the same orientation since this is required for the way we implement inequalities, as described by Lemma~\ref{lemma:copyLemma}.

We assume $i < j < l$.
We describe here how to construct a corridor that ensures that we have inequalities between the segments
$r_i, r_j, r_l$ and the segments $s_i, s_j, s_l$, respectively.
For any $\sigma\in \{1,\ldots,N\}$, recall that $c_{\sigma}^s$ is the propagation corner responsible for $s_\sigma$, and for each $\sigma\in\{i,j,l\}$, define $c_\sigma^r$ to be the lever corner responsible for $r_\sigma$ to the right, which will be a corner inside the gadget containing $r_\sigma$.
It will follow from Lemma~\ref{lemma:copyLemma} that if $r_\sigma$ and $s_\sigma$ are right-oriented, then $\val {s_\sigma}{c_\sigma^s} \leq \val {r_\sigma}{c_\sigma^r}$.
Otherwise, if they are both left-oriented, we will have $\val {s_\sigma}{c_\sigma^s} \geq \val {r_\sigma}{c_\sigma^r}$.
%


The construction of the corridor relies on similar techniques as described in~\cite{abrahamsen2018art}.
In~\cite{abrahamsen2018art} it was described how to get equality between values represented on each of the pairs of guard segments $(s_i,r_i),(s_j,r_j),(s_l,r_l)$.
In our case, it will suffice to ensure that there is an inequality between the values, using Lemma~\ref{lemma:copyLemma}.
This makes the upper wall of our corridor simpler.

The construction can be generalized for making inequalities between an arbitrary subset of
variable segments, but since the gadgets we use only handle two or three
variable segments at a time, we explain the construction only in the setting of
three segments.
The construction for two segments is analogous but simpler.

The lower wall of the corridor of the gadget is a horizontal edge $d_0d_1$ of length $2$, where $d_0$ is on
the vertical line $\ell_r$ and $d_1$ is to the right of $d_0$.
The upper wall of the corridor is more complicated, and it will be described later.
It has the left endpoint at $d^0\mydef d_0+(0,\frac {4.5}{C N^2})$, and the right endpoint $d^1\mydef d_1+(0,\frac {2.25}{C N^2})$.
We refer to the vertical line segments $d_0d^0$ and $d_1d^1$ as the \emph{doors} of the corridor.

\subsubsection{The principles of the corridor construction}

\begin{figure}
\centering
\includegraphics[page=6]{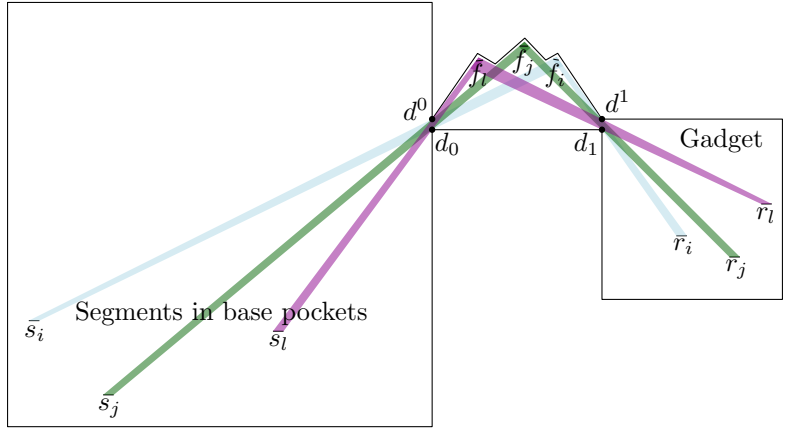}
\caption{%
This figure sketches the corridor construction in a much simplified manner.
The corner $d_0$ serves as a pivot for $s_i,s_j,s_l$ as well as the three critical segments $f_i,f_j,f_l$, and $d_1$ serves as a pivot for $r_i,r_j,r_l$ and $f_i,f_j,f_l$.
The doors $d_0d^0$ and $d_1d^1$ to the corridor are sufficiently small so that the critical segment $f_\sigma$ of the pair $s_{\sigma},r_{\sigma}$ are not seen by the responsible corners of other segments, so they must be covered by pieces covering the corners $c_\sigma^s$ and $c_\sigma^r$.%
}
\label{fig:SimplifiedCopying}
\end{figure}

Figure~\ref{fig:SimplifiedCopying} sketches the construction of the corridor.
The corridor contains three critical segments $f_i,f_j,f_l$, such that the responsible corners of $f_\sigma$ is $c_\sigma^s$ and $c_\sigma^r$.
For each $\sigma\in\{i,j,l\}$, the corner $c_\sigma^s$ is left-responsible for $s_\sigma$ and for $f_\sigma$, with the pivot $d_0$.
Similarly, $c_\sigma^r$ is right-responsible for $r_\sigma$ and $f_\sigma$ and has pivot $d_1$.
We construct the chain of $\poly$ from $d^0$ to $d^1$ bounding the corridor from above so that a reflex corner goes down in between the critical segments which blocks the visibility between the critical segments.
This is to make sure that we satisfy the bar intersection promise described in Section~\ref{sec:mrcc} (that is, by preventing that bars from different marked rectangles overlap, we do not have to introduced more marked rectangles).

The biggest challenge is to ensure the bi-cover invariant of the critical segments $f_i,f_j,f_l$.
In particular, that for each $\sigma\in\{i,j,l\}$, the only marked corners that see $f_\sigma$ is the propagation corner $c_\sigma^s$ responsible for $s_\sigma$ and the lever corner $c_\sigma^r$ responsible for $r_\sigma$.
This is obtained by choosing the doors $d_0d^0$ and $d_1d^1$ correctly.
In particular, we will ensure that the vertical edge of $\poly$ directly below the door $d_0d^0$ blocks visibility from all propagation corners $c_k^s$ for $k>\sigma$, from which the direction to $d_0$ is more vertical than from $c_\sigma^s$.
Likewise, the vertical edge of $\poly$ directly above $d_0d^0$ blocks visibility from all propagation corners $c_k^s$ for $k<\sigma$, from which the direction is more horizontal than from $c_\sigma^s$.
The door $d_1d^1$ in the other side of the pocket similarly ensures that only the intended corner $c_\sigma^r$ can see $f_\sigma$.
We can therefore establish the bi-cover invariant and use Lemma~\ref{lemma:copyLemma}.

The main idea to achieve the above property is to make each door $d_i d^i$, $i\in\{0,1\}$, of the corridor sufficiently small.
However, we cannot place the point $d^0$ arbitrarily close to $d_0$, since then $c_\sigma^s$ will not see enough of the critical segment $f_\sigma$, which is needed. 
By placing the corridor sufficiently far away from the segments of the base pockets, we obtain that the visibility lines from the propagation corners $c_\sigma^s$ through the left door endpoints $d_0,d^0$ are almost parallel to each other and can be described by a simple pattern.
The same holds for the lines from the lever corners $c_\sigma^r$ in the gadgets through the right door endpoints $d_1,d^1$.
The pattern enables us to construct the corridor with the desired properties.

In the following, we introduce objects that make it possible to describe the upper corridor wall and prove that the construction works as intended.

\subsubsection{Describing visibilities in the corridor}\label{sec:describingVisibilities}

%
Recall that $o\mydef c_1^s$ is the propagation corner of the first variable segment $s_1$.
We define the point $\proj o$ as the point on the ray $\overrightarrow{od_0}$ such that $x(\proj o)=x(d_0)+1$.
We consider the rays from the propagation corners in the base pockets through one of the points $d_0$ and $d^0$.
They all pass through a small area around $\proj o$ and have directions close to $(1,1)$.
Similarly, the rays from the gadget (attached to the right of the corridor) through the points $d_1$ and $d^1$ pass through the same region around $\proj o$, and they have directions close to $(-1,1)$.
In this section we introduce some thin slabs with directions $(1,1)$ and $(-1,1)$, and these will contain the rays close to $\proj o$.
The slabs will be used to reason about which lever corners can see which critical segments in the corridor.
%
Thus, the slabs are introduced in order to handle the ``uncertainty'' and irregularity of the rays.

The analysis resembles that from~\cite{abrahamsen2018art}, but is slightly more complicated, since our variable segments have different lengths and are placed at different heights contrary to the guard segments in~\cite{abrahamsen2018art} that are equally long and placed equidistantly on a line.
Furthermore, we need to consider the rays from the propagation corners and variable segment endpoints, whereas in~\cite{abrahamsen2018art} only the guard segment endpoints were considered.
We therefore develop the more general technical Lemma~\ref{lemma:proj} below to deal with this complexity.
The lemma is illustrated in Figure~\ref{fig:proj}.

Given a point $q$ and a vector $v\neq (0,0)$, the \emph{slab} $S(q,v,r)$ consists of all
points at distance at most $r$ from the line through $q$ parallel to $v$.
The \emph{center} of the slab $S(q,v,r)$ is the line through $q$ parallel to $v$.
Let us define vectors $\alpha\mydef (1,1)$ and $\beta\mydef (-1,1)$.

\begin{figure}
	\centering
	\includegraphics[page=19]{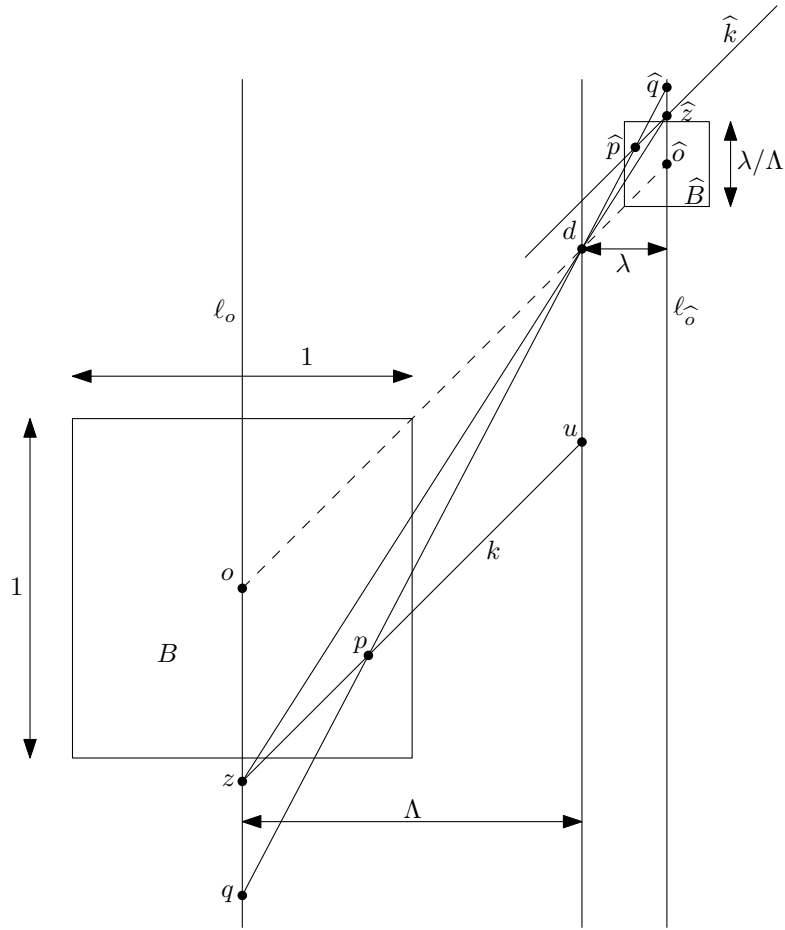}
	\caption{The setting of Lemma~\ref{lemma:proj}.}
	\label{fig:proj}
\end{figure}

\begin{lemma}\label{lemma:proj}
Let $B$ be the axis-parallel unit square with center $o\mydef (0,0)$.
Let $\Lambda>0$ and $\lambda>0$, and let $d$ be a point of the form $\Lambda\alpha+(0,h)$ where $|h|\leq 1$.
Let $\ell_{\proj o}$ be the vertical line with $x$-coordinate $\Lambda+\lambda$.
Let $\proj o$ be the intersection of $\overrightarrow{od}$ and $\ell_{\proj o}$ and let $\proj B$ be the axis-parallel square with center $\proj o$ and edge length $\lambda/\Lambda$.
For a point $p$ to the left of $d$, define $\proj p\mydef \proj o-p\lambda/\Lambda$.
For some $\eps=O(\lambda/\Lambda^2)$, it then holds that for every point $p\in B$, we have $\overrightarrow{pd}\cap \proj B\subset S(\proj p,\alpha,\eps)$.
\end{lemma}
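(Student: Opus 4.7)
The plan is to exploit the fact that the slab $S(\proj p,\alpha,\eps)$, having direction $\alpha=(1,1)$, has as its center the line $y-x=\proj p_y-\proj p_x$; so membership in the slab is equivalent to the scalar quantity $y-x$ deviating from $\proj p_y-\proj p_x$ by at most $\sqrt 2\,\eps$. Instead of reasoning about perpendicular distances, I will therefore reduce everything to a one-dimensional calculation of $y-x$ along the ray $\overrightarrow{pd}$. This is the natural reduction because $y-x$ is an invariant of lines parallel to $\alpha$, and the other coordinate (along $\alpha$) plays no role in the slab condition.

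With coordinates in which $o=(0,0)$ and $d=(\Lambda,\Lambda+h)$, a direct computation of where $\overrightarrow{od}$ meets $\ell_{\proj o}$ gives $\proj o=(1+\lambda/\Lambda)d$. Hence $\proj o_y-\proj o_x=h(1+\lambda/\Lambda)$, and from $\proj p=\proj o-p\lambda/\Lambda$ I read off $\proj p_y-\proj p_x=h(1+\lambda/\Lambda)+(p_x-p_y)\lambda/\Lambda$. Parametrizing $q=p+t(d-p)$ on the ray yields the linear expression $q_y-q_x=(p_y-p_x)+t(h+p_x-p_y)$. Subtracting and collecting terms produces the clean factorization
\[
(q_y-q_x)-(\proj p_y-\proj p_x)=(h+p_x-p_y)\bigl(t-1-\lambda/\Lambda\bigr),
\]
so the whole problem reduces to bounding these two factors.

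The first factor is immediately controlled by $|h|+|p_x|+|p_y|\le 2$ since $p\in B$ and $|h|\le 1$. For the second factor, the requirement $q\in\proj B$ forces $q_x$ within $\lambda/(2\Lambda)$ of $\Lambda+\lambda$, which pins $t$ within $O(\lambda/\Lambda^2)$ of the value $t^{*}\mydef(\Lambda+\lambda-p_x)/(\Lambda-p_x)$ corresponding to $q\in\ell_{\proj o}$. A one-line manipulation gives $t^{*}-1-\lambda/\Lambda=\lambda p_x/[(\Lambda-p_x)\Lambda]$, which is $O(\lambda/\Lambda^2)$ because $|p_x|\le 1/2$ and $\Lambda$ is large. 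Combining the two contributions yields $|t-1-\lambda/\Lambda|=O(\lambda/\Lambda^2)$, and hence $|q_y-q_x-(\proj p_y-\proj p_x)|=O(\lambda/\Lambda^2)$. Dividing by $\sqrt 2$ converts this into the perpendicular distance from $q$ to the center of $S(\proj p,\alpha,\eps)$, so some $\eps=O(\lambda/\Lambda^2)$ works uniformly in $p\in B$.

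The main obstacle I foresee is simply the algebraic bookkeeping needed to uncover the factorization $(h+p_x-p_y)(t-1-\lambda/\Lambda)$; once it is in hand, each factor is bounded by a routine estimate that makes crucial but elementary use of the hypotheses $p\in B$ (to control $|p_x|,|p_y|$), $|h|\le 1$, and $q\in\proj B$ (to pin $t$ near $1+\lambda/\Lambda$). No genuinely geometric input beyond the reduction to the scalar $y-x$ is required.
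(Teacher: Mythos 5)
Your proof is correct, and it takes a genuinely different route from the paper's. The paper argues geometrically: it intersects the line $\overleftrightarrow{pd}$ with the vertical line through $o$, chains two similar-triangle identities to bound the \emph{vertical} deviation between the ray and the center of $S(\proj p,\alpha,\eps)$ at the line $\ell_{\proj o}$ by $\lambda/\Lambda\cdot\|pz\|\|du\|/\|pu\| = O(\lambda/\Lambda^2)$, and then extends from the center line to all of $\proj B$ by a somewhat informal ``at most twice the deviation at the center line'' step. You instead observe that slab membership depends only on the linear functional $y-x$, which collapses the problem to one dimension, and you obtain the \emph{exact} identity $(q_y-q_x)-(\proj p_y-\proj p_x)=(h+p_x-p_y)(t-1-\lambda/\Lambda)$ along the parametrized ray; bounding the two factors (the first by $2$ using $p\in B$ and $|h|\le 1$, the second by $O(\lambda/\Lambda^2)$ using $q\in\proj B$ to pin $t$) handles every point of $\proj B$ uniformly, with the factor $\sqrt 2$ converting to perpendicular distance. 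What your approach buys is a cleaner, fully explicit computation that avoids both the auxiliary construction ($z$, $u$, the two similar-triangle steps) and the final doubling argument; what the paper's buys is a coordinate-light picture that generalizes by symmetry to the reflected use in Lemma~\ref{lemma:r-slabs}. Both arguments implicitly require $\Lambda$ bounded away from small values (you use $\Lambda-p_x=\Theta(\Lambda)$, the paper uses $\|pu\|>\Lambda-1$), which is harmless given how the lemma is applied.
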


\begin{proof}
Let $\ell_o$ be the vertical line containing $o$.
For a point $p\in B$, note first that $\proj p$ is on the line $\overleftrightarrow{pd}$.
Let $q$ be the intersection point of $\overleftrightarrow{pd}$ and $\ell_o$.
Let $k$ and $\proj k$ be the lines through $p$ and $\proj p$, respectively, with direction $\alpha$, and note that $\proj k$ is the center of any slab $S(\proj p,\alpha,\eps)$.
Let $z$ be the intersection point of $\ell_o$ and $k$. 
We first bound the distance $\|\widehat z\proj q\|$, which is the vertical deviation between $\proj k$ and $\overrightarrow{pd}$ at $\ell_{\proj o}$.
Since the triangles $\proj p \proj z\proj q$ and $pzq$ are similar, we get
$$
\|\proj z\proj q\|=\|zq\|\frac{\|\proj p\proj q\|}{\|pq\|}=\|zq\|\lambda /\Lambda.
$$

Let $u$ be the intersection point of $\overrightarrow{zp}$ and the vertical line through $d$.
Since the triangles $pzq$ and $pud$ are similar, we get
$$
\|\proj z\proj q\|=\lambda/\Lambda\cdot \frac{\|pz\|\|du\|}{\|pu\|}.
$$

Here, we have $\|pz\|\leq 2$, $\|du\|=O(1)$, and $\|pu\|\geq \sqrt 2(\Lambda-1)>\Lambda-1$, so we get $\|\proj z\proj q\|=O(\lambda/\Lambda^2)$.
This means that at the line $\ell_{\proj o}$, the ray $\overrightarrow{pd}$ is contained in the slab $S(\proj p,\alpha,\eps)$, for $\eps=O(\lambda/\Lambda^2)$.
The maximum deviation is realized when $\proj p$ is on the left or right edge of $\proj B$, and then the vertical distance between $\overrightarrow{pd}$ and $\proj k$ is maximal on the vertical line through the right or left edge of $\proj B$, respectively, which is at most twice the deviation at the center line $\ell_{\proj o}$.
Hence, the ray $\overrightarrow{pd}$ is contained in the slab $S(\proj p,\alpha,\eps)$ for $\eps=O(\lambda/\Lambda^2)$ in all of~$\proj B$.
\end{proof}

Define the endpoints $\bar a_\sigma,\bar b_\sigma$ of the variable segments in the gadget such that $r_\sigma=\bar a_\sigma \bar b_\sigma$ with $\bar a_\sigma$ to the left. 
Define the relative origin of the gadget as $\bar o\mydef c_l^r$, which is the right lever corner responsible for the segment $r_l$.
We place the gadget so that the point $\bar o$ is on the ray $\overrightarrow{\proj od_1}$ with $x(\bar o)=x(d_1)+1$.

Let us introduce a grid of slabs parallel to $\alpha$ and $\beta$.
Recall that there are variable segments $s_i$ for a subset of the indices $i\in\{1,\ldots,N\}$.
In the following we do not care whether a particular segment $s_i$ exists or not.
We make the construction such that even if they all exist, the construction works, and then it will also work when some are missing.

We will later (using Lemma~\ref{lemma:proj}) choose a constant $C$, and using that, we define
\[ \nu\mydef \frac{13.5}{C N^2}, \ \rho\mydef \frac{\nu}{9}=\frac{1.5}{C N^2}\text{, and }\eps\mydef \frac{\rho}{48} = \frac{1}{32 C N^2}.\]
For each $\sigma\in\{1,\ldots,N\}$ and $\gamma \in \{0,1,2,3\}$ we define a slab
$$L^{\gamma}_\sigma \mydef S(\proj o+(0,(\sigma-1) \nu + \gamma \rho),\alpha,\varepsilon),$$
which we denote as an \emph{$L$-slab}.
Let $\tau(i)\mydef 2$, $\tau(j)\mydef 1$, and $\tau(l)\mydef 0$. For each $\sigma\in\{i,j,l\}$ and $\gamma \in \{0,1,2,3\}$ we define a slab
$$R^{\gamma}_\sigma \mydef S(\,\proj o+(0,\tau(\sigma) \nu + \gamma \rho),\beta,\varepsilon),$$
which we denote as an \emph{$R$-slab}.
See Figure~\ref{fig:slabs} for an illustration of the area where the $L$-slabs intersect the $R$-slabs.

In the case of gadgets with just two guard segments $r_i,r_j$, we define the relative origin of the gadget to be $\bar o\mydef c_j^r$, which we place on the ray $\overrightarrow{\proj o d_1}$ with $x(\bar o)=x(d_1)+1$,
and we define $\tau(i)\mydef 1$ and $\tau(j)\mydef 0$.
Then, the $R$-slabs $R_\sigma^\gamma$ are defined as above for $\sigma\in\{i,j\}$.

\begin{figure}[htbp]
	\centering
	\includegraphics[page=9]{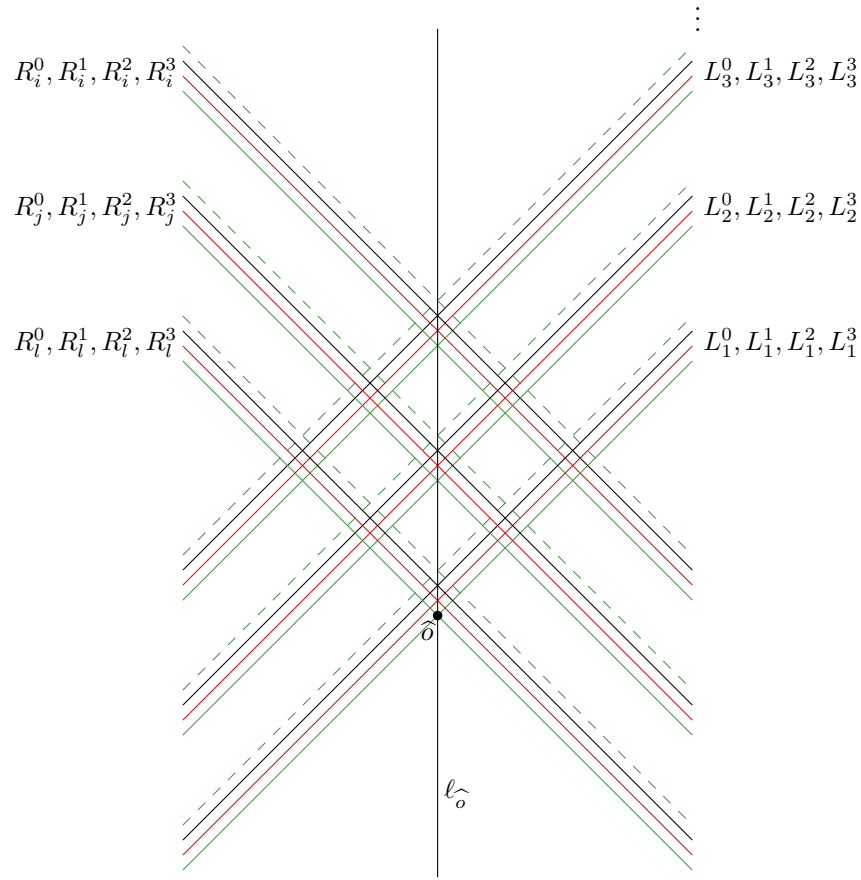}
	\caption{A part of the grid of $L$- and $R$-slabs.
	Each slab is one line in the grid.
	Full green lines are $L_\sigma^0$ and $R_\sigma^0$.
	Red lines are $L_\sigma^1$ and $R_\sigma^1$.
	Black lines are $L_\sigma^2$ and $R_\sigma^2$.
	Dashed green lines are $L_\sigma^3$ and $R_\sigma^3$.
}
	\label{fig:slabs}
\end{figure}

Let $\proj V$ be the square $\proj{o}+[-N\nu,N\nu]\times [-N\nu,N\nu]$.
Let us denote by $\mathcal{L}$ all the rays with endpoints at a propagation corner $c_\sigma^s$, $\sigma\in\{1,\ldots,N\}$, going through $d_0$ or $d^0$, and rays with endpoints $a_\sigma$ or $b_\sigma$ through the point $d_0$.
We will use Lemma~\ref{lemma:proj} to ensure that all rays in $\mathcal{L}$ are inside a predefined slab within the square~$\proj V$, which is made precise by the following lemma.

\begin{lemma}\label{lemma:slabs}
We can choose $C$ large enough such that for any corridor to be attached to the polygon $\poly$, the following properties are satisfied.
\begin{enumerate}
\item\label{lslabs:0}
The intersection of any $L$-slab with the line $\ell_{\proj o}$ is contained in $V$.
\item\label{lslabs:2}
For each $\sigma\in\{1,\ldots,N\}$, it holds that $\overrightarrow{c_\sigma^s d_0}\cap V\subset L^{0}_\sigma,\overrightarrow{a_\sigma d_0}\cap V\subset L^{1}_\sigma, \overrightarrow{b_\sigma d_0}\cap V\subset L^{2}_\sigma$, and $\overrightarrow{c_\sigma^s d^0}\cap V\subset L^{3}_\sigma$.
\end{enumerate}
\end{lemma}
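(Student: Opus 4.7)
The plan is to reduce Lemma~\ref{lemma:slabs} to an application of Lemma~\ref{lemma:proj}. By Observation~\ref{obs:basePocketLines}, each source point among $c_\sigma^s$, $a_\sigma$, $b_\sigma$ has the form $o + (13.5(\sigma - 1) + \theta,\, 0) + t\alpha$ for some $t$, with $\theta \in \{0,\, 1.5 + \mu,\, 3 - \mu\}$ and $\mu \in [0, \tilt/2]$. I would place the corridor so that $d_0$ sits at $o + \Lambda\alpha$ (up to a vertical perturbation that falls within the $|h|\leq 1$ hypothesis of Lemma~\ref{lemma:proj}) with $\Lambda \mydef CN^2$; since $\ell_{\proj o}$ is the vertical line through the midpoint of the segment $d_0d_1$ of length $2$, we have $\lambda = 1$, giving the key ratio $\lambda/\Lambda = 1/(CN^2)$.

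With this ratio fixed, the core check is that the linearization $\proj p = \proj o - p\lambda/\Lambda$ provided by Lemma~\ref{lemma:proj} essentially lies on the center line of the intended $L$-slab. Because $L$-slabs are invariant under translation in direction $\alpha$, only the non-$\alpha$ component $(13.5(\sigma - 1) + \theta,\, 0)$ of $p - o$ matters, and scaling by $\lambda/\Lambda = 1/(CN^2)$ turns this into a horizontal shift of $(\sigma - 1)\nu + \gamma\rho \pm \mu/(CN^2)$ from $\proj o$, with $\gamma\in\{0,1,2\}$ corresponding to the three types of source points. This matches exactly the horizontal offsets of the centers of $L^0_\sigma, L^1_\sigma, L^2_\sigma$, up to a perpendicular-to-$\alpha$ discrepancy of at most $\mu/(\sqrt 2\, CN^2) \leq \tilt/(2\sqrt 2\, CN^2) \ll \eps$. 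Lemma~\ref{lemma:proj}, applied in its natural generalization to a source box of side $M = \Theta(N)$ covering all base pockets, encloses the actual ray in a slab of half-width $O(M^2\lambda/\Lambda^2) = O(1/(C^2N^2))$, which is also $\ll \eps$ for $C$ large. Adding the two deviations keeps the ray inside $L^\gamma_\sigma$ throughout $\proj V$.

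For the fourth ray through $d^0 = d_0 + (0, 3\rho)$ (using $4.5/(CN^2) = 3\rho$), a short similar-triangles computation shows that raising the apex by $3\rho$ shifts the $y$-intercept at $\ell_{\proj o}$ by $3\rho \cdot (1 + \lambda/\Lambda)$, which perpendicular-to-$\alpha$ takes the center of $L^0_\sigma$ onto that of $L^3_\sigma$ up to an error of order $\rho/(CN^2)$ that is again absorbed in $\eps$. Property~\ref{lslabs:0} is then immediate: the center of any $L$-slab meets $\ell_{\proj o}$ at a point $\proj o + (0, \eta)$ with $0 \leq \eta \leq (N-1)\nu + 3\rho < N\nu$, and the slab's vertical half-extent at $\ell_{\proj o}$ is only $\eps\sqrt 2$, which is much smaller than $N\nu$. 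The main obstacle is the parameter bookkeeping: Lemma~\ref{lemma:proj} is stated for a unit source square whereas our sources span a box of side $\Theta(N)$, so one must use its natural $M$-scaled generalization (whose error bound scales like $M^2$); the constant $C$ must then be chosen last, after $M$ and the hidden constant of Lemma~\ref{lemma:proj}, to ensure that both the $\mu$-induced discrepancy and the Lemma~\ref{lemma:proj} error sit comfortably below $\eps$.
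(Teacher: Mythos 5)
Your proof is correct and follows essentially the same route as the paper: both reduce to Lemma~\ref{lemma:proj} after rescaling the base-pocket region to the lemma's unit source square (your ``$M$-scaled generalization'' with $M=\Theta(N)$ is exactly the paper's rescaling by $1/27N$), using $d_0$ for the first three rays and $d^0$ for the fourth, and deriving property~\ref{lslabs:0} directly from the slab offsets being below $N\nu$. You in fact supply more of the bookkeeping than the paper does (the $\mu$-induced offset and the $d^0$ shift); the only nitpick is that the $\mu$-discrepancy is about $\eps/(2\sqrt2)$, so ``comfortably below $\eps$'' rather than ``$\ll\eps$'', but the total error still fits inside the slab for $C$ large.
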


\begin{proof}
The first property follows since the centers of the slabs intersect $\ell_{\proj o}$ at the points $\proj o,\proj o+(0,\nu),\ldots,\proj o+(0,(N-1)\nu)$.

To prove the second property, we use Lemma~\ref{lemma:proj}, where the square $V\mydef c_1^s+[-13.5N,13.5N]\times [-13.5N,13.5N]$ plays the role as the unit square $B$ and $\proj V$ plays the role as $\proj B$.
In order to apply the lemma, we therefore scale $V$, $\proj V$, and all other objects by a factor of $1/27N$.
When using the lemma for the first three rays, we use $d_0$ as $d$, and for the first last, we use $d^0$.
Recall that each variable segment in the base pockets has a length in the interval $[(1-\tilt)3/2,3/2]$ with $\tilt\mydef 1/32=\eps CN^2$, and that the center points of the variable segments in each pocket are on a vertical line.
We get from the lemma that choosing $C$ large enough, property~\ref{lslabs:2} follows.
\end{proof}

Consider a gadget containing the variable segments $r_\sigma\mydef \bar a_\sigma\bar b_\sigma$, $\sigma\in\{i,j,l\}$.
Recall that the relative origin of a gadget is the point $\bar o\mydef c_l^r$, which is placed on the ray $\overrightarrow{\proj o d_1}$ with $x(\bar o)=x(d_1)+1$.
We state properties of gadgets with three variable segments $r_i,r_j,r_l$, but there are natural analogues for gadgets with just two variable segments $r_i,r_j$.
The following observation, which is analogous to Observation~\ref{obs:basePocketLines}, will be clear from the construction of each gadget.

\begin{observation}\label{obs:gadgetLines}
For each $\sigma\in\{i,j,l\}$, the following holds:
\begin{itemize}
\item
$c_\sigma^r$ is on the line through $\bar o+(-\tau(\sigma)\nu,0)$ with direction $\beta$,

\item
$\bar b_i$ is on the line through $\bar o+(-\tau(\sigma)\nu-\rho,0)$ with direction $\beta$,

\item
$\bar a_i$ is on the line through $\bar o+(-\tau(\sigma)\nu-2\rho,0)$ with direction $\beta$.
\end{itemize}
\end{observation}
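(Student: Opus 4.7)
The plan is to establish Observation~\ref{obs:gadgetLines} by design: the lever corners $c_\sigma^r$ and the variable segment endpoints $\bar a_\sigma, \bar b_\sigma$ inside each gadget are placed so that the stated alignment with the grid of $\beta$-directed lines holds by construction. The observation thus functions as a placement convention on the gadget side, exactly analogous to Observation~\ref{obs:basePocketLines} on the base-pocket side, which every individual gadget built in later sections must respect.

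Concretely, for each type of gadget (addition, inversion, and the inequality gadget used to wire the base pockets together) I would proceed as follows. First, fix the relative origin $\bar o \mydef c_l^r$ (or $\bar o \mydef c_j^r$ in the two-segment case) on the ray $\overrightarrow{\proj o d_1}$ with $x(\bar o) = x(d_1)+1$, as already prescribed. Next, for each $\sigma \in \{i,j,l\}$, place the lever corner $c_\sigma^r$ on the line through $\bar o + (-\tau(\sigma)\nu, 0)$ with direction $\beta$. With $\tau(l)=0$, $\tau(j)=1$, $\tau(i)=2$, the corner $c_l^r$ lands at $\bar o$ itself and the other two are offset by one and two steps of $\nu$ in the direction orthogonal to $\beta$. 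Finally, since $r_\sigma$ is a horizontal segment whose right-responsible lever corner is $c_\sigma^r$ with pivot $d_1$, choose the height of $r_\sigma$ and take $\bar b_\sigma$ and $\bar a_\sigma$ to be the unique intersections of this horizontal line with the two parallel $\beta$-lines through $\bar o + (-\tau(\sigma)\nu - \rho, 0)$ and $\bar o + (-\tau(\sigma)\nu - 2\rho, 0)$, respectively.

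The only thing that has to be verified, and that can only be verified gadget by gadget, is that once the $c_\sigma^r$ and the endpoints of each $r_\sigma$ are pinned to these positions, the remaining ingredients of the gadget---its other marked corners, its helper corners and helper rectangles, the pivots, and the polygon boundary bounding it---can still be laid out consistently in a small neighbourhood of $\bar o$. The main obstacle is therefore not algebraic but bookkeeping: for each of the five gadget types constructed in the later sections, one must check that the scale parameters $\nu = \tfrac{13.5}{CN^2}$ and $\rho = \nu/9$ leave enough room to accommodate the internal geometry, and that no internal feature of the gadget is forced off the prescribed grid. Once each individual construction has been shown to respect this placement convention---an essentially mechanical verification for each gadget---Observation~\ref{obs:gadgetLines} follows immediately.
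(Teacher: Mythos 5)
Your proposal matches the paper's treatment: the paper offers no separate proof of Observation~\ref{obs:gadgetLines}, stating only that it ``will be clear from the construction of each gadget,'' i.e., it is a placement convention that each gadget is built to satisfy, exactly as you describe. Your added detail on pinning $c_\sigma^r$ and the endpoints $\bar a_\sigma,\bar b_\sigma$ to the $\beta$-directed grid lines, and deferring the remaining bookkeeping to the per-gadget constructions, is consistent with how the paper handles it.
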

 
We now get the following from lemma.

\begin{lemma}\label{lemma:r-slabs}
We have the following two properties.
\begin{enumerate}
\item\label{rslabs:1} The intersection of any $R$-slab with the line $\ell_{\proj o}$ is contained in $V$.
\item\label{rslabs:2} For each $\sigma\in\{i,j,l\}$, it holds that $\overrightarrow{c_\sigma^r d_1}\cap V\subset R^{0}_\sigma,\overrightarrow{\bar b_\sigma d_1}\cap V\subset R^{1}_\sigma,\ \overrightarrow{\bar a_\sigma d_1}\cap V\subset R^{2}_\sigma$, and $\overrightarrow{c_\sigma^r d^1}\cap V\subset R^{3}_\sigma$.
\end{enumerate}
\end{lemma}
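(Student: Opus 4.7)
The proof will mirror the argument of Lemma~\ref{lemma:slabs}, exploiting the symmetry between the base-pocket side (rays through $d_0, d^0$ whose directions are close to $\alpha$) and the gadget side (rays through $d_1, d^1$ whose directions are close to $\beta$). Observation~\ref{obs:gadgetLines} plays the same role on the gadget side that Observation~\ref{obs:basePocketLines} plays on the base-pocket side; in both cases the relevant points lie on lines with a fixed direction whose perpendicular offsets exactly match the vertical spacing of the slab centers along $\ell_{\proj o}$.

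For property~\ref{rslabs:1}, I would simply note that the center line of $R_\sigma^\gamma$ meets $\ell_{\proj o}$ at $\proj o+(0,\tau(\sigma)\nu+\gamma\rho)$. Because $\tau(\sigma)\in\{0,1,2\}$ and $\gamma\in\{0,1,2,3\}$, this intersection point has vertical offset from $\proj o$ of at most $2\nu+3\rho$, and the slab, being of width $\eps$, adds only an additional contribution of $O(\eps)$. All of these are tiny compared to $N\nu$, so the intersection lies inside $V$.

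For property~\ref{rslabs:2}, I would apply Lemma~\ref{lemma:proj} to the reflected picture. Concretely, let $T$ be the reflection through the vertical line $\ell_{\proj o}$; it swaps the directions $\alpha$ and $\beta$ and sends $d_1$ onto the ray $\overrightarrow{\proj o\bar o}$ reflected, so that (after scaling $\proj V$ to the unit square) $\bar o$ plays the role of the origin $o$ in Lemma~\ref{lemma:proj} and $d_1$ (resp.\ $d^1$) plays the role of $d$. Observation~\ref{obs:gadgetLines} guarantees that $c_\sigma^r,\bar b_\sigma,\bar a_\sigma$ sit on lines parallel to $\beta$ offset from $\bar o$ by $\tau(\sigma)\nu$, $\tau(\sigma)\nu+\rho$, $\tau(\sigma)\nu+2\rho$ respectively, which are exactly the offsets prescribing the center lines of $R_\sigma^0, R_\sigma^1, R_\sigma^2$. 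Similarly, the extra ``$d^1$'' door yields the $R_\sigma^3$ case since $d^1 = d_1 + (0, \tfrac{2.25}{CN^2})$ gives the expected one-$\rho$ shift at $\ell_{\proj o}$. Lemma~\ref{lemma:proj} then provides the same $\eps = O(\lambda/\Lambda^2)$ bound, and picking the constant $C$ sufficiently large (large enough to handle both Lemma~\ref{lemma:slabs} and this lemma) forces each of the rays $\overrightarrow{c_\sigma^r d_1},\overrightarrow{\bar b_\sigma d_1},\overrightarrow{\bar a_\sigma d_1},\overrightarrow{c_\sigma^r d^1}$ to lie within the appropriate R-slab throughout $\proj V$.

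The only non-routine point is the legitimacy of applying Lemma~\ref{lemma:proj} in this mirrored setting, because that lemma was stated specifically for rays in a direction close to $\alpha$. I expect this to be essentially immediate: the proof of Lemma~\ref{lemma:proj} uses only similar triangles and is invariant under horizontal reflection, so running it verbatim on the reflected configuration gives the desired statement for direction $\beta$. Beyond that, it is only a matter of bookkeeping to verify that the door widths $\tfrac{4.5}{CN^2}$ and $\tfrac{2.25}{CN^2}$ on the gadget side and the offsets dictated by $\tau(\sigma)\in\{0,1,2\}$ line up correctly with the centers of the R-slabs, which they do by construction.
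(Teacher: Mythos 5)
Your proposal is correct and follows essentially the same route as the paper: property~\ref{rslabs:1} by inspecting where the slab centers meet $\ell_{\proj o}$, and property~\ref{rslabs:2} by applying Lemma~\ref{lemma:proj} reflected in a vertical axis, with the square centered at $\bar o=c_l^r$ playing the role of $B$, $\proj V$ playing the role of $\proj B$, and $d_1$ (resp.\ $d^1$) playing the role of $d$, then taking $C$ large enough. One small bookkeeping slip: since the gadget sits at horizontal distance about $1$ from $d_1$, the same as $\ell_{\proj o}$, the door offset $2.25/CN^2$ roughly doubles at $\ell_{\proj o}$, giving the shift $4.5/CN^2=3\rho$ needed to land in $R^3_\sigma$ rather than the ``one-$\rho$ shift'' you state.
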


\begin{proof}
The first property follows since the centers of the slabs intersect $\ell_{\proj o}$ at the points $\proj o,\proj o+(0,\nu),\proj o+(0,2\nu)$.

To prove the second property, we again use Lemma~\ref{lemma:proj}, but this time reflected along a vertical axis.
The square $V\mydef c_l^r+[-N\nu,N\nu]\times [-N\nu,N\nu]$ plays the role as the unit square $B$, and $\proj V$ will play the role as $\proj B$, so in order to apply the lemma, we scale $V$, $\proj V$, and all other objects by a factor of $1/2N\nu=\frac{C N}{27}$.
For the first three rays (that pass through $d_1$), we use $d$ in the lemma as $d_1$.
For the last one, we use $d^1$.
We then get from the lemma that choosing $C$ large enough, property~\ref{lslabs:2} follows.
\end{proof}

We can now conclude the following, which is obvious from property~\ref{lslabs:0} of Lemma~\ref{lemma:slabs} and property~\ref{rslabs:1} of Lemma~\ref{lemma:r-slabs}, since the directions $\alpha$ and $\beta$ of the slabs have slopes $1$ and $-1$, respectively.

\begin{lemma}
The intersection of any $L$-slab with any $R$-slab is contained in $\proj V$.
\end{lemma}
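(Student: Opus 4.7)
The plan is to pin down where the two slab centres meet and then observe that each slab is so thin relative to the dimensions of $\proj V$ that the entire slab intersection sits in a tiny neighbourhood of that meeting point, which is well inside $\proj V$.

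First, I would identify the centres explicitly. For any $L$-slab $L_\sigma^\gamma$, the centre is the line through $\proj o + (0, y_L)$ with direction $\alpha = (1,1)$, where $y_L = (\sigma-1)\nu + \gamma\rho$. Since $\sigma \in \{1,\ldots,N\}$ and $\gamma \in \{0,1,2,3\}$, we get $y_L \in [0,\ (N-1)\nu + 3\rho] \subseteq [0,\ N\nu]$. Analogously, the centre of any $R$-slab goes through $\proj o + (0, y_R)$ with direction $\beta = (-1,1)$, and $y_R \in [0,\ 2\nu + 3\rho] \subseteq [0,\ N\nu]$. These bounds on $y_L$ and $y_R$ are quantitative restatements of property~\ref{lslabs:0} of Lemma~\ref{lemma:slabs} and property~\ref{rslabs:1} of Lemma~\ref{lemma:r-slabs}.

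Next, because $\alpha \perp \beta$, the two centres meet in a unique point, which a short computation places at $\proj o + \bigl((y_R - y_L)/2,\ (y_R + y_L)/2\bigr)$. By the bounds on $y_L, y_R$ above, this point lies in $\proj o + [-N\nu/2, N\nu/2] \times [0, N\nu]$, so it sits well inside $\proj V = \proj o + [-N\nu, N\nu]^2$ with slack at least roughly $N\nu/2$ in every coordinate direction.

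Finally, the intersection of the two slabs is the set of points within perpendicular distance $\eps$ of both centre lines. Orthogonality of $\alpha$ and $\beta$ makes this intersection a small square (tilted by $45^\circ$ in standard coordinates) centred at the meeting point above, contained in its $\sqrt 2\eps$-neighbourhood in both $x$ and $y$. Since $\sqrt 2\eps = O(1/(CN^2))$ is orders of magnitude smaller than the slack $N\nu/2 = 13.5/(2CN)$, the whole slab intersection stays inside $\proj V$. The argument is essentially bookkeeping once the slab centres and widths are written down; no new geometric insight beyond Lemmas~\ref{lemma:slabs} and~\ref{lemma:r-slabs} is required, and the only ``obstacle'' is keeping the magnitudes of $y_L$, $y_R$, and $\eps$ correctly tracked.
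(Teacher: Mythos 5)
Your proof is correct and follows the same route as the paper, which disposes of this lemma in one line by noting that both families of slab centres meet $\ell_{\proj o}$ inside $\proj V$ and have slopes $+1$ and $-1$; you have simply made the crossing-point computation and the $\eps$-versus-$N\nu$ comparison explicit. (One small imprecision: the containment of the meeting point in $\proj o+[-N\nu/2,N\nu/2]\times[0,N\nu]$ does not by itself give slack $N\nu/2$ at the top of $\proj V$ — for that you need the sharper observation that $y_R\leq 2\nu+3\rho$ is tiny, so $(y_L+y_R)/2\leq N\nu/2+O(\nu)$ — but this does not affect the validity of the argument.)
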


From Lemmas~\ref{lemma:slabs} and~\ref{lemma:r-slabs}, we get that for each $\sigma\in\{i,j,l\}$, only the lever corners $c_\sigma^s$ and $c_\sigma^r$ can see the critical segment $f_\sigma$ of the pair $s_\sigma,r_\sigma$.
This is because the rays from other lever corners through the door endpoints $d_0,d^0,d_1,d^1$ are contained in the wrong slabs by the lemmas.
The vertical edge above the corner $d^0$ blocks the visibility from corners $c_{\sigma'}^s$ for $\sigma'<\sigma$, and the vertical edge below $d_0$ blocks the visibility from corners $c_{\sigma'}^s$ for $\sigma'>\sigma$, and unwanted visibilities from marked corners in the gadget are likewise blocked by the edges incident at $d^1$ and $d_1$.
It now remains to complete the specification of the corridor and the gadget so that indeed the corners  $c_\sigma^s$ and $c_\sigma^r$ can see $f_\sigma$ and so that the instance satisfies the \mrcc\ promise.

\subsubsection{Specification of the corridor}
We are now ready to describe the exact construction of the corridor.
As mentioned before, the bottom wall is simply the horizontal line segment $d_0d_1$.
We first describe the approximate shape of the upper wall, defined by a polygonal curve $\Psi$ connecting the points $d^0$ and $d^1$; see Figure~\ref{fig:slabs2}.
We will then add some helper spikes to get the final polygonal curve $\Psi'$.

\begin{figure}[htbp]
	\centering
	\includegraphics[page=20, trim=5cm 5cm 5cm 5cm, clip]{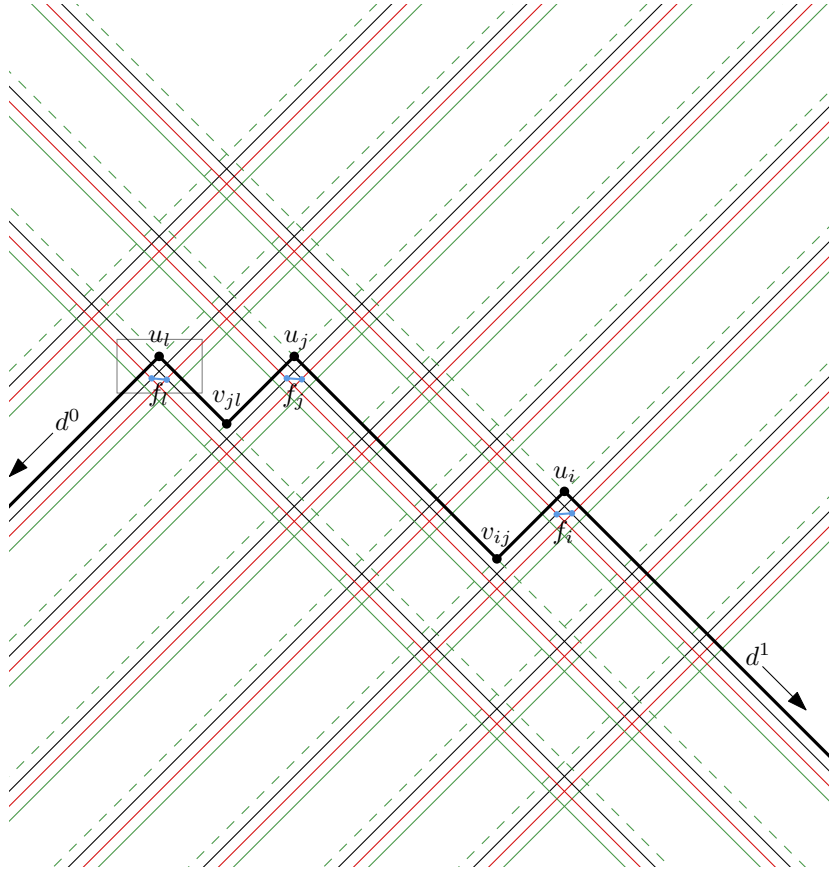}
	\caption{The construction of the upper corridor wall.
	The color scheme of the slabs is as in Figure~\ref{fig:slabs}.
	The critical segments are blue.
	See Figure~\ref{fig:slabs3} for a closeup of the gray rectangle.}
	\label{fig:slabs2}
\end{figure}

Note that in the corridor construction here we assume that $i < j < l$.
In particular, the $L$-slabs $L^\gamma_l$ are above the $L$-slabs $L^\gamma_j$, which are above $L^\gamma_i$.
For the $R$-slabs it is the opposite, i.e., the $R$-slabs $R^\gamma_l$ are below the $R$-slabs $R^\gamma_j$, which are below $R^\gamma_i$.

Figure~\ref{fig:slabs2} shows the grid of slabs and a sketch of the curve $\Psi$ approximating the upper wall (excluding most of the leftmost and rightmost edge of $\Psi$, with endpoints at $d^0$ and $d^1$, respectively, since they are too long to be pictured together with the middle segments).
For $\sigma\in\{i,j,l\}$, let $u_\sigma$ be the intersection point of the rays $\overrightarrow{c_\sigma^s d^0}$ and $\overrightarrow{c_\sigma^r d^1}$.
Let $v_{ij}$ be the intersection point of the rays $\overrightarrow{c_i^s d^0}$ and $\overrightarrow{c_j^r d^1}$, and $v_{jl}$ the intersection point of the rays $\overrightarrow{c_j^sd^0}$ and $\overrightarrow{c_l^rd^1}$.
The curve $\Psi$ is then a path defined by the points $d^0u_iv_{ij}u_jv_{jl}u_ld^1$.
By Lemma~\ref{lemma:slabs}, the set $\Psi \cap \proj V$ is contained in the union of the $L$-slabs and the $R$-slabs, as shown in Figure~\ref{fig:slabs2}.
Due to the relative position of the slabs $L^\gamma_l, L^\gamma_j, L^\gamma_i$ and $R^\gamma_l, R^\gamma_j, R^\gamma_i$ as discussed above, the curve $\Psi$ is $x$-monotone, and the point $v_{ij}$ (resp.~$v_{jl}$) has smaller $y$-coordinate than the neighbouring points $u_i,u_j$ (resp.~$u_j,u_l$), i.e., the curve $\Psi$ always has a zig-zag shape resembling the one from Figure~\ref{fig:slabs2}.
We therefore get that when adding marked rectangles around the restricted critical segments as described in the following, the vertical and horizontal bars are pairwise disjoint, satisfying the bar intersection promise.

\begin{figure}[h!]
	\centering
	\includegraphics[page=21]{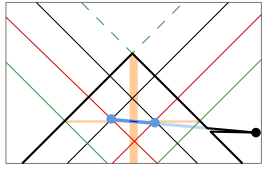}
	\caption{A closeup of the construction of a helper spike.
	The figure also shows the restricted critical segment (dark blue) and the horizontal and vertical bars (orange).}
	\label{fig:slabs3}
\end{figure}

We obtain the final upper wall $\Psi'$ by adding helper spikes, as shown in Figure~\ref{fig:slabs3}:
We consider the extension of each critical segment in the direction of decreasing $y$-coordinates.
We hit a segment of $\Psi$, and make a helper spike there, with a corner that can see the region below the critical segment.
This ensures that the created helper corner is the bottom endpoint of both edges of the helper spike, thus satisfying the invisibility invariant.

\subsection{Adjusting propagation spikes}\label{sec:propSpikes}

Knowing how the corridor is constructed, as described in the previous section, it is now possible for us to describe how we choose the edges incident at the propagation corners in the base pockets.
Consider a variable segment $s_i=a_ib_i$ and the associated propagation corner $c_i^s$, and let $r_i$ be the variable segment in a gadget which is the partner of $s_i$; see Figure~\ref{fig:adjustingProp}.
We have so far not described exactly how to choose the upper edge of $\poly$ incident at $c_i^s$, only that it would be a corner close to $c_i^s+(1,1)$.

We need our construction to be sufficiently flexible in the sense that for every value of a variable in $X$, we can construct a cover of triangles that specifies that value on all variable segments.
But recall that each variable is restricted to a range of length $\delta\mydef n^{-7}$, where $n$ is the number of variables in the formula $\Phi$.
In our construction, we therefore only need the flexibility corresponding to a tiny subsegment $s_i'=a_i'b_i'$ contained in $s_i$, and likewise, we only need to cover a tiny part $r_i'$ of $r_i$, where $s_i'$ and $r_i'$ cover fractions of size just $O(n^{-7})$ of the full segments $s_i$ and $r_i$.
Let $f_i'$ be the part of the critical segment $f_i$ corresponding to the restricted parts $(s_i',r_i')$.
The full critical segment $f_i$ has length $\Theta(1/N^2)=\Theta(n^{-6})$, so the restricted $f_i'$ has length $\|f_i'\|=\Theta(\delta\|f_i \|)=\Theta(n^{-13})$.

Let the endpoins of $f_i$ and $f_i'$ be $\kappa_0,\kappa_1,\kappa_0',\kappa_1'$, such that $f_i=\kappa_0\kappa_1$ and $f'_i=\kappa_0'\kappa_1'$, with $\kappa_0$ and $\kappa_0'$ to the left.
We consider the ray $\xi\mydef \overrightarrow{a_i'd_0}$, which passes through the right endpoint $\kappa_1'$ of $f_i'$.
Let $e$ be the point on $\xi$ such that $y(e)=y(\kappa_1')+1/10CN^2$.
We now choose the upper segment incident at $c_i^s$ such that the extension of the edge contains the point $e$.
We need to argue that with this definition, the propagation corner $c_i^s$ can see all points of $f'_i$, since then a lever piece can represent any value in the restricted range.
For this, it is enough to see that a triangle contained in $\poly$ can cover $c_i$ and all of $f_i'$.
We therefore calculate how large a portion of $f_i$ can be covered by a triangle with a lever edge contained in $\xi$.
The orange triangle $T=c_i^sge$ in the figure is the maximal such piece, and we need to find a lower bound on the length of $T\cap f_i$.
Let $F$ be the intersection of $T$ and the line through $g$ parallel to $f_i$.
Then $\|F\|=\Theta(1)$.
We also have $y(e)-y(g)=\Theta(N^2)$ and $y(e)-y(\kappa_1')=\Theta(1/N^2)$.
Since the parts of $T$ above $F$ and above $f_i$ are similar, we get that $\|T\cap f_i\|=\Theta(N^{-4})=\Theta(n^{-12})$.
Since $f_i'$ has length $\Theta(n^{-13})$, it follows that $f_i'\subset T$ when $n$ is large enough.

\begin{figure}[h!]
	\centering
	\includegraphics[page=22]{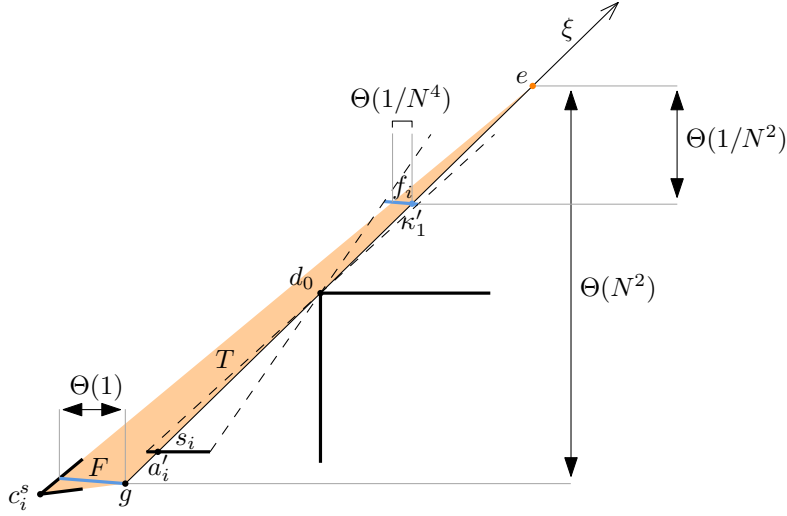}
	\caption{Figure showing the principle for choosing the upper incident edge of the propagation corner $c_i^s$.
	The figure is not to scale.}
	\label{fig:adjustingProp}
\end{figure}

We adjust the edges incident at the lever corners $c_\sigma^r$ in the gadgets in a similar way.
We can now state the lemma expressing that the corridor works as intended.

\begin{lemma}\label{lem:corridor_works}
If the constructed instance $\II_2$ has a cover, the following holds.
For $\sigma\in\{i,j,l\}$, if $s_\sigma$ and $r_\sigma$ are both right-oriented, then $\val{s_\sigma}{c_\sigma}\leq\val{r_\sigma}{c_\sigma^r}$.
If both segments are left-oriented, then $\val{s_\sigma}{c_\sigma}\geq\val{r_\sigma}{c_\sigma^r}$.

For each $\sigma\in\{i,j,l\}$, consider any value $y_\sigma\in I(x_\sigma)$ in the restricted range of the variable $x_\sigma$.
There exists $9$ triangles that cover the lever corners $c_\sigma$ and $c_\sigma^r$, $\sigma\in\{i,j,l\}$, and the helper corners in the corridor such that the following holds.
The triangles cover the $3$ marked rectangles (containing the restricted critical segments $f'_\sigma$) in the corridor and we have $y_\sigma=\val{s_\sigma}{c_\sigma}=\val{r_\sigma}{c_\sigma^r}$.
\end{lemma}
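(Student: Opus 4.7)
The plan is to establish both implications by combining the bi-cover property of the critical segments with Lemma~\ref{lemma:copyLemma}, applied with pivots $d_0, d_1$ which satisfy $y(d_0)=y(d_1)$ by construction of the corridor.

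For the first (forward) direction, I would start by confirming the bi-cover invariant for each $f_\sigma$: by Lemmas~\ref{lemma:slabs} and~\ref{lemma:r-slabs}, every ray from a lever corner $c_{\sigma'}^s$ with $\sigma' \neq \sigma$ (respectively $c_{\sigma'}^r$ with $\sigma' \neq \sigma$) through the door $d_0d^0$ (resp.\ $d_1d^1$) is confined to an $L$-slab (resp.\ $R$-slab) that does not meet $f_\sigma$, while the vertical edges flanking each door cut off all remaining potential visibilities. Thus only $c_\sigma^s$ and $c_\sigma^r$ see $f_\sigma$. Since the marked rectangle containing $f_\sigma$ must be covered and no marked corner can cover more than one marked corner by the invisibility property, the two lever pieces $\piece_\sigma^s \ni c_\sigma^s$ and $\piece_\sigma^r \ni c_\sigma^r$ must together cover $f_\sigma$. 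I then apply Lemma~\ref{lemma:copyLemma}, with $s_\sigma$, $f_\sigma$, $r_\sigma$ playing the roles of $s_0$, $f$, $s_1$ and pivots $d_0$, $d_1$; since $y(d_0)=y(d_1)$, the lemma gives $\xx{s_\sigma}{\pp{s_\sigma}{c_\sigma^s}} = \xx{r_\sigma}{\pp{r_\sigma}{c_\sigma^r}}$, which immediately translates into the two orientation-dependent inequalities in the statement.

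For the second direction, given target values $y_\sigma \in I(x_\sigma)$, I would explicitly construct the nine triangles. For each $\sigma\in\{i,j,l\}$, let $p_\sigma \in s_\sigma$ and $q_\sigma \in r_\sigma$ represent $y_\sigma$, and let $m_\sigma \in f_\sigma$ be their common projection through $d_0$ and $d_1$ (again using $y(d_0)=y(d_1)$). The lever piece at $c_\sigma^s$ is taken to be the maximal triangle in $\poly$ with one edge passing through $d_0$ and $m_\sigma$; by the adjustment of the upper incident edge of $c_\sigma^s$ made in Section~\ref{sec:propSpikes}, the propagation corner sees all of the restricted critical segment $f'_\sigma$, so this triangle covers the portion of the marked rectangle on one side of $f_\sigma$. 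The lever piece at $c_\sigma^r$ is chosen analogously through $d_1$ and $m_\sigma$ and covers the other portion. These six lever pieces together cover the three marked rectangles around $f'_\sigma$ and realize $y_\sigma = \val{s_\sigma}{c_\sigma^s} = \val{r_\sigma}{c_\sigma^r}$. The remaining three triangles cover the three helper corners along $\Psi'$; since each helper corner was placed on the downward extension of a critical segment, a triangle at such a corner covers precisely the sliver of the marked rectangle not covered by the two flanking lever pieces.

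The main technical obstacle is confirming that the lever triangles through $d_0$ and $d_1$, when forced to pass through the \emph{interior} point $m_\sigma$ of $f_\sigma$, remain contained in $\poly$ and still cover the \emph{entire} marked rectangle around $f'_\sigma$ on their respective side. This requires the slab width $\eps=\Theta(1/CN^2)$, the placement of the extension point $e$ chosen in Section~\ref{sec:propSpikes}, and the fact that marked rectangles are dimensioned to contain only the restricted critical segments (of length $\Theta(n^{-13})$), which is much smaller than the visibility window that the flexibility analysis of Section~\ref{sec:propSpikes} guarantees. Once this containment is verified, the cover property for the three marked rectangles in the corridor follows routinely, and the equalities $y_\sigma = \val{s_\sigma}{c_\sigma^s} = \val{r_\sigma}{c_\sigma^r}$ are a direct consequence of the definitions of $\xx{\cdot}{\cdot}$ and $\val{\cdot}{\cdot}$.
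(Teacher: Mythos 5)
Your proposal is correct and follows essentially the same route as the paper: use Lemmas~\ref{lemma:slabs} and~\ref{lemma:r-slabs} to establish the bi-cover property of each $f'_\sigma$, conclude that the pieces at $c_\sigma^s$ and $c_\sigma^r$ must jointly cover it, and invoke Lemma~\ref{lemma:copyLemma} (with $y(d_0)=y(d_1)$) for the inequalities, then build the six lever triangles plus three helper triangles for the converse, relying on the flexibility analysis of Section~\ref{sec:propSpikes}. One small imprecision: Lemma~\ref{lemma:copyLemma} gives $\xx{s_\sigma}{\pp{s_\sigma}{c_\sigma^s}}=\xx{r_\sigma}{\pi_1(\pi_0(\pp{s_\sigma}{c_\sigma^s}))}$ for the \emph{projected} point, not for $\pp{r_\sigma}{c_\sigma^r}$ itself; the inequality on $\val{r_\sigma}{c_\sigma^r}$ then comes from the containment $r_\sigma\cap\piece\subset p_1 b_1$ of Observation~\ref{obs:ineq3}, exactly as the ``in particular'' clause of the lemma already packages it.
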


\begin{proof}
For the first part, for each $\sigma\in\{i,j,l\}$, we consider the restricted critical segment $f'_\sigma$ in the corridor corresponding to the pair $(s'_\sigma,r'_\sigma)$.
It follows from Lemma~\ref{lemma:slabs} that among all propagation corners in the base pockets, only the intended corner $c_\sigma$ can see $f'_\sigma$.
Likewise, by Lemma~\ref{lemma:r-slabs}, only the corner $c_\sigma^r$ in the gadget can see $f'_\sigma$.
Hence, the pieces covering $c_\sigma$ and $c_\sigma^r$ must together cover $f'_\sigma$.
The statement therefore follows from Lemma~\ref{lemma:copyLemma}.

On the other hand, it follows from the discussion above (in this section) that for any $y_\sigma\in I(x_\sigma)$ in the restricted range of $x_\sigma$, there are two triangles $T_0,T_1$ covering $c_\sigma$ and $c_\sigma^r$ that specify the value $y_\sigma$ at $s_\sigma$ and $r_\sigma$, respectively, and so that $T_0$ and $T_1$ together cover $f'_\sigma$.
Using $3$ triangles covering the helper corners in the corridor, we get a cover for the marked rectangles in the corridor.
\end{proof}

\subsection{The $\ge$-inequality gadget}\label{sec:Geinequality}

The $\ge$-inequality gadget is shown in Figure~\ref{fig:gePrinciple}.
In this gadget, $r_i$ is left-oriented and $r_j$ is right-oriented.
In addition to the lever corners $c_i^r$ and $c_j^r$, we have a lever corner $c$ which is the left-responsible for $r_i$ and $r_j$.
Using Observations~\ref{obs:ineq1} and~\ref{obs:ineq2}, we get that $\val{r_i}{c_i^r}\geq \val{r_i}{c} \geq \val{r_j}{c}\geq\val{r_j}{c_j^r}$.
Recall the bar intersection promise described in Section~\ref{sec:mrcc}.
In order to satisfy this promise, we consider the vertical bar of the marked rectangle of the restricted range of $r_i$ and the horizontal bar of that of $r_j$, and we make the intersection of these two bars a marked rectangle $R_h$.
We also make a marked corner $h$ such that $h$ and $R_h$ can be covered by a single piece.

\begin{figure}[h!]
\centering
\includegraphics[page=23]{figures/figures.pdf}
\caption{The principle of the $\ge$-inequality gadget.
The crosses at the variable segments $r_i$ and $r_j$ mark the position of the infinitesimal restricted range of the segments.
The infinitesimal marked rectangle $R_h$ is shown as a triangular point, which can be covered by a piece covering the marked point $h$.
}
\label{fig:gePrinciple}
\end{figure}

\subsection{The $\le$-inequality gadget}\label{sec:Leinequality}

The $\le$-inequality gadget is identical to the $\ge$-inequality gadget, except that the orientations of the segments $r_i$ and $r_j$ are swapped.

\subsection{The $\ge$-addition gadget}\label{sec:GEadditionGadget}

In this section we present the construction of the $\geq$-addition gadget which represents an inequality $x_i+x_j\ge x_l$, where $i,j,l\in\{1,\ldots,n\}$.
In Section~\ref{sec:LEadditionGadget} we show how to modify the construction to obtain the $\leq$-addition gadget for the inequality $x_i+x_j \le x_l$.
Recall that in Section~\ref{sec:bottomWall}, we rewrote all equations of the form $x_i+x_j = x_l$ in $\Phi$ as two inequalities.
We will then add a gadget for each inequality to our polygon $\poly$.

\subsubsection{Idea behind the gadget construction}\label{subsec:addition-idea}
The principle behind the gadget follows one from~\cite{abrahamsen2018art}.
See Figure~\ref{fig:additionPrinciple} (left) for a sketch of the construction.

\begin{lemma}[\cite{abrahamsen2018art}]\label{lem:additionPrin}
For some $v>0$, let $e_i\mydef (v,0),e_j\mydef (-v,0),e_l\mydef (0,0)$.
For some $h\in\R$, let $\ell_h$ and $\ell_{2h}$ be the horizontal lines with $y$-coordinate $-h$ and $-2h$, respectively.
Let $\gamma$ be a point with $y(\gamma)\neq 0$.
For $\sigma\in\{i,j,l\}$, let $p_\sigma\mydef \overleftrightarrow{\gamma e_\sigma}\cap \ell_h$ and $p'_\sigma\mydef \overleftrightarrow{\gamma e_\sigma}\cap \ell_{2h}$.
Then $x(p_l)=\frac{x(p_i)+x(p_j)}2$ and $x(p'_l)=x(p_i)+x(p_j)$.
\end{lemma}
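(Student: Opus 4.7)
The plan is a direct parametric computation that exploits the affinity of central projection between two parallel lines. Write $\gamma = (x_\gamma, y_\gamma)$ with $y_\gamma \neq 0$, and for any point $e = (x_e, 0)$ on the $x$-axis parametrize the line $\overleftrightarrow{\gamma e}$ as $(1-t)\gamma + t e$. Its intersection with the horizontal line $y = -k$ occurs at $(1-t)y_\gamma = -k$, i.e.\ $t = 1 + k/y_\gamma$, and the $x$-coordinate of this intersection is
\[
\phi_k(x_e) \;=\; -\tfrac{k}{y_\gamma}\, x_\gamma \;+\; \bigl(1 + \tfrac{k}{y_\gamma}\bigr)\, x_e.
\]
Since $\gamma$ and $k$ are fixed, $\phi_k$ is an affine function of $x_e$, and this is the whole engine of the lemma.

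For the first identity I would simply apply $\phi_h$ to the $x$-coordinates $v$, $-v$, $0$ of $e_i$, $e_j$, $e_l$. Because $0 = \tfrac{1}{2}(v + (-v))$ and $\phi_h$ is affine, this gives $x(p_l) = \phi_h(0) = \tfrac{1}{2}(\phi_h(v) + \phi_h(-v)) = \tfrac{1}{2}(x(p_i) + x(p_j))$, as desired.

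For the second identity, evaluate $\phi_{2h}(0) = -2h\, x_\gamma/y_\gamma$ and, separately, $\phi_h(v) + \phi_h(-v)$; here the affine terms $(1 + h/y_\gamma)(\pm v)$ cancel, so this sum equals $-2h\, x_\gamma/y_\gamma$ as well. Hence $x(p'_l) = x(p_i) + x(p_j)$. The doubling has a clean geometric interpretation: moving the target line from depth $h$ to depth $2h$ doubles the displacement of each projected point from the ``vanishing'' image $-k x_\gamma/y_\gamma$, so the image of $0$ at depth $2h$ coincides with the sum of the images of $v$ and $-v$ at depth $h$.

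There is no real obstacle here; the only hypothesis that needs to be invoked is $y_\gamma \neq 0$, which guarantees that $\phi_k$ is well-defined (equivalently, none of the projection lines is horizontal). Everything else is linear algebra in one variable.
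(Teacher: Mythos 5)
Your computation is correct and complete. The parametrization $(1-t)\gamma + te$ gives $t = 1 + k/y_\gamma$ at the line $y=-k$, the resulting map $\phi_k(x_e) = -\tfrac{k}{y_\gamma}x_\gamma + \bigl(1+\tfrac{k}{y_\gamma}\bigr)x_e$ is affine in $x_e$, the first identity is the affine image of the midpoint relation $0=\tfrac{1}{2}(v+(-v))$, and the second follows because $\phi_h(v)+\phi_h(-v) = -\tfrac{2h}{y_\gamma}x_\gamma = \phi_{2h}(0)$; the hypothesis $y(\gamma)\neq 0$ is exactly what makes each line non-horizontal so that the intersections exist and are unique. Note that the paper itself states this lemma without proof, importing it from the cited Art Gallery paper, so there is no in-paper argument to compare against; your derivation is a valid self-contained verification (the standard alternative is the same fact phrased via similar triangles / the intercept theorem, which your affine map encodes algebraically).
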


\begin{figure}[h!]
\centering
\includegraphics[page=1]{figures/figures.pdf}\quad
\includegraphics[clip, trim = 0cm 1cm 0cm 0cm,width=0.45\textwidth]{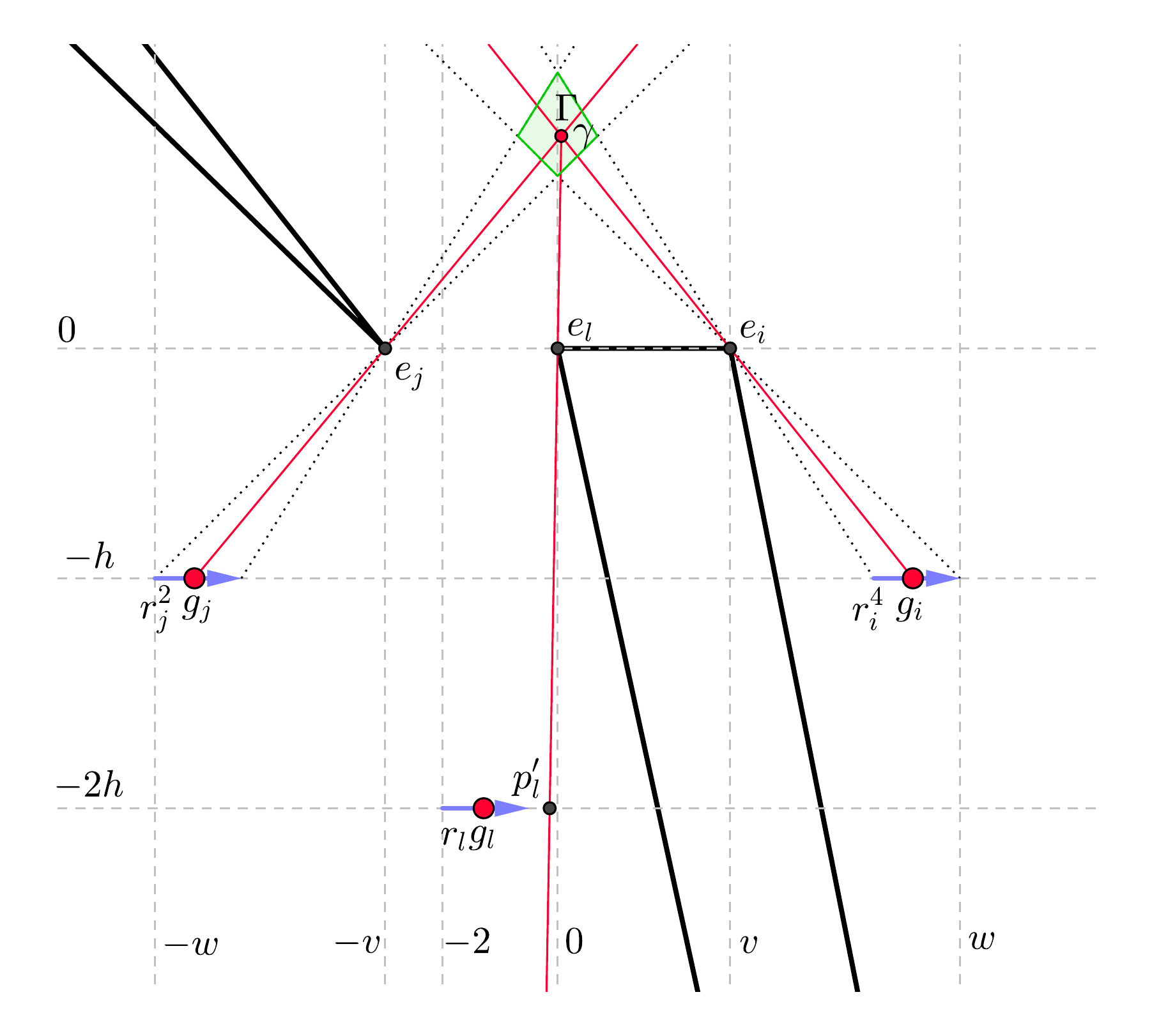}
\caption{Left: The construction described in Lemma~\ref{lem:additionPrin}.
The point $e_l$ is at the origin, and then the $x$-coordinate of $p'_l$ is the sum of those of $p_i$ and $p_j$.
Right: How we place variable segments on the lines $\ell_h$ and $\ell_{2h}$, and how we use walls of $\poly$ to restrict visibility.}
\label{fig:additionPrinciple}
\end{figure}

We use the lemma in the following way; see Figure~\ref{fig:additionPrinciple} (right).
For some value $w>v+3/2$, we place two variable segments $r^4_i$ and $r^2_j$ at the line $\ell_h$, such that the right endpoint of $r^4_i$ is $(w,-h)$ and the left endpoint of $r^2_j$ is $(-w,-h)$.\footnote{We use superscripts for $r^4_i$ and $r^2_j$, since we will also have variable segments $r^1_i,r^2_i,r^3_i$ and $r^1_j$ representing the same variable. More details will be given later.}
We then place a variable segment at $r_l$ with left endpoint at $(-2,-2h)$.
It then follows from the lemma that if the points $p_i,p_j,p'_l$ are on their respective segments $r^4_i,r^2_j,r_l$, then $p'_l$ represents a value which is the sum of those of $p_i$ and $p_j$ (if the points are not on the segments, this will still hold if we extrapolate the correspondence between points and values in the natural way).


Let $\Gamma$ be a collection of points $\gamma$ such that the ray $\overrightarrow{\gamma e_i}$ intersects $r^4_i$, and the ray $\overrightarrow{\gamma e_j}$ intersects $r^2_j$.
Then $\Gamma$ is a quadrilateral, bounded by a segment on each of the following rays: the rays with origin at the endpoints of $r^4_i$ and containing $e_i$, and the rays with origin at the endpoints of $r_j$ and containing $e_j$.
As shown in Figure~\ref{fig:additionPrinciple} (right), we use walls of $\poly$ to create a kind a pivots, so that we get the following properties. \begin{itemize}
\item
For every point $g_i$ on $r^4_i$ and $\gamma$ in $\Gamma$, the points $\gamma$ and $g_i$ can see each other if and only if $\gamma$ is on or to the right of the line $\overleftrightarrow{g_i e_i}$.

\item
For every point $g_j$ on $r^2_j$ and $\gamma$ in $\Gamma$, the points $\gamma$ and $g_j$ can see each other if and only if $\gamma$ is on or to the right of the line $\overleftrightarrow{g_j e_j}$.

\item
For every point $g_l$ on $r_l$ and $\gamma$ in $\Gamma$, the points $\gamma$ and $g_l$ can see each other if and only if $\gamma$ is on or to the left of the line $\overleftrightarrow{g_l e_l}$.
\end{itemize}

We then get from Lemma~\ref{lem:additionPrin} that if points $g_i,g_j,g_l$ on $r_i^4,r_j^2,r_j$, respectively, together see $\Gamma$, then the value represented by $g_l$ is at most the sum of the values represented by $g_i$ and $g_j$.
We are going to make a marked rectangle around (a restricted part of) $\Gamma$, which must be covered by pieces covering lever corners that are responsible for the segments $r_i^4,r_j^2,r_j$.

\begin{figure}
\centering
\includegraphics[width=\textwidth]{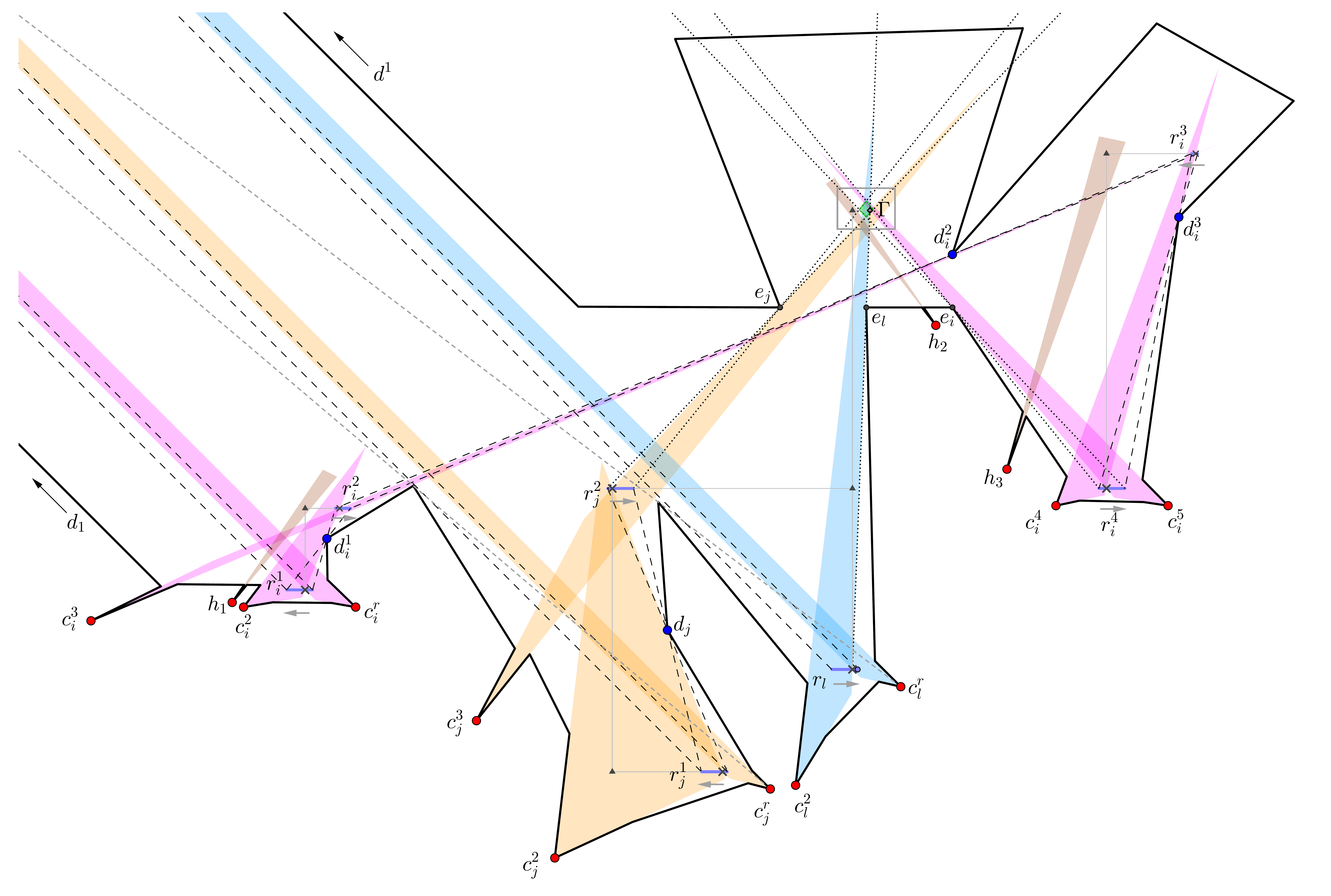}
\caption{The $\geq$-addition gadget.
The gray arrows show the orientations of the variable segments.
Red points are lever corners and blue are pivots.
The crosses mark the infinitesimal restricted ranges on the variable segments, and the diamond-shaped point in $\Gamma$ is the infinitesimal restricted range $\Gamma'$ of $\Gamma$.
Violet regions are lever triangles covering variable segments representing $x_i$, orange represent $x_j$, and blue represent $x_l$.
The gray dashed rays from lever corners $c_j^r$ and $c_l^r$ show that the corners cannot see $r_i^2$ and $r_j^2$, respectively, so that the bi-cover property is satisfied.
The triangular points indicate the extra marked rectangles we add in order to satisfy the bar intersection promise.
These can be covered by pieces covering the marked corners $h_1,h_2,h_3, c_j^2,c_l^2$.
For a closeup of the region in the gray rectangle, see Figure~\ref{fig:additionGadgetCloseup}.
}
\label{fig:additionGadget}
\end{figure}

\subsubsection{Actual gadget}

We use the above principle with values $h\mydef 10.5/CN^2$, $v\mydef 5/CN^2$, and $w\mydef 13.5/CN^2$.
The complete gadget can be seen in Figure~\ref{fig:additionGadget}.
As shown in the figure, we need the variable segments $r_i=r_i^1$ and $r_j=r_j^1$ to be left-oriented, whereas $r_l$ is right-oriented.
We apply sequences of lever mechanisms in order to create the inequalities
\begin{align*}
\val{r_i^1}{c_i^r} & \geq \val{r_i^1}{c_i^2} \geq \val{r_i^2}{c_i^2}\geq \val{r_i^2}{c_i^3}\geq \val{r_i^3}{c_i^3}\geq \val{r_i^3}{c_i^4}\geq \val{r_i^4}{c_i^4}\geq \val{r_i^4}{c_i^5}, \\
\val{r_j^1}{c_j^r} & \geq\val{r_j^1}{c_j^2}\geq\val{r_j^2}{c_j^2}\geq\val{r_j^2}{c_j^3}, \quad \text{and} \\
\val{r_l}{c_l^r} & \leq\val{r_l}{c_l^2}.
\end{align*}

Just as we define the instance $\II_2$ such that we only need to cover the infinitesimal restricted ranges of $r_i^4$, $r_j^2$, and $r_l$, we also define an infinitesimal restricted region $\Gamma'$ of $\Gamma$.
We get $\Gamma'$ as the intersections of all rays from the restricted ranges of $r_i^4$ and $r_j^2$ through $e_i$ and $e_j$, respectively.
Thus, $\Gamma'$ is a very small quadrilateral contained in $\Gamma$.
We then make the axis-parallel bounding box $R_{\Gamma'}$ a marked rectangle of $\II_2$; see the closeup of the region around $\Gamma$ in Figure~\ref{fig:additionGadgetCloseup}.

We need the extra variable segments since the segments $r^1_i,r^1_j,r_l$ (that are connected to the base pockets via the corridor) do not have the right positions to play the roles of the segments $r_i^4,r_j^2,r_l$ described above.
For instance, in the case of $r_j^1$, the corner $c_j^r$ is the right-responsible for $r_j^1$, so the other responsible corner $c_j^2$ must be the left-responsible.
However, the corner responsible for covering $r^2_j$ and part of $\Gamma'$ must be the right-responsible for $r^2_j$ for the construction to work as described in Section~\ref{subsec:addition-idea}.
Therefore, $r_j^1$ cannot be used as $r_j^2$, but this is solved by connecting $r_j^1$ and $r_j^2$ by lever mechanisms.

As in the inequality gadgets, Section~\ref{sec:Geinequality} and~\ref{sec:Leinequality}, we introduce extra marked rectangles (denoted as triangles in the figures) to satisfy the bar intersection promise, and we introduce helper corners that can see them.

To see that the gadget works as intended, suppose that 
%
%
pieces $\piece_i,\piece_j,\piece_l$ cover $c_i^5,c_j^3,c_l^2$, respectively, and that they together cover $R_{\Gamma'}$.
Let $g_i\mydef  \pp{r_i^4}{c_i^5}$, $g_j\mydef \pp{r_j^2}{c_j^3}$, and $g_l\mydef \pp{r_l}{c_l^2}$.
As $\piece_i,\piece_j,\piece_l$ cover $R_{\Gamma'}$ and thus $\Gamma'$, it follows that $g_i,g_j,g_l$ together see $\Gamma'$.
By the remarks in Section~\ref{subsec:addition-idea}, we then get $\val{r_i^4}{c_i^5}+\val{r_j^2}{c_j^3} \geq \val{r_l}{c_l^2}$, as desired.


\begin{figure}
\centering
\includegraphics[width=0.5\textwidth]{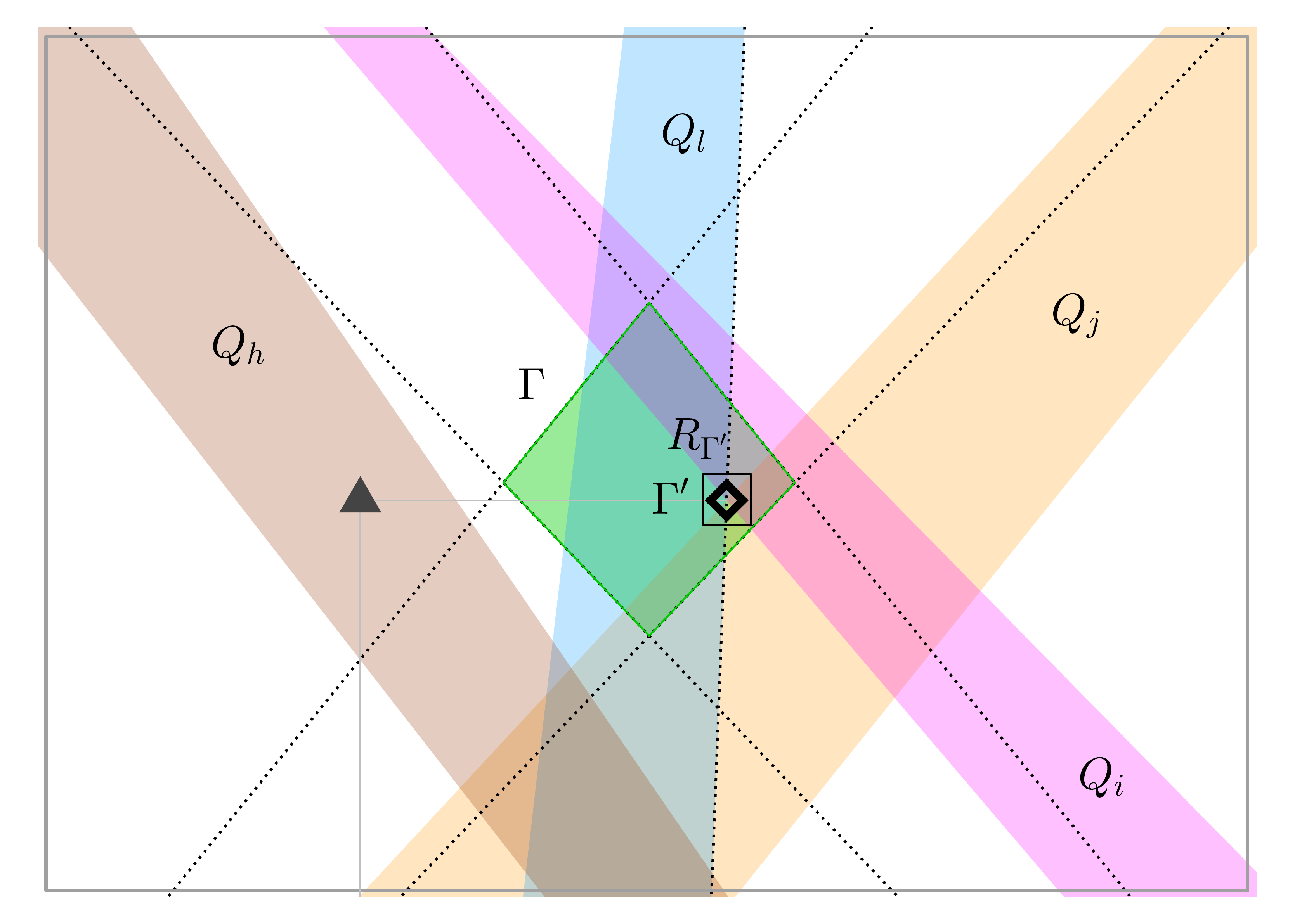}
\caption{Closeup of the region in the gray rectangle from Figure~\ref{fig:additionGadget}.
The pieces $Q_i,Q_j,Q_l$ cover together the restricted region $\Gamma'$ and also the axis-parallel bounding box $R_\Gamma$ of $\Gamma'$.
}
\label{fig:additionGadgetCloseup}
\end{figure}

\subsection{The $\le$-addition gadget}\label{sec:LEadditionGadget}

We obtain the $\leq$-addition gadget by reversing the orientations of all variable segments of the $\geq$-addition gadget.
Instead of placing $r_l$ with the left endpoint at $(-2,-2h)$, we now instead need to place the right endpoint of $r_l$ at the point $(2,-2h)$, so that the scales of the segments $r_i^4,r_j^2,r_l$ match.
Recall from Observation~\ref{obs:gadgetLines} that we want the endpoints of the segments $r_i^1,r_j^1,r_l$ (which are connected to the base pockets via the corridor) to lie on lines with direction $(1,-1)$ with an equidistant horizontal spacing of $13.5/CN^2$.
Since we move $r_l$, we also adjust $r_i^1$ and $r_j^1$ accordingly.
The result can be seen in Figure~\ref{fig:additionGadget2}.

\begin{figure}[h!]
\centering
\includegraphics[width=\textwidth]{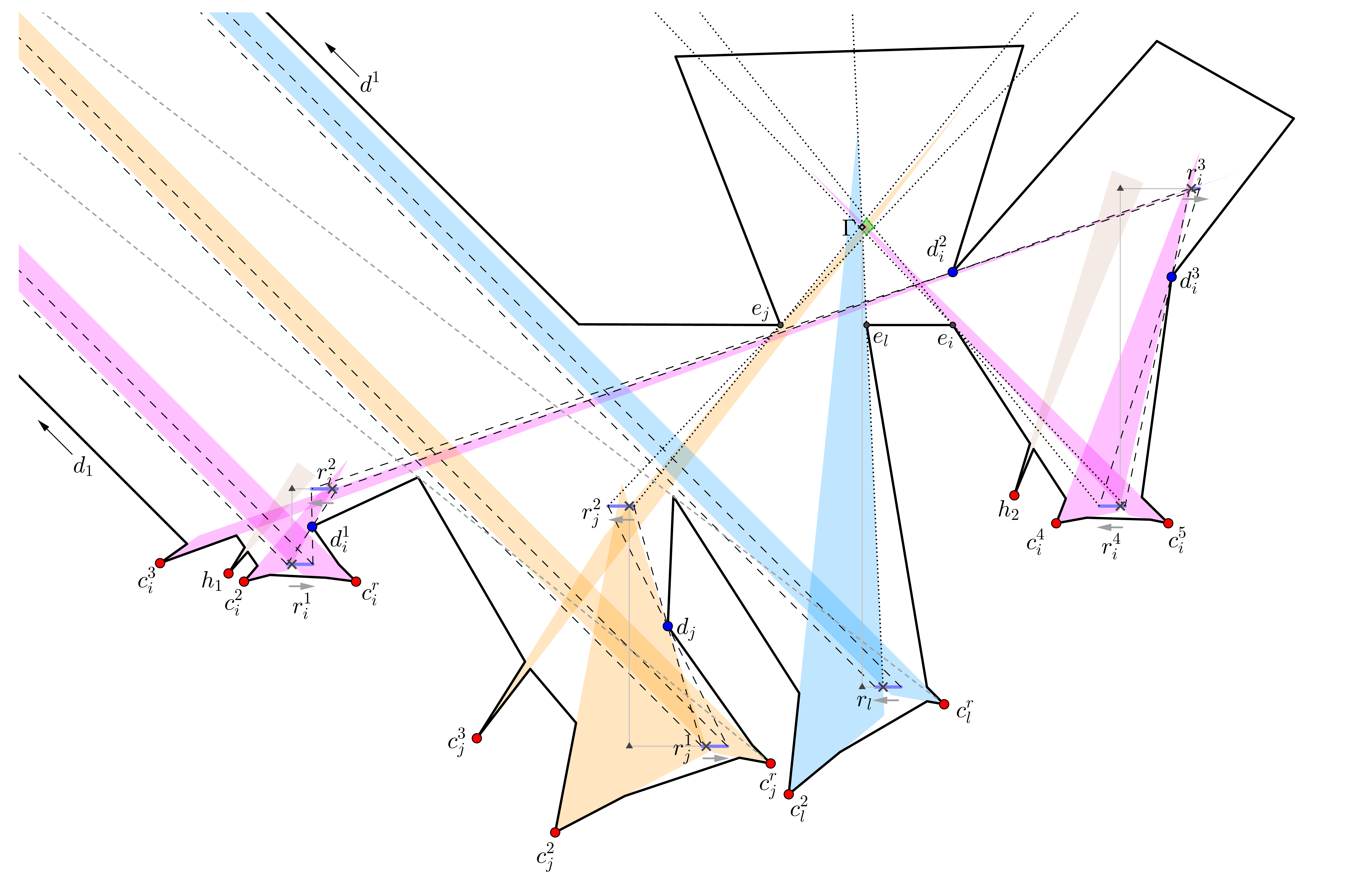}
\caption{The $\leq$-addition gadget.
}
\label{fig:additionGadget2}
\end{figure}

\subsection{The $\ge$-inversion gadget}\label{sec:inversionGadget2}

\begin{figure}[h!]
\centering
\includegraphics[width=0.7\textwidth]{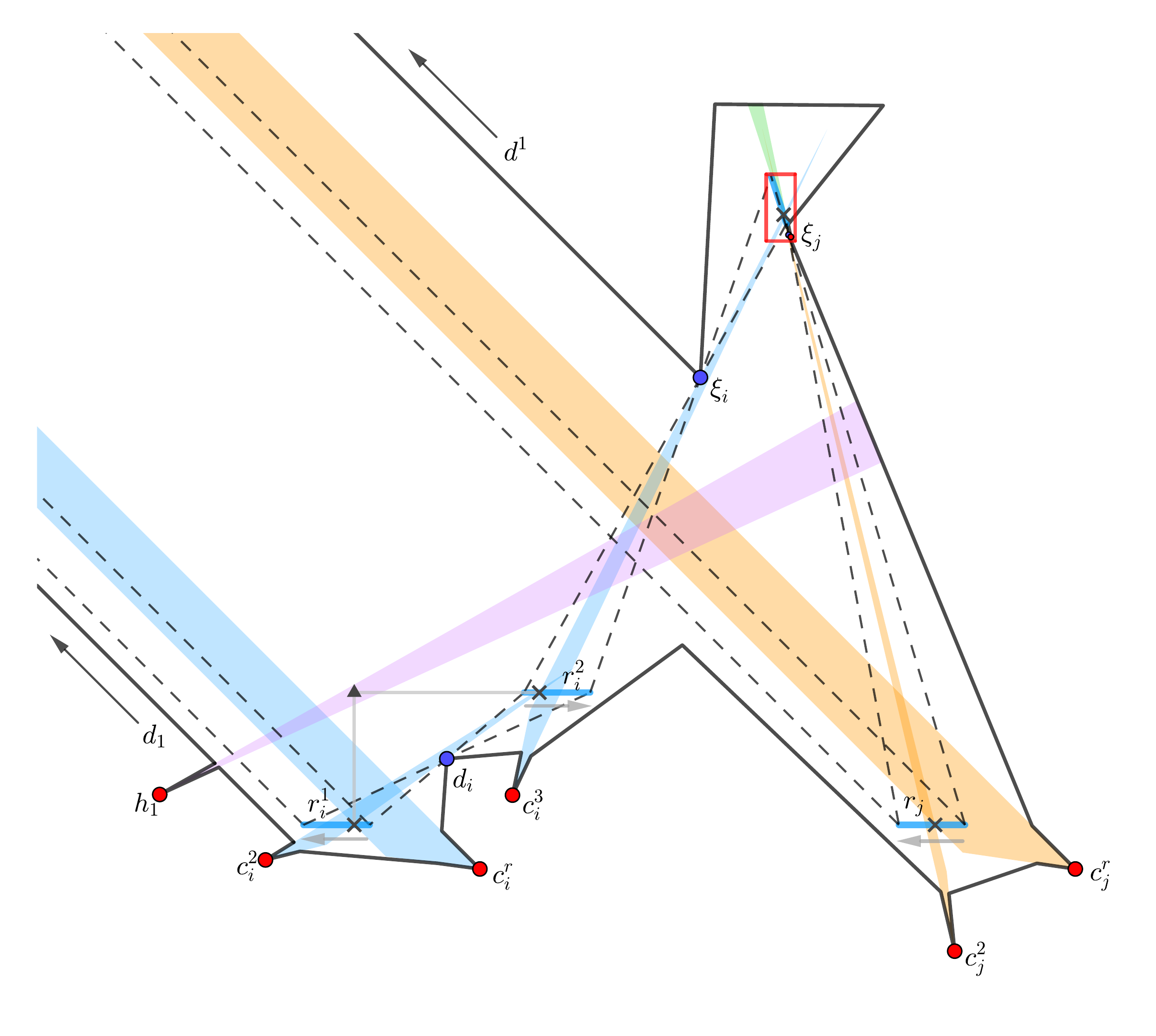}
\includegraphics[width=0.25\textwidth]{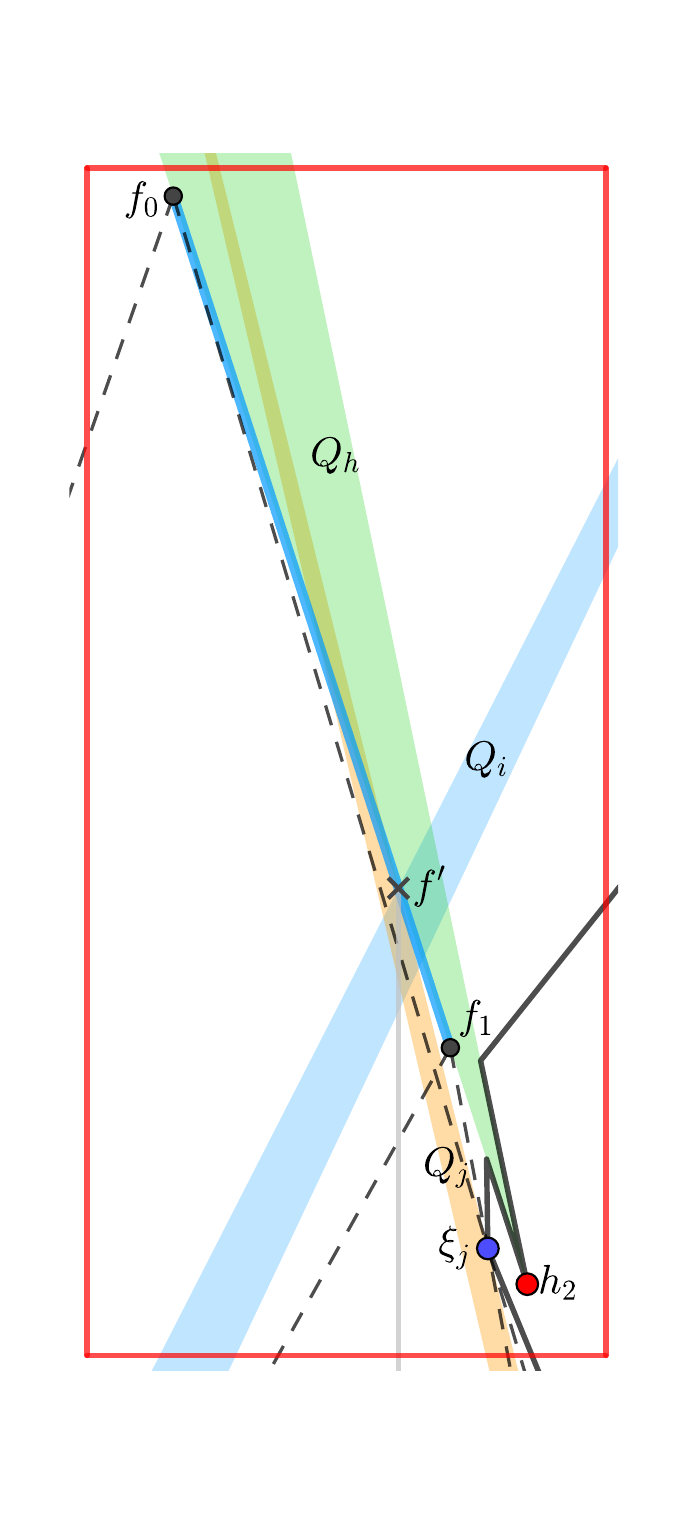}
\caption{The $\geq$-inversion gadget.
To the right is shown a closeup of the region in the red rectangle.
}
\label{fig:inversionGadget2}
\end{figure}

The principle behind the gadget is similar to one from~\cite{abrahamsen2018art}.
The gadget can be seen in Figure~\ref{fig:inversionGadget2}.
The segments $r_i^1$ and $r_j$ are connected to segments in the base pockets via a corridor.
By the use of lever mechanisms, we have
\begin{align*}
\val{r_i}{c_i^r} & \geq \val{r_i}{c_i^2} \geq \val{r_i^2}{c_i^2} \geq \val{r_i^2}{c_i^3} ,\quad\text{and} \\
\val{r_j}{c_j^r} & \leq \val{r_j}{c_j^2}.
\end{align*}

The interesting part happens at a critical segment $f_0f_1$ which $c_i^3$ and $c_j^2$ are responsible for.
The pivots of $c_i^3$ and $c_j^2$ are the corners $\xi_i$ and $\xi_j$, respectively.
The restricted ranges of $r_i^2$ and $r_j$ correspond to an infinitesimal subsegment $f'$ contained in $f_0f_1$, and we can think of $f'$ as a single point.
We have carefully chosen $\xi_i$ and $\xi_j$ such that if the pieces covering $c_i^3$ and $c_j^2$ together cover $f'$, then $\val{r_i^2}{c_i^3} \geq 1/\val{r_j}{c_j^2}$, or equivalently, $\val{r_i^2}{c_i^3}\val{r_j}{c_j^2} \geq 1$.

We now specify the coordinates that make this work.
The described construction should be scaled by a factor of $1/CN^2$; we avoid mentioning this factor for the ease of notation.
The variable segment $r_i^2$ has length $3/2$ and left endpoint $(5.5,3)$, and the segment $r_j$ also has length $3/2$ and left endpoint $(14,0)$.
We now define $\xi_i\mydef (9.5,345/34)\approx (9.5,10.15)$ and $\xi_j\mydef (11.5,51405/3842)\approx (11.5,13.37)$.
This defines a critical segment $f_0f_1$ with $f_0\mydef (15961/1438, 359835/24446)\approx (11.10,14.72)$ and $f_1\mydef (8635/754, 10281/754)\approx (11.45,13.64)$.
We make a helper corner $h_2$ with an incident edge contained in the extension of $f_0f_1$, so that a piece covering $h_2$ can also cover a region above $f_0f_1$.
Pieces covering $c_i^3$, $c_j^2$, and $h_2$ can then together cover the bounding box of the restricted range $f'$ of $f_0f_1$.

Let $\pi_i:r_i^2\longrightarrow f_0f_1$ and $\pi_j: f_0f_1\longrightarrow r_j$ be the projections from $r_i^2$ to $f_0f_1$, and from $f_0f_1$ to $r_j$, respectively.
Consider a point $p\in r_i^2$, which can be written as $p=(5+x,3)$ for $x\in [1/2,2]$, so that $p$ represents the value $x$.
Let $\tilde p \mydef \pi_j(\pi_i(p))$ be the point on $r_j$ corresponding to $p$ on $r_i^2$.
By evaluating the projections, it is straightforward to check that $\tilde p =(16-1/x,0)$.
Hence, $\tilde p$ represents the value $1/x$ at $r_j$.
We now use a similar argument as in Observation~\ref{obs:ineq3}.
Note that a piece covering $c_i^3$ and the point $p$ on $r_i^2$ can cover only a part of $f_0f_1$ from $\pi_i(p)$ to $f_1$.
Hence, the piece $\piece_j$ covering $c_j^2$ must cover the point $\pi_i(p)$ on $f_0f_1$.
It then follows that the rightmost point that $\piece_j$ can cover of $r_j$ is $\pi_j(\pi_i(p))$.
Hence, $\val{r_j}{c_j^2}\geq 1/x$.

Consider now any cover.
We then have $\val{r_i^2}{c_i^3}=x$ and $\val{r_j}{c_j^2}\geq 1/x$, and hence $\val{r_i^2}{c_i^3}\val{r_j}{c_j^2}\geq 1$, as claimed.

The more complicated fractions appear in the coordinates of $\xi_i$ and $\xi_j$ in order to avoid corners with irrational coordinates.
When finding the construction, we made the restriction to coordinates of the forms $\xi_i=(9.5,\chi_i)$ and $\xi_j=(11.5,\chi_j)$.
For any value of $\chi_i$, there is a specific value of $\chi_j$ that makes the right correspondence between $r_i^2$ and $r_j$, but the solution will in general involve a square root.
However, for some exceptional values of $\chi_i$, the interior of the square root is the square of a rational number, and hence the resulting coordinate of $\xi_j$ is rational.
We used Maple to find such a value of $\chi_i$.

\subsection{The $\le$-inversion gadget}\label{sec:inversionGadget1}

\begin{figure}
\centering
\includegraphics[width=\textwidth]{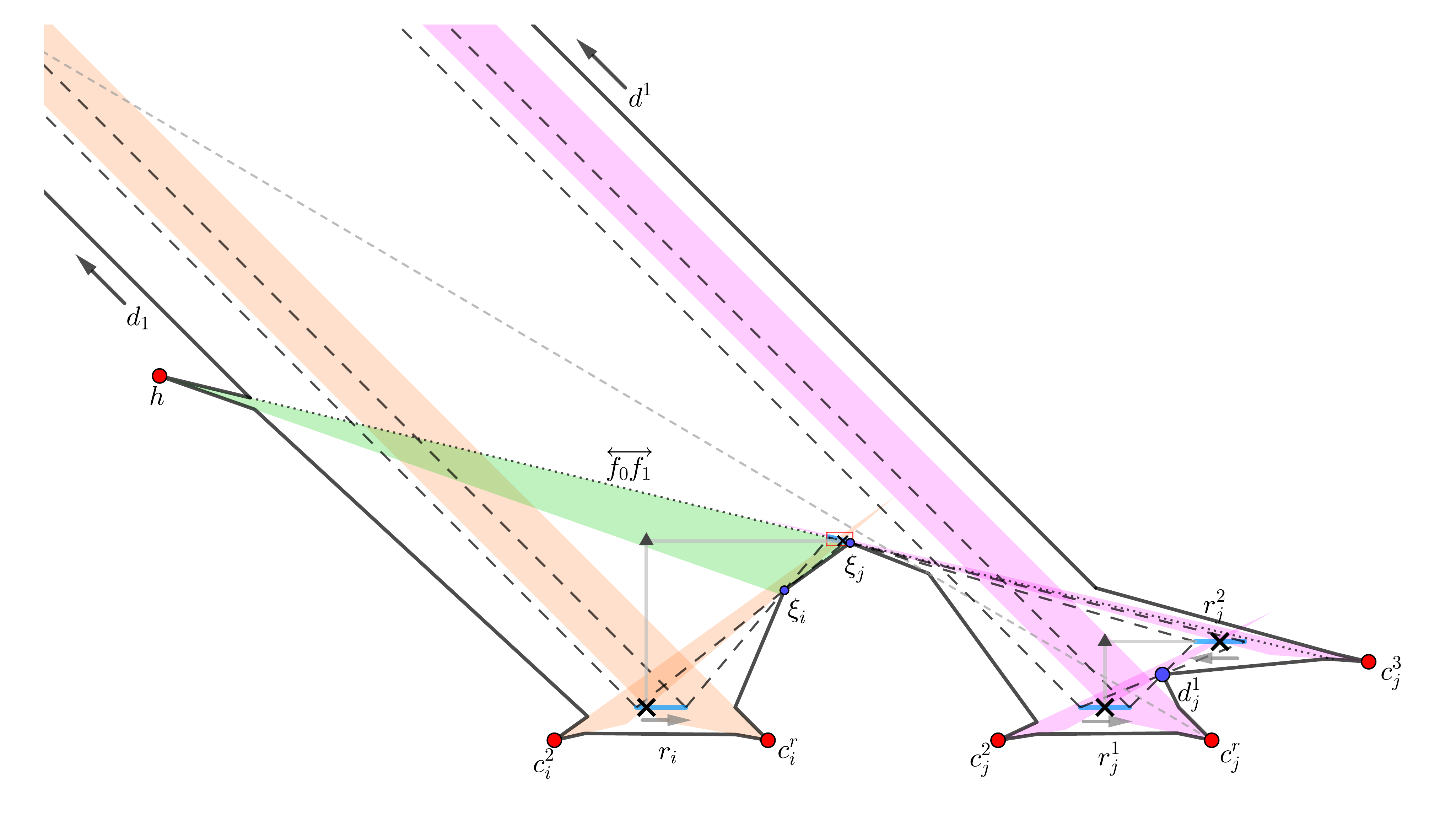}
\caption{The $\geq$-inversion gadget.
For a closeup of the region in the red rectangle, see Figure~\ref{fig:inversionGadgetCloseup}.
}
\label{fig:inversionGadget}
\end{figure}

\begin{figure}
\centering
\includegraphics[width=0.5\textwidth]{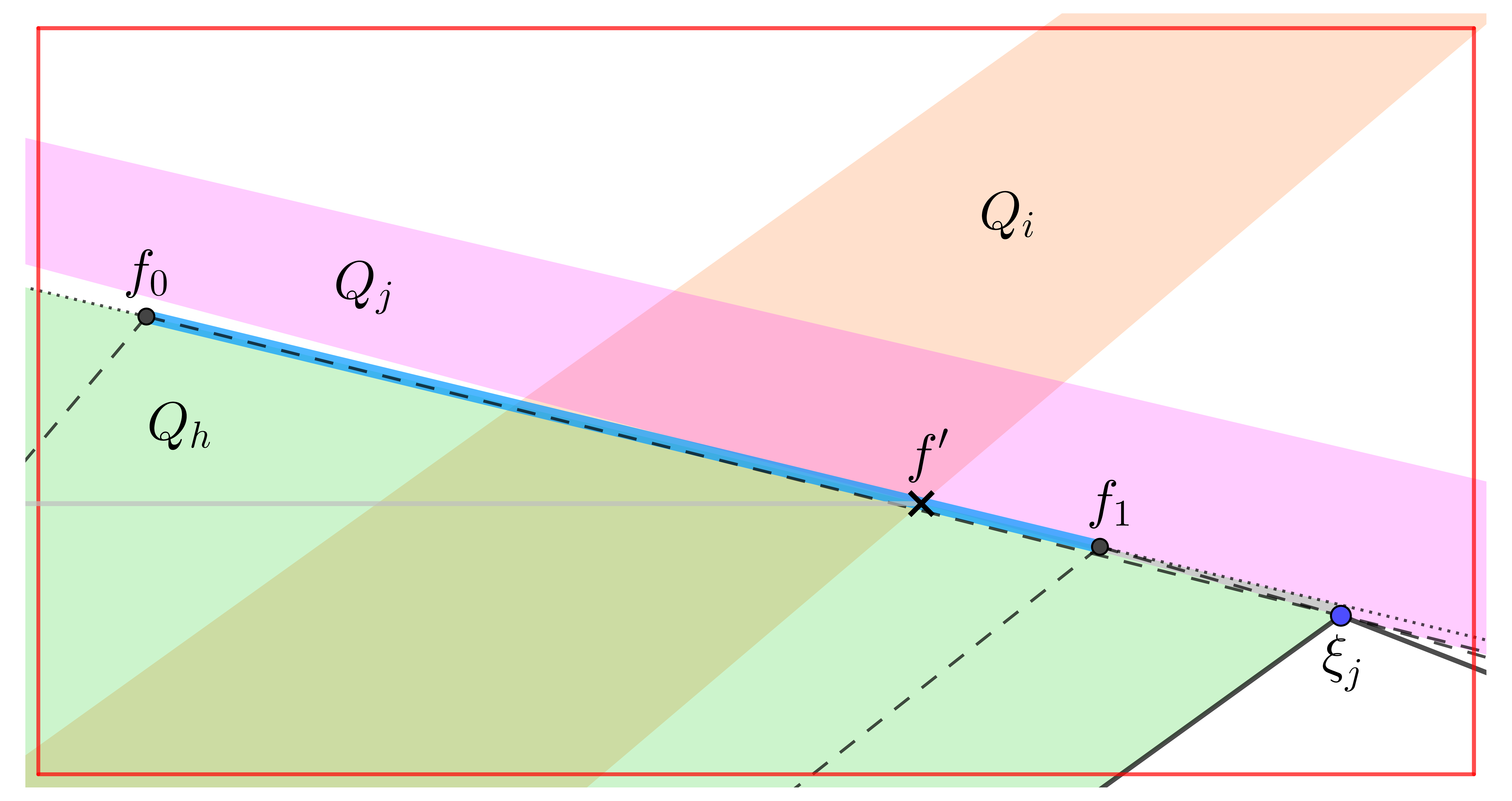}
\caption{Closeup from Figure~\ref{fig:inversionGadget}.
}
\label{fig:inversionGadgetCloseup}
\end{figure}

The $\leq$-inversion gadget follows a similar principle as the $\geq$-inversion gadget, but is in some ways reversed.
The gadget can be seen in Figure~\ref{fig:additionGadget2} with a closeup of an important but small region in Figure~\ref{fig:inversionGadgetCloseup}.
The segments $r_i$ and $r_j^1$ are connected to segments in the base pockets via the corridor.
By the use of lever mechanisms, we have
\begin{align*}
\val{r_i}{c_i^r} & \geq \val{r_i}{c_i^2},\quad\text{and} \\
\val{r_j^1}{c_j^r} & \leq \val{r_j^1}{c_j^2} \leq \val{r_j^2}{c_j^2} \leq \val{r_j^2}{c_j^3}.
\end{align*}

We again have a special critical segment $f_0f_1$ which $c_i^2$ and $c_j^3$ are responsible for.
The pivots of $c_i^2$ and $c_j^3$ are the corners $\xi_i$ and $\xi_j$, respectively.
The restricted ranges of $r_i$ and $r_j^2$ correspond to an infinitesimal subsegment $f'$ contained in $f_0f_1$, and we can think of $f'$ as a single point.
We have carefully chosen $\xi_i$ and $\xi_j$ such that if the pieces covering $c_i^2$ and $c_j^3$ together cover $f'$, then $\val{r_i}{c_i^2} \leq 1/\val{r_j^2}{c_j^3}$, or equivalently, $\val{r_i}{c_i^2}\val{r_j^2}{c_j^3} \leq 1$.

We now specify the coordinates that make this work.
We again leave out a factor of $1/CN^2$ in the coordinates for the ease of notation.
The variable segment $r_i$ has length $3/2$ and left endpoint $(0.5,0)$, and the segment $r_j^2$ also has length $3/2$ and left endpoint $(17.5,2)$.
We now define $\xi_i\mydef (5, 57/16)\approx (5, 3.56)$ and $\xi_j\mydef (7, 1041/208)\approx (7,5)$.
This defines a critical segment $f_0f_1$ with $f_0\mydef (235/37, 3059/592)\approx (6.35, 5.17)$ and $f_1\mydef (735/107, 25897/5136)\approx (6.87, 5.04)$.
We make a helper corner $h$ with an incident edge contained in the extension of $f_0f_1$, so that a piece covering $h$ can also cover a region below $f_0f_1$.
Pieces covering $c_i^2$, $c_j^3$, and $h$ can then together cover the bounding box of the restricted range $f'$ of $f_0f_1$.
A piece covering $h$ can likewise cover a marked rectangle that is introduced in order to satisfy the bar intersection promise.

As in the $\geq$-inequality gadget, it now follows that $\val{r_i^2}{c_i^3}\val{r_j}{c_j^2}\leq 1$.

\subsection{Putting it all together}\label{sec:putting}

Let us now summarize the entire construction.
We consider an instance $\II_1$ of \rangeetrinv\ with $n$ variables and a formula $\Phi$ consisting of addition and inversion constraints.
We rewrite $\Phi$ as a formula $\Phiineq$, which is a conjunction of inequalities among $3n$ variables $\Xineq\mydef\{x_1,\ldots,x_{3n}\}$, as described in Section~\ref{sec:bottomWall}.
We make a base pocket for each variable $x_i\in \Xineq$ and a variable segment in this base pocket for each occurrence of $x_i$ in an inequality in $\Phiineq$.
We then make a gadget for each inequality, as described in Sections~\ref{sec:Geinequality}--\ref{sec:inversionGadget1}.
The gadgets are connected to the base pockets by corridors, as described in Section~\ref{sec:copy}, which ensures that the values represented in the gadgets satisfy appropriate inequalities compared to those in the base pockets.

Suppose that the constructed instance $\II_2$ has a cover.
By the correctness of the corridor, Lemma~\ref{lem:corridor_works}, we have inequalities between the values in the base pockets and in the gadgets as intended.
As described in Section~\ref{sec:consistency}, the use of the $\ge$- and $\le$-inequality gadgets ensure that the variables in the base pockets are represented consistently.
By the correctness of the addition and inversion gadgets, Sections~\ref{sec:GEadditionGadget}--\ref{sec:inversionGadget1}, we then get that the values specified on the variable segments correspond to a solution to $\Phi$.

On the other hand, given a solution to $\Phi$, we get a corresponding cover for $\II_2$ using only maximal triangles in the following way.
For each variable segment $s$ and each lever corner $c$ responsible for $s$, we use a triangular piece with a corner at $c$ and two edges that are extensions of the edges of $\poly$ incident at $c$.
The last edge is chosen so that at $s$, the triangle represents the value specified by the solution to $\Phi$.
These triangles together cover all variable segments and also the remaining (non-horizontal) critical segments.
By adding triangles that cover the helper corners, we can make a cover for all marked rectangles.
It is also clear that the instance $\II_2$ can be constructed from $\II_1$ in polynomial time.
We have therefore proved Theorem~\ref{thm:mainthm} and in particular that \mrcc\ is $\ER$-complete.

In order to use Lemma~\ref{thm:mrcc-to-mcc} to prove that the original problem \mcc\ is also $\ER$-complete, we need the constructed instance $\II_2$ to satisfy the promise of \mrcc-instances, described in Section~\ref{sec:mrcc}.
It is straightforward to verify the skew triangle promise.
Regarding the trapezoid generality promise, it is sufficient that no two corners are visible and have the same $x$- or $y$-coordinate.
This is not true for the construction as described:
The corners next to the propagation corners in a base pocket are mutually visible and on a vertical line.
Likewise are the corners of the left doors of the corridors.
However, it is easy to see that all these corners can be slightly moved, say, by a distance of $\Theta(1/N^3)$, without affecting the correctness of the construction, and then the promise will be satisfied.

For the bar intersection promise, we note that a bar from marked rectangles in a base pocket cannot intersect bars from other base pockets or corridors or gadgets.
Likewise, bars in corridors cannot intersect bars from other corridors, base pockets or gadgets.
In the sections describing the individual components (base pockets, corridors, gadgets), we have been careful to satisfy the bar intersection promise within that component.
It is therefore also satisfied globally.
Finally, for the broad cover property, suppose that $\II_2$ has a cover $\cover$.
We then have a solution to $\Phi$, as described above.
This solution corresponds to a cover $\cover'$ for $\II_2$ using only triangles, and for each marked corner $c$, the triangle $T$ covering $c$ has edges that are extensions of the edges of $\poly$ incident at $c$.
Hence, $\Delta(c)\subset T$, as needed.

Since we have verified that the instance $\II_2$ can be assumed to satisfy the promise, we then get from Lemma~\ref{thm:mrcc-to-mcc} that \mcc\ is $\ER$-complete.
We now observe that if there exists a cover of the resulting instance $\II'_2$ of \mcc, then there also exists a cover consisting of triangles only:
As we have seen before, there exists a triangle cover for the instance $\II_2$ of \mrcc.
It is described in the proof of Lemma~\ref{thm:mrcc-to-mcc} how we can then get a cover for $\II_2'$ by covering the added spikes and remaining parts of the polygon with triangles.
Therefore, we also get $\ER$-completeness of \mtcf.
We have then proved Theorem~\ref{thm:mainThm}, repeated here. 

\mainThm*

We can now also prove Corollary~\ref{corr:corr}.

\corr*

\begin{proof}
Let $\Phi=\Phi(x)$ be the \etr-formula $x\cdot x\cdot x\cdot x\cdot x-x-x-x-x+1+1=0$, i.e., the equation $x^5-4x+2=0$ written up in the way required for an \etr-formula.
As mentioned, the equation is known to have a single real solution $x=y$ for a number $y\in\mathbb R$ which is not expressible by radicals~\cite{bastida_lyndon_1984}.
We reduce $\Phi$ to an instance $\II_1$ of \rangeetrinv\ using the reduction from~\cite{abrahamsen2020framework}, which can then be reduced to an instance $\langle \poly,k\rangle$ of \mcc\ using the reductions described in this paper.
By looking at the reduction from~\cite{abrahamsen2020framework}, one sees that one of the variables $x'$ in $\II_1$ corresponds to $x$ in the formula $\Phi$, in the sense that $x'=q_1x+q_2$ for two fixed rationals $q_1,q_2\in\mathbb Q$.
Hence, in all solutions to $\II_1$, $x'$ has a value $y'\in\mathbb R$ that is not expressible by radicals.
Consider a variable segment $s$ in $\poly$ that represents $x'$, and let $c$ be a lever corner responsible for $s$ and $\piece$ a piece covering $c$ in a cover for $\poly$.
We then have $\val sc=y'$.
Let $e_1e_2$ be the edge of $\piece$ that intersects $s$ and thus defines this value $y'$.
If the corners $e_1$ and $e_2$ of $\piece$ had coordinates expressible by radicals, then so would the intersection point of $e_1e_2$ with $s$, as the endpoints of $s$ are rational, and then the represented value would be expressible by radicals as well.
Therefore, $e_1$ or $e_2$ has a coordinate that is not expressible by radicals.
\end{proof}

\section{Concluding remarks}\label{sec:concluding}

As described, the polygon produced by our reduction has some triples of collinear corners.
We here outline how it can be changed to avoid this.
The proof of Lemma~\ref{thm:mrcc-to-mcc} describes the transformation from an instance $\II_2$ of \mrcc\ to an instance $\II'_2$ of \mcc\ by adding some spikes to the polygon.
The spikes can be covered by one triangle each, such that these triangles also cover everything outside the marked rectangles and the triangles $\Delta(c)$ for each marked corner $c$.
The spikes introduce some collinear points of two types:
(i) The way we originally drew the spikes in Figures~\ref{fig:spikes2} and~\ref{fig:spikesMove}, two different spikes may have edges contained in the same line, creating four collinear corners.
(ii) Additionally, some spikes may have a horizontal or vertical edge that also contains a corner of the original polygon.

In order to eliminate these collinear corners, we do as follows.
For each critical segment $s$, we consider the marked rectangle $R$ containing $s$.
We now make sure that $s$ is disjoint from the boundary of $R$.
If $s$ intersects the boundary $\partial R$, we make $R$ a tiny bit larger so that $s$ is contained in the interior; see Figure~\ref{fig:newhelper}.
If $s$ is a variable segment, then $s$ would originally be contained in the top or bottom edge of $R$, so when enlarging $R$, we also need to introduce a new helper corner that can cover the part of $R$ on one side of $s$, as the pieces covering the corners responsible for $s$ are only covering the other side of $s$.
With this modification, we again obtain that the instance with enlarged rectangles has a cover if and only if $\Phi$ has a solution.

\begin{figure}
\centering
\includegraphics[page=28]{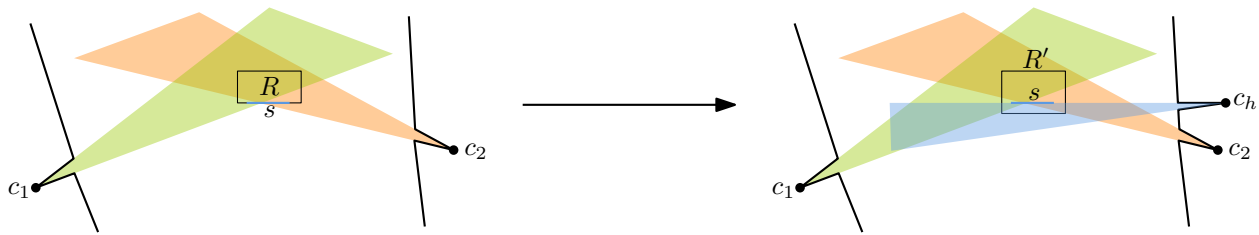}
\caption{Instead of the marked rectangle $R$ where $s$ is contained in the bottom segment, we use the slightly larger $R'$ where $s$ is in the interior.
We then add the helper spike with a corner $c_h$ that can cover the part of $R'$ below $s$.
}
\label{fig:newhelper}
\end{figure}

When reducing to \mcc\ and creating the instance $\II'_2$, we now enlarge the spikes a bit, as shown in Figure~\ref{fig:generalSpikes}, which eliminates both collinearities of types (i) and (ii).
We have kept the property that no piece in a cover for $\II_2'$ can cover two spikes, so there must be a piece for each spike in a cover for $\II_2'$.
It is now conceivable that a piece covering a spike can also cover a part of a marked rectangle, but by changing the spikes sufficiently little, the covered part of the rectangle is very close to the boundary, leaving the critical segment inside to be covered by the responsible marked corners.
Hence, the reduction still works as intended.

\begin{figure}
\centering
\includegraphics[page=27]{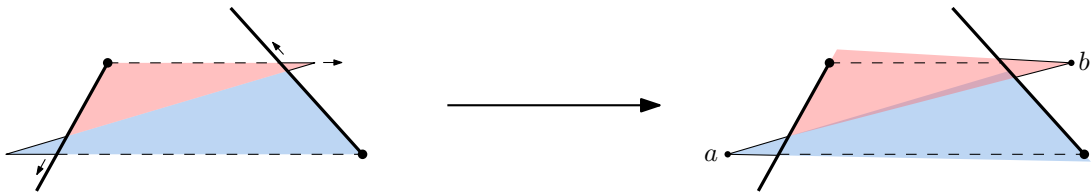}
\caption{Enlarging spikes to avoid collinear points.
Note that the resulting spike corners $a$ and $b$ do not see each other.
}
\label{fig:generalSpikes}
\end{figure}

An interesting question that remains open is to determine the complexity of covering a simple polygon with a minimum number of spiral polygons.

\section*{Acknowledgements}
I thank Aurélien Ooms for useful discussions in the initial phases of this work, and Joseph O'Rourke for sending me his paper~\cite{o1982complexity}.

\printbibliography

\end{document}